%
%
%
%
%

\documentclass[prodmode,acmec]{ec-acmsmall} 

\usepackage{amsmath}
\usepackage{amssymb}
\usepackage{balance}
\usepackage{graphicx, subfigure}
\usepackage{times,color}
\usepackage{ifthen}
\usepackage[ruled]{algorithm2e}

\SetAlFnt{\small}
\SetAlCapFnt{\small}
\SetAlCapNameFnt{\small}
\SetAlCapHSkip{0pt}
\IncMargin{-\parindent}

\acmVolume{X}
\acmNumber{X}
\acmArticle{X}
\acmYear{2015}
\acmMonth{2}
\usepackage[numbers]{natbib}

\newtheorem{claim}[theorem]{Claim}

\newtheorem{invariant}[theorem]{Invariant}

\newenvironment{proofof}[1]{{\sc Proof of #1:} }
{\qed\par\vskip 4mm\par}


\newcommand{\class}{\mathcal{C}}
\newcommand{\CompetitiveRatio}{\textrm{cr}}
\newcommand{\ffrac}[2]{#1/#2}
\newcommand{\IntegralityGap}{\textrm{IG}}

\newcommand{\OPT}{OPT}

\newcommand{\vd}{\rho}

%
%
\newcommand{\y}[3]{
  \ifthenelse { \equal{#2}{} }
     {y_{#1}(#3)}
     {
        \ifthenelse{ \equal{#3}{}}
            {y_{#1}^{#2}}
            {y_{#1}^{#2}(#3)}
     }
}

%
%
\newcommand{\JobInput}{\mathcal{J}}

\newcommand{\TimeInput}{\mathcal{T}}
\newcommand{\TimeFull}{\TimeInput^{F}}
\newcommand{\TimePartial}{\TimeInput^{P}}

\newcommand{\admitted}[1]{#1^{(a)}}

\newcommand{\virtual}[1]{#1^{(v)}}
\newcommand{\type}{\tau}

%
%
\newcommand{\Alg}{\mathcal{A}}

\newcommand{\gap}{\mu}
\newcommand{\thres}{\gamma}

%
%


\begin{document}

\markboth{Y. Azar et al.}{Truthful Online Scheduling with Commitments}

\title{Truthful Online Scheduling with Commitments}
\author{Yossi Azar
\affil{Blavatnik School of CS, Tel Aviv University, Tel Aviv, Israel}
Inna Kalp-Shaltiel
\affil{Blavatnik School of CS, Tel Aviv University, Tel Aviv, Israel}
Brendan Lucier
\affil{Microsoft Research, Cambridge, MA}
Ishai Menache
\affil{Microsoft Research, Redmond, WA}
Joseph (Seffi) Naor
\affil{CS Department, Technion, Haifa, Israel}
Jonathan Yaniv
\affil{CS Department, Technion, Haifa, Israel}
}


\begin{abstract}
We study online mechanisms for preemptive scheduling with deadlines, with the goal of maximizing the total value of completed jobs.  This problem is fundamental to deadline-aware cloud scheduling, but there are strong lower bounds even for the algorithmic problem without incentive constraints.  However, these lower bounds can be circumvented under the natural assumption of deadline slackness, i.e., that there is a guaranteed lower bound $s > 1$ on the ratio between a job's size and the time window in which it can be executed.

In this paper, we construct a truthful scheduling mechanism with a constant competitive ratio, given slackness $s > 1$.  Furthermore, we show that if 
$s$ is large enough then we can construct a mechanism that also satisfies a \emph{commitment} property: it can be determined whether or not a job will finish, and the requisite payment if so, well in advance of each job's deadline.
This is notable because, in practice, users with strict deadlines may find it unacceptable to discover only very close to their deadline that their job has been rejected.

\end{abstract}


%
%
%


\begin{bottomstuff}
This work is supported in part by the Technion-Microsoft Electronic Commerce Research Center, by the Israel Science Foundation (grant No. 1404/10) and by the Israeli Centers of Research Excellence (I-CORE) program (Center No. 4/11).
%
%
\end{bottomstuff}

\maketitle


\section{Introduction}


Modern computing applications, such as search engines and big-data processing, run
on large clusters operated by either first or third parties (a.k.a., private and public clouds, respectively).
Since end-users do not own the compute infrastructure, the use of cloud computation necessitates crisp contracts between them and the cloud provider on the service terms (i.e., Service Level Agreements - SLAs).
The problem of designing and implementing such contracts falls within the scope of online mechanism design, which concerns the design of mechanisms for allocating resources when agents arrive and depart over time, and the mechanism must make allocation decisions online.
A contract can be as simple as renting out a virtual machine for a certain price per hour.  However, with the increased variety of cloud-offered services come more performance-centric contracts, such as paying per number of transactions \cite{AzureMLPricing}, or a guarantee to finish executing a job by a certain deadline \cite{rayon,jockey}.

Since the underlying physical resources are often limited,
a cloud provider faces resource management challenges, such as deciding which service requests to accept in view of the required SLAs, and determining how best to schedule or allocate resources to the different users.
%
For instance, the provider may opt to delay time-insensitive tasks when usage peaks, or prevent admission of low-priority jobs if higher-priority jobs are expected to arrive.  To make these decisions in a principled manner, one wishes to design a mechanism for an online scheduling problem with deadlines, aimed at maximizing the total value of completed jobs.  This social welfare objective is particularly relevant in the private cloud setting. It is also relevant for markets with competition between cloud providers, where each provider wishes to extend its market share by increasing user satisfaction.
At a high level, the goal of this paper is to provide algorithmic foundations for scheduling jobs with different demands, values and deadlines, in a manner that is compatible with user incentives.

The problem can be abstracted as follows.  Each job request $j$ is associated with an arrival time $a_j$, a size (demand) $D_j$, a deadline $d_j$ and a value $v_j$.  There are $C$ identical machines that can process jobs. Each job uses at most a single machine at a time, and jobs can be preempted and resumed.  The goal is to maximize the total value of jobs completed by their deadlines. In a perfect world, a solution to this problem would achieve a good competitive ratio, would be incentive compatible, and would notify jobs whether or not they are completed as swiftly as possible.
Unfortunately, the basic online scheduling problem, without considering incentives or commitments, is inherently difficult even when $C=1$. From a worst-case perspective, there is a polylogarithmic lower bound on the competitive ratio of any randomized algorithm \cite{CI98}.
However, the known lower bounds only apply in the presence of jobs with tight deadlines (i.e., $d_j = a_j + D_j$). Recent work circumvented the lower bound by assuming \emph{deadline slackness}, where every job $j$ satisfies $d_j - a_j \geq s \cdot D_j$ for a slackness parameter $s > 1$ \cite{LMNY13}.
Our aim is to continue this line of inquiry and design incentive compatible scheduling mechanisms in the presence of deadline slackness.

\vspace{0.03in}
\paragraph{Truthfulness}
In our online scheduling context, the incentive compatibility requirement is multi-parameter: agents must be incentivized to report their tuple of job parameters $\langle v_j, D_j, a_j, d_j \rangle$. As is standard, we assume agents cannot deviate to an arrival time earlier than $a_j$, nor report a deadline later than $d_j$.
These assumptions are natural if one views the arrival time as the first time the customer is able to interact with the mechanism, and that job results are not released to a customer until the reported deadline.  Furthermore, we generally assume that a job holds no value to the customer unless it is fully completed. Hence, a user cannot benefit from underreporting the job demand.

\vspace{0.03in}
\paragraph{Commitments}
In addition to incentive compatibility, another important feature of a practical scheduling mechanism is commitment: whether, and when, a scheduler guarantees to complete a given job.  Traditionally, a preemptive scheduler is allowed to accept a job, process it partially, but then abandon it once its deadline has passed.  While this behavior may be justified in terms of pure optimization, in many real-life scenarios it is not acceptable, since users might be left empty-handed at their deadline.  In reality, users with business-critical jobs require an indication, well before their deadline, of whether their jobs can be processed.  Since sustaining deadlines is becoming a key requirement for modern computation clusters (e.g., \cite{rayon} and references therein), it is essential that schedulers provide some degree of commitment.  

The question is: at what point of time should the scheduler commit to jobs? One option is to require the scheduler to commit to jobs upon arrival. Namely, once a job arrives, the scheduler immediately decides whether it accepts the job (and then it is required to complete it) or reject the job. However, \cite{LMNY13} proved that for general values no scheduler can commit to jobs upon arrival while providing any performance guarantees, even assuming deadline slackness.
Therefore, a more plausible alternative from the user perspective is to allow the committed scheduler to delay the decision, but only up to some predetermined point.

\vspace{0.03in}
\begin{definition}
\label{def:betaResponsive}
A scheduling mechanism is called \emph{$\beta$-responsive} (for $\beta \geq 0$) if, for every job $j$, by time $d_j - \beta \cdot D_j$ it either (a) rejects the job, or (b) guarantees that the job will be completed by its deadline and specifies the required payment.
\end{definition}

\noindent
Note that $\beta$-responsiveness requires deadline slackness $s \geq \beta$ for feasibility. Schedulers that do not provide advance commitment are by default $0$-responsive; we often refer to them as being non-committed.
Useful levels of commitments are typically obtained when $\beta \ge 1$, as this provides rejected users an opportunity to execute their job elsewhere before their deadline.

One might consider different definitions for responsiveness in online scheduling. In a sense, the definition given here is additive: for each job $j$, the mechanism must make its decision $\beta D_j$ time units before the deadline. An alternative definition could be fractional: the decision must be made before some fraction of job execution window, e.g., $d_j - \omega (d_j - a_j)$ for $\omega \in (0,1)$. It turns out that many of our results\footnote{Specifically, all of the results stated in Section \ref{sec:Introduction_OurResults}, except for Theorem \ref{thm.general}.} also satisfy responsiveness under this alternative definition, as well as other useful properties\footnote{Such as the \emph{no-early-processing} property: the scheduler cannot begin to process a job without committing first to its completion. This implies that any job that begins processing is guaranteed to complete.
}.
We discuss this further in Section \ref{sec:Conclusions}.

\subsection{Our Results}
\label{sec:Introduction_OurResults}
We design the first truthful online mechanisms for preemptive scheduling with deadlines. Moreover, our mechanism can be made $\beta$-responsive as defined above.

\medskip

\noindent
\textbf{Main Theorem (informal):} For every $\beta \geq 0$, given sufficiently large slackness $s \geq s(\beta)$, there is a truthful, $\beta$-responsive, $O(1)$-competitive mechanism for online preemptive scheduling on $C$ identical servers.

\medskip

\noindent
The precise competitive ratio achieved by our mechanism depends on the level of input slackness. We establish the main result in two steps.  First, we build a mechanism that is truthful, but not committed. Second, we develop a reduction from the problem of scheduling with responsive commitment to the problem of scheduling without commitment.
Each of these two steps may be of interest in their own right. In particular, we obtain in the first step a truthful $O(1)$-competitive mechanism for online preemptive scheduling with deadlines.


\begin{theorem}
There is a truthful mechanism for online scheduling on multiple identical servers that obtains a competitive ratio of
$2 + \Theta\Big(\frac{1}{\sqrt[3]{s}-1}\Big) + \Theta\Big(\frac{1}{(\sqrt[3]{s}-1)^3}\Big)$
for any $s>1$.
\end{theorem}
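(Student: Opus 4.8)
The plan is to design a threshold-based admission mechanism and analyze it via a charging argument, following the template of algorithmic results for online scheduling with slackness, while ensuring monotonicity for truthfulness. First I would set up the mechanism: when a job $j$ arrives, compute its \emph{density} $v_j / D_j$, and admit $j$ only if its density exceeds a threshold that depends on the current committed load on the servers within $j$'s execution window $[a_j, d_j]$. Concretely, the mechanism maintains a virtual ``packing'' of admitted jobs; a new job is admitted if there is enough free capacity after we are willing to evict a set of previously-admitted lower-density jobs whose total value is a small constant fraction (governed by a parameter $\gap$, related to $\sqrt[3]{s}$) of $v_j$. Admitted jobs are then scheduled by a deadline-driven rule (e.g., a variant of EDF among admitted jobs), and the slackness $s>1$ is exactly what guarantees that every admitted job actually completes by its deadline despite preemptions — this is where the $\sqrt[3]{s}-1$ terms enter, since we need to reserve a $1/\sqrt[3]{s}$-ish slack buffer at each of three ``levels'' (admission, eviction cascades, and scheduling) to absorb the lost processing time.

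Next I would prove feasibility: every admitted job completes by its deadline. The argument is that the admitted jobs, restricted to any time interval, never overcommit the $C$ machines by more than the slack we reserved, so an EDF-type schedule on the admitted set meets all deadlines; the slackness assumption $d_j - a_j \ge s D_j$ gives each job a window large enough that the fraction of time stolen by higher-priority admitted jobs still leaves $D_j$ units available. Then I would bound the competitive ratio by a charging scheme: for each job $j^*$ in the optimal solution that our mechanism does not complete, charge $v_{j^*}$ to jobs that our mechanism did admit. If $j^*$ was never admitted, it was blocked at arrival by admitted jobs of comparable-or-higher density filling its window, so $v_{j^*}$ is charged proportionally to those blockers' values. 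If $j^*$ was admitted but later evicted, we charge it to the higher-density job whose admission triggered the eviction, and follow the eviction chain; the geometric growth in density along any eviction chain (each evictor has density at least $(1+\gap)$ times the evictee, say) makes the total charge received by any single admitted job a convergent geometric series, hence $O(1)$ times its own value. Summing, $\OPT \le (2 + O(1/(\sqrt[3]{s}-1)) + O(1/(\sqrt[3]{s}-1)^3)) \cdot \mathcal{A}$, where the leading $2$ comes from separately accounting for ``blocked'' versus ``evicted'' optimal jobs and the lower-order terms come from the buffer fractions; optimizing the free parameter $\gap$ against the slack reservation yields the stated $\sqrt[3]{s}$ dependence.

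Finally I would establish truthfulness. Since this is a multi-parameter setting (value, size, arrival, deadline), I would argue monotonicity directly: fixing other agents' reports, raising $v_j$, or shrinking $D_j$, or reporting an earlier $a_j$ or later $d_j$, can only help job $j$ get (and stay) admitted — the admission test is monotone in density and in window inclusion, and the eviction rule never evicts a job in favor of a lower-density one, so a job that would be completed under a truthful report is still completed under any such beneficial misreport only if... — more carefully, I would show the allocation rule is \emph{monotone} in the sense that if $j$ is completed with report $\langle v_j, D_j, a_j, d_j\rangle$ then it is completed with any report $\langle v_j', D_j, a_j, d_j\rangle$ with $v_j' \ge v_j$, and handle the non-value parameters via the standard assumptions stated in the paper (no early arrival, no late deadline, no underreporting demand since partial jobs are worthless). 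Monotonicity in value then yields a critical-value payment (charged only on completion, which also makes the payment computable by the responsiveness deadline), giving a truthful mechanism by the standard single-parameter characterization applied to the value coordinate.

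I expect the main obstacle to be the charging argument for evicted optimal jobs in the multiple-server ($C>1$) setting: bounding the total value evicted along cascading eviction chains while simultaneously keeping enough reserved slack that the surviving admitted jobs are all feasibly schedulable by EDF requires carefully coupling the eviction parameter $\gap$ with the three-level slack buffers, and it is the source of both the constant $2$ and the $\Theta(1/(\sqrt[3]{s}-1))$ and $\Theta(1/(\sqrt[3]{s}-1)^3)$ error terms — getting the cube (rather than a square or a different power) out of the optimization is the delicate quantitative heart of the proof.
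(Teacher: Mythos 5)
Your proposed mechanism is a genuinely different design from the paper's, and the sketch has gaps that prevent it from establishing the stated bound.

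The paper's truthful algorithm $\Alg_T$ is \emph{not} an admit/evict scheduler with an EDF feasibility invariant. It never maintains a ``feasible admitted set,'' never runs EDF, and never commits. It simply buckets jobs into value-density classes $\class_\ell = \{j : \vd_j \in [\thres^\ell, \thres^{\ell+1})\}$, greedily runs the highest-class pending job (so a running job can be preempted only by a strictly-higher-class job), allows preempted jobs to \emph{resume} later, and discards any job that has not begun by time $d_j - \gap D_j$. It may start a job and fail to finish it. The competitive ratio is then bounded by a dual-fitting argument against the LP from Section~\ref{sec:DualFitting}: the $\alpha$ variables absorb completed jobs (contributing $1$), the $\beta$ variables absorb the integral $\int \vd_{\Alg_T}(t)\,dt$ over all running time including partial work, and the $\thres, \gap$ terms arise from the preemption/discard structure. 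Optimizing $\thres = \sqrt{\gap}/(\sqrt{\gap}-1)$ with $\gap \approx s^{2/3}$ is exactly where the $\sqrt[3]{s}$ appears and where the tight constant $2$ comes from. Your three-level-slack heuristic explanation of the cube does not obviously yield this exponent, and you yourself flag the cube as ``the delicate quantitative heart.''

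Two more concrete concerns. First, your mechanism permanently loses work on evicted jobs; the paper's $\Alg_T$ keeps preempted jobs in $J^P(t)$ and can resume them, which is essential for the leading constant approaching $2$ rather than something larger. A charging argument over eviction cascades plus a separate EDF-feasibility slack reservation plausibly yields a constant, but nothing in your sketch forces that constant to be $2$; admit-then-evict schemes typically pay more. Second, and most importantly for truthfulness: the paper explicitly demonstrates (Appendix~\ref{sec:Appendix_NonCommitted_Truthful_SPAA2013}) that a threshold/density-based preemption rule in the style of \cite{LMNY13} — where a job preempts another based on a multiplicative density threshold — is \emph{not} monotone: a job can benefit by under-reporting its value so that a mid-density blocker protects it from higher-density competitors. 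The discrete class bucketing in $\Alg_T$ is specifically what repairs this, because it makes the schedule of higher classes independent of all lower-class jobs, which is the invariant the monotonicity proof leans on (Claims~\ref{thm:NonCommitted_Truthful_SingleServer_Claim2} and~\ref{thm:NonCommitted_Truthful_SingleServer_Claim4}). Your proposal uses sharp density ordering with a multiplicative eviction factor $\gap$, which is precisely the pattern the paper shows to be manipulable; your monotonicity argument never confronts this, and indeed trails off mid-sentence at the exact point where the issue arises. You also defer arrival-time monotonicity to ``standard assumptions,'' but the restriction is only $a_j' \ge a_j$ (late reports), which a density-threshold admission test can still be gamed by — the paper has a separate argument for this in the appendix.
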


Note that, as implied by known lower bounds, this competitive ratio grows without bound as $s \to 1$.  However, as $s$ grows large, the competitive ratio we achieve approaches $2$.  Our approach for this result is to begin with a greedy scheduling rule that prioritizes jobs by value density (value per size), then modify this scheduler so that (a) jobs are not allowed to begin executing too close to their deadlines, and (b) one job cannot preempt another unless its value density is sufficiently greater.  These modifications generate incentive issues that need to be addressed with some additional tweaking.  We then analyze the competitive ratio of this scheduler using dual fitting techniques, as described in Section \ref{sec:DualFitting}. This analysis appears in Section \ref{sec:NonCommitted_Truthfulness}.



For the second step, we provide a general reduction from committed scheduler design to non-committed scheduler design. We will describe reduction here for $\beta=\ffrac{s}{2}$. The idea behind the reduction is to employ simulation: each incoming job is slightly modified and submitted to a simulator for the first half of its execution window. The simulator uses the given non-committed scheduling to ``virtually" process jobs. If the simulation completes a job, then the algorithm commits to executing the job on the physical server. See Section \ref{sec:Committed} for more details. This reduction can be applied to any scheduling algorithm, not just the truthful scheduler described above. Specifically, applying our reduction to the (non-truthful) algorithm described in \cite{LMNY13} generates a (non-truthful) committed scheduler with a competitive ratio that approaches $5$ as $s$ grows large.


\begin{theorem}
\label{thm.alg.responsive}
There is a $(\ffrac{s}{2})$-responsive scheduler for online scheduling on multiple identical servers that obtains a competitive ratio of
$5 + \Theta\Big(\frac{1}{\sqrt[3]{s/4}-1}\Big) + \Theta\Big(\frac{1}{(\sqrt[3]{s/4}-1)^2}\Big)$
for any $s > 4$.
\end{theorem}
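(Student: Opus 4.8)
The plan is to derive Theorem~\ref{thm.alg.responsive} from the generic simulation-based reduction of Section~\ref{sec:Committed}, instantiated with commitment parameter $\beta = s/2$ and applied to a non-committed online scheduler whose competitive ratio tends to $2$ as the slackness grows large. A natural choice for the base scheduler is the (non-truthful) algorithm of~\cite{LMNY13}; since that algorithm is not incentive compatible, the scheduler produced here is not claimed to be truthful either, which is consistent with the statement (it asserts only responsiveness and the competitive bound).

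Recall how the reduction operates. Each arriving job $j$ is replaced by a modified job $\hat{j}$ with the same arrival time $a_j$, deadline moved forward to the midpoint $(a_j+d_j)/2$ of its window, and demand scaled up by a constant factor to leave slack for the post-commitment phase; $\hat{j}$ is then fed to a simulator running the base scheduler, which only ``virtually'' processes jobs. If and when the simulator completes $\hat{j}$, the reduction commits to $j$ (and fixes its payment) and from then on schedules $j$ for real within the remaining part of its window. Three points need to be checked. Responsiveness is immediate: since $d_j - a_j \ge s D_j$, the virtual deadline $(a_j+d_j)/2$ is at most $d_j - (s/2)D_j$, so the accept/reject-with-payment decision is always reached by time $d_j - (s/2)D_j$; hence the scheduler is $(s/2)$-responsive. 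Feasibility of commitments is the heart of the argument: whenever the simulator completes $\hat{j}$ (which therefore happens by time $d_j - (s/2)D_j$), one must show that the real job $j$ can still be finished by $d_j$ on the $C$ machines, and this is established by a replay/shift argument that converts the simulator's virtual schedule of the committed jobs into a genuine feasible schedule. This is also where the effective slackness drops from $s$ to $s/4$: roughly half of each window is kept in reserve for the post-commitment phase, and a further factor of two of slack is consumed to make the replay feasible on $C$ machines, so the simulator effectively faces slackness $s/4$.

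For the competitive ratio, the value collected by the committed scheduler equals the value the simulator completes. Since the simulator runs a $c_{\mathrm{L}}(s/4)$-competitive algorithm (its instance has slackness at least $s/4$, which forces $s > 4$), a charging argument against the offline optimum of the original instance --- accounting both for the base algorithm's guarantee and for the value lost to window-shrinking and to the commitment constraint --- bounds the committed scheduler's competitive ratio by, roughly, $2\,c_{\mathrm{L}}(s/4) + 1$. Plugging in the known bound $c_{\mathrm{L}}(s') = 2 + \Theta\Big(\frac{1}{\sqrt[3]{s'}-1}\Big) + \Theta\Big(\frac{1}{(\sqrt[3]{s'}-1)^2}\Big)$ for $s' > 1$ with $s' = s/4$ turns the leading term into $5$ and leaves the lower-order corrections in place, yielding $5 + \Theta\Big(\frac{1}{\sqrt[3]{s/4}-1}\Big) + \Theta\Big(\frac{1}{(\sqrt[3]{s/4}-1)^2}\Big)$; the side condition $s > 4$ in the statement is exactly the requirement that the simulator's slackness $s/4$ exceed $1$.

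I expect the main obstacle to be the feasibility argument of the second step: showing that the simulator's virtual schedule of the committed, midpoint-deadline jobs can actually be realized on the physical machines after each commitment, and quantifying precisely how much slack the replay consumes so as to justify the passage from $s$ to $s/4$. A secondary difficulty is the bookkeeping in the third step needed to land on the constant $5$ rather than a larger constant: one must charge every job that belongs to the offline optimum but is not completed by the committed scheduler, at bounded multiplicity, against jobs the scheduler does complete, without leaking extra constant factors into the bound.
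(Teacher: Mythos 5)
Your high-level plan matches the paper's: instantiate the simulation-based reduction of Section~\ref{sec:Committed} with $\omega = 1/2$ (so the virtual deadline is the window's midpoint and the virtual demand is $2D_j$), run the non-truthful algorithm of~\cite{LMNY13} on the simulator, and replay accepted jobs with EDF on the real machines. Your responsiveness argument is correct, and your informal explanation for why the simulator sees slackness $s/4$ is essentially right (virtual window $(d_j-a_j)/2$ divided by virtual demand $2D_j$). However, there are two substantive gaps.

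First, for multiple servers the reduction you describe only goes through if the base scheduler is \emph{non-migratory} (Corollary~\ref{thm:Committed_MultipleServer_NonMigration_CompetitiveRatio}). This is what lets Theorem~\ref{thm:Committed_SingleServer_Correctness} be applied per-server to certify feasibility of the committed jobs under EDF. If you instead plugged in a migratory base scheduler, the reduction (Theorem~\ref{thm:Committed_MultipleServer_Migration_CompetitiveRatio}) would require inflating virtual demands by an additional factor of $2(3+2\sqrt 2) \approx 11.656$, and the resulting constant would be far from $5$. The paper explicitly invokes the non-migratory multi-server variant of~\cite{LMNY13}; you don't single this out, and your ``replay/shift'' sketch doesn't explain how migration on the simulator side would be handled.

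Second, and more importantly, your competitive-ratio argument is not actually an argument. The generic bound from Corollary~\ref{thm:Committed_MultipleServer_NonMigration_CompetitiveRatio} is $\CompetitiveRatio_{\Alg_C}(s) \le \frac{1}{\omega(1-\omega)} \CompetitiveRatio_\Alg(s\,\omega(1-\omega)) = 4\,c_L(s/4) \to 8$, not $5$. Getting $5$ requires opening up the dual-fitting analysis of~\cite{LMNY13}: the dual cost splits into an $\alpha$-part and a $\beta$-part, each contributing $1 + o(1)$ to the $c_L$ bound, and the crucial observation (Appendix~\ref{sec:Appendix_Theorem_Statements}) is that the Resizing and Stretching Lemmas used in the reduction multiply only the $\beta$-part, not $\alpha$. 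Hence the final dual cost is $[1+o(1)] + 4\,[1+o(1)] = 5 + o(1)$. Your claimed ``$2\,c_L(s/4) + 1$'' is asserted via an unspecified ``charging argument'' and has no derivation; it coincidentally gives $5$ in the limit because the two dual components each tend to $1$, but the structural reason for the improvement --- the asymmetric blowup of the $\alpha$ and $\beta$ terms --- is absent. You correctly flag the constant-$5$ bookkeeping as a potential difficulty, but without the $\alpha/\beta$ decomposition that difficulty is not resolved, and a reader following your plan would land on the weaker constant $8$.
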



To obtain both truthfulness and responsiveness, we wish to compose our reduction with the truthful non-committed mechanism described above.  One challenge is that our basic reduction preserves truthfulness with respect to all parameters \emph{except} arrival time. We can therefore immediately obtain a constant competitive-ratio scheduling mechanism which is $(\ffrac{s}{2})$-responsive, given sufficient slackness; and truthful, given that jobs do not purposely delay their arrivals. For the single server case, we obtain the same asymptotic bound as in Theorem \ref{thm.alg.responsive} for $s>4$; see Section \ref{sec:Committed_Truthful}.


To yield our most general result, we explicitly construct a scheduling mechanism that obtains full truthfulness based on the truthful non-committed scheduler and a general reduction from committed scheduling to non-committed scheduling. The construction is rather technical and significantly increases the competitive ratio.
We obtain the following result, with constants $s_0 = 12$ and $c_0 = 9$ for the single-server case, and $s_0 = 139.872$ and $c_0 = 94.248$ for the case of multiple identical servers.


\begin{theorem}
\label{thm.general}
There exist constants $c_0$ and $s_0$ such that there is a truthful, $(\ffrac{2s}{s_0})$-responsive mechanism for online scheduling on multiple identical servers that obtains a competitive ratio of
$c_0 + \Theta\Big(\frac{1}{\sqrt[3]{\ffrac{s}{s_0}}-1}\Big) + \Theta\Big(\frac{1}{(\sqrt[3]{\ffrac{s}{s_0}}-1)^3}\Big)$
for any $s > s_0$.
\end{theorem}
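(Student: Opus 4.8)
The plan is to combine the two building blocks announced in the paper---the truthful non-committed mechanism of the first theorem and the simulation-based reduction underlying Theorem~\ref{thm.alg.responsive}---but to repair the one defect of the naive composition, namely that the reduction does not protect against jobs that delay their reported arrival time. I would proceed as follows. First, fix the truthful, $O(1)$-competitive non-committed mechanism $\Alg$ from the first theorem as the engine to be simulated. Second, rather than feeding each job $j$ directly into the simulator at its reported arrival $a_j$, I would pad each job: replace $j$ by a virtual job $\virtual{j}$ whose arrival is pushed \emph{forward} by a controlled amount and whose deadline window is correspondingly shrunk, so that the simulated execution lives entirely in an initial sub-window $[a_j, a_j + \Theta(D_j)]$ of the true window. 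The key point of this padding is that the benefit a job could gain by lying about $a_j$ is now bounded: delaying arrival only shifts the (shrunken) simulation window, and because the simulated scheduler is greedy-by-density and respects the no-early-processing discipline, a later-arriving copy can never be completed ``more easily'' than the truthful copy would have been. This is what converts ``truthful except for arrival time'' into full truthfulness.

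Third, I would verify the three required properties in turn. \emph{Responsiveness:} by construction the simulator finishes (or gives up on) $\virtual{j}$ by time $a_j + \Theta(D_j)$, which, after choosing the padding constant, is at most $d_j - (\ffrac{2s}{s_0}) D_j$; committing on the physical servers to exactly those jobs the simulator completes then gives $(\ffrac{2s}{s_0})$-responsiveness, and the physical commitment is feasible because the residual window $[a_j + \Theta(D_j), d_j]$ still has slackness exceeding $1$ once $s > s_0$. \emph{Truthfulness:} report manipulations in $v_j, D_j, d_j$ are handled exactly as in the non-committed mechanism (the reduction preserves these, as stated in the excerpt), and the arrival-time manipulation is killed by the padding argument of the previous paragraph, together with a monotonicity/critical-payment argument showing the induced allocation rule is monotone in value density and the payments are the usual threshold payments. \emph{Competitive ratio:} here I would run the dual-fitting charging scheme of Section~\ref{sec:DualFitting}. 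The simulated instance has effective slackness $\Theta(s/s_0)$ rather than $s$ because of the padding, which is exactly why the bound degrades to $c_0 + \Theta(1/(\sqrt[3]{s/s_0}-1)) + \Theta(1/(\sqrt[3]{s/s_0}-1)^3)$; the additive constant $c_0$ accumulates the loss from the non-committed mechanism's base constant, a factor from the simulation-to-physical reduction (as in Theorem~\ref{thm.alg.responsive}), and a further constant from the arrival-padding step, and one then checks that $s_0 = 12$ (single server) and $s_0 = 139.872$ (multiple servers) suffice for all the inequalities invoked along the way, with $c_0 = 9$ and $c_0 = 94.248$ respectively.

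The main obstacle, I expect, is reconciling the padding with truthfulness \emph{and} a bounded competitive ratio simultaneously. Pushing the arrival forward enough to guarantee responsiveness and to neutralize arrival-time lies tends to eat into the slackness available to the simulator, and if the padding is state-dependent (depends on other jobs or on the schedule) it can reintroduce exactly the manipulation it was meant to prevent; so the padding must be a fixed function of $j$'s own reported parameters only. Getting a single padding rule that (i) is self-contained per job, (ii) leaves the simulator enough slackness for the non-committed analysis to apply, (iii) leaves the physical residual window feasible for commitment, and (iv) makes the arrival-monotonicity argument go through, is the delicate part---and it is presumably the reason the construction is ``rather technical'' and the constants $s_0, c_0$ come out as large as they do. The multi-server case adds a further layer, since the simulator's virtual allocation across $C$ machines must be translatable into a feasible physical commitment on $C$ machines without violating the single-machine-per-job constraint; I would handle this exactly as in the proof of Theorem~\ref{thm.alg.responsive}, which already carries out such a translation, and absorb the resulting loss into $c_0$.
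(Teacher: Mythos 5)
There is a genuine gap in your argument, and it sits exactly at the point you yourself flag as ``the delicate part.'' Your scheme runs a single simulation in the window $[a_j, a_j + \Theta(D_j)]$, and your claim that ``delaying arrival only shifts the (shrunken) simulation window'' but ``a later-arriving copy can never be completed more easily'' is not correct. This is precisely the failure mode exhibited in Section~\ref{sec:Committed_Truthful_PublicArrivalTimes}: by declaring a later arrival, a job slides its simulation window past a congested stretch and into a quieter one, and the greedy-by-density simulator now completes it where it would previously have been blocked. Monotonicity of the underlying non-committed scheduler in arrival time says a later-arriving job fares no better \emph{when competing over the same absolute horizon}; it says nothing about a fixed-length window whose \emph{position} moves with the report. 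Since the window position is a function of the reported $a_j$, a delay can strictly improve the outcome, so the padding by itself does not deliver arrival-time monotonicity.

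The paper's actual proof (Appendix~\ref{sec:Appendix_Committed_Truthful}) resolves this in a structurally different way that is worth contrasting with your proposal: it replaces a single report-anchored window with a family of simulation intervals ${\cal C}_j$ whose endpoints are \emph{aligned to an absolute dyadic grid} (powers of two), runs a phantom simulation in each, and admits the job if any succeeds. Because the grid is fixed in absolute time, shrinking the reported window $[a_j, d_j]$ (later arrival, earlier deadline) or inflating $D_j$ (which raises $k_j$ and hence the minimum block length) can only \emph{remove} aligned intervals, never reposition them into fresher territory; this yields the clean containment property that every $I' \in {\cal C}'_j$ sits inside some $I \in {\cal C}_j$, and monotonicity of the whole mechanism follows from monotonicity of $\Alg_T$ on each phantom. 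The competitive ratio is then recovered by the resizing lemma (factor $2$ for phantom demand) and the stretching lemma with $f = 4$ (since Proposition~\ref{prop.aligned.large} shows the aligned intervals cover at least a quarter of the true window), plus the $11.656$ speedup factor for the multi-server migratory-to-non-migratory translation, which is where $s_0 = 12$ or $139.872$ and $c_0 = 9$ or $94.248$ come from. The absolute-grid alignment is the missing idea in your proposal; without it, the padding trades away slackness without actually neutralizing the arrival-time manipulation.
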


\subsection{Related Work}
\label{subsec:literature}

%
%
Online preemptive scheduling models have been widely studied in the scheduling theory for various objectives, with value maximization results being of most relevance to our work.
%
%
Canetti and Irani \cite{CI98} consider the case of tight deadlines, obtaining a deterministic  lower bound of $\kappa$ and a randomized $\Omega\big(\sqrt{\ffrac{\log \kappa}{\log\log\kappa}}\big)$ lower bound, where $\kappa$ is the max-min ratio between either job values or job demands. Several upper bounds have been constructed \cite{KS92,KS94,CI98,Porter04}, with the best being a randomized $O(\log\kappa)$ algorithm.
%
%
In \cite{LMNY13}, we show that by incorporating a deadline slackness constraint, a non-committed online preemptive scheduler for the general value model exists, and prove a bound\footnote{The bound presented by \cite{LMNY13} can be generalized to this form.} of
$2 +
\Theta\big(\frac{1}{\sqrt[3]{s} - 1}\big) + \Theta\big(\frac{1}{(\sqrt[3]{s} - 1)^2}\big)
$ on its competitive ratio, which is constant for every $s>1$. However, \cite{LMNY13} do not provide any algorithmic guarantees for committed scheduling models.
%
%
Other constant competitive schedulers have  been known only for special cases. When all demands are identical, a $5$-competitive scheduler exists, which can be improved to $2$ assuming a discrete timeline \cite{HKMP05}. Another studied model is where the value of each job equals its demand; this model is known as the busy time maximization problem \cite{DP00,GNYZ02,BCKMS99} . These works can be combined to obtain a $1$-responsive algorithm with a competitive ratio of $\min\{5.83, 1 + \ffrac{1}{s}\}$; however, the algorithm cannot be extended to incorporate general values.

Much less is known about \emph{truthful} online scheduling mechanisms. Previous works (e.g., \cite{LaviSwamy07,ArcherTardos01}) focus mostly on offline settings with makespan as main objective. \cite{JMNY11,JMNY12} design incentive compatible algorithms for jobs with deadlines, but restrict attention to the offline setting.  Works on online truthful scheduling have largely focused on achieving the (non-constant) bounds from the algorithmic literature \cite{Porter04,HKMP05}. Finally, \cite{LMNY13} proposes a heuristic that is incentive compatible and $1$-responsive, but no formal bounds are provided for the competitive ratio of that heuristic.

\section{Preliminaries} \label{sec:preliminaries}
In this section we present the scheduling model and necessary definitions (Sections \ref{sec:Preliminaries_Model} and \ref{sec:Preliminaries_Trutfulness}). We then provide a brief overview of the dual fitting technique, which is used to analyze the proposed mechanisms (Section \ref{sec:DualFitting}).

\subsection{Scheduling Model}
\label{sec:Preliminaries_Model}

We consider a system consisting of $C$ identical servers, which are always available throughout time. The scheduler receives job requests over time. Denote by $\JobInput$ the set of all job requests received by the scheduler.
Each job request $j\in\JobInput$ is associated with a \emph{type} $\type_j = \left\langle v_{j},D_{j},a_{j},d_{j}\right\rangle$.
The type of each job $j$ consists of the job value $v_j$, the job resource demand (size) $D_j$, the arrival time $a_j$ and the deadline $d_j$. Write $T$ as the space of possible types. We denote by $\vd_j = \ffrac{v_j}{D_j}$ the value-density of job $j$.
The job requests in $\JobInput$ are revealed to the scheduler only upon arrival. The scheduler can allocate resources to jobs, provided that at any point each job is processed on at most one server and each server is processing at most one job. Preemption is allowed. Specifically, jobs may be paused and resumed from the point they were preempted. If a job is allocated to servers for a total time of $D_j$ during the interval $[a_j,d_j]$, then it is completed by the scheduler.

An instance of the scheduling problem is represented by a type profile $\type = \{\type_j : j\in\JobInput\}$.
Given a scheduling algorithm $\Alg$, denote by $\Alg(\type)$ the jobs that are fully completed by $\Alg$ on an instance $\type$, and by $v(\Alg(\type))$ their aggregate value. The goal of the scheduler is to maximize $v(\Alg(\type))$.
Let $\OPT$ denote the optimal offline algorithm. The quality of an online scheduler is measured by its competitive ratio, which is the worst case ratio between the optimal offline value and the value gained by the algorithm.
In this paper, we define the competitive ratio as a function of the input \emph{slackness}, defined $s \triangleq s(\type) = \min\big\{\frac{d_j - a_j}{D_j} \mid \type_j = \left\langle v_{j},D_{j},a_{j},d_{j}\right\rangle \in \type \big\}$.
The competitive ratio of an online algorithm $\Alg$ on inputs with slackness $s$, denoted $\CompetitiveRatio_\Alg(s)$, is given by:
\begin{equation}
\CompetitiveRatio_\Alg(s) = \max_{\type : s(\type)=s} \left\{ \frac{v(\OPT(\type))}{v(\Alg(\type))} \right\} \,\, \in \,\, [1,\infty).
\end{equation}

The following definitions refer to the execution of an online allocation algorithm $\Alg$ over an instance $\type$. We drop $\Alg$ and $\type$ from notation when they are clear from context. Time is represented by a continuous variable $t$.
For a scheduling algorithm $\Alg$, denote by $j_{\Alg}^i(t)$ the job running on server $i$ at time $t$ and by $\vd_{\Alg}^i(t)$ its value-density. We use $\y{j}{i}{t}$ as a binary\footnote{In Section \ref{sec:DualFitting} we extend the range of values $y_j^i(t)$ may receive. However, we will always treat it as an allocation indicator.}
variable indicating whether job $j$ is running on server $i$ at time $t$, i.e., whether $j=j_{\Alg}^i(t)$ or not. We often refer to the function $y_j^i$ as the \emph{allocation} of job $j$ on server $i$, and to $y_j$ as the \emph{allocation} of job $j$.

\subsection{Mechanisms and Incentives} \label{sec:Preliminaries_Trutfulness}

Each job in $\JobInput$ is owned by a rational agent (i.e., user), who submits it to the scheduling mechanism.
We will be studying direct revelation mechanisms, where each user participates by announcing its type $\type_j = \langle v_j, D_j, a_j, d_j \rangle$ from the space $T$ of possible types. A mechanism then consists of an allocation rule $\Alg : T^\JobInput \to \{0,1\}^\JobInput$ and a payment rule $p : T^\JobInput \to \mathbb{R}^\JobInput$. Writing $\Alg(\type)$ as the profile of allocations returned by the mechanism given type profile $\type$, we interpret $\Alg_j(\type)$ as an indicator for whether the job of customer $j$ is fully completed by its deadline. In general mechanisms can be randomized, in which case we can interpret $\Alg_j(\type) \in [0,1]$ as the expected allocation of agent $j$. However, all of the mechanisms we consider in this paper are deterministic. We will restrict our attention to online mechanisms, which are constrained to make scheduling decisions at each point in time without knowledge of jobs that arrive at future times. Agents have quasilinear utilities: given allocations $x$ and payments $p$, the utility of user $j$ is given by $u_j(\type) = v_j \Alg_j(\type) - p_j(\type)$.  

We adopt a model in which we only allow late reports of arrivals, early reports of deadlines, and increased reports of job lengths.  As discussed in the introduction, this assumption is justifiable in the context of allocating cloud resources.  We say a mechanism is \emph{truthful} if, subject to these restrictions on type reports, each user $j$ maximizes expected utility by reporting his true type to the mechanism, for any possible declarations of the other agents.

We will make heavy use of a characterization of truthfulness made by \cite{HKMP05}.  We say that a type $\type_j = \langle v_j, D_j, a_j, d_j \rangle$ \emph{dominates} $\type'_j = \langle v_j', D_j', a_j', d_j' \rangle$ if $v_j \geq v'_j$, $D_j \leq D'_j$, $a_j \geq a'_j$, and $d_j \leq d'_j$.  We then say that an algorithm $\Alg$ is \emph{monotone} if for any type profile $\mathbf{\tau}$, any $j$, and any $\type_j'$ that dominates $\type_j$, we have that $\Alg_j(\type_j, \mathbf{\type}_{-j}) \leq \Alg_j(\type'_j, \mathbf{\type}_{-j})$.  For deterministic algorithms, this means that if job $j$ is allocated under input profile $\mathbf{\type}$, then it will also be allocated if customer $i$'s report changes from $\type_j$ to a type that dominates $\type_j$.

\begin{theorem}[\cite{HKMP05}]
\label{thm:truth}
Given an allocation algorithm $\Alg$, there exists a payment rule $p$ such that mechanism $(\Alg, p)$ is truthful if and only if $\Alg$ is monotone.
\end{theorem}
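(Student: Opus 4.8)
The plan is to establish this as a Myerson-style characterization, taking advantage of the fact that, although each agent's type has four coordinates, only the value $v_j$ enters the utility $u_j(\type) = v_j\Alg_j(\type) - p_j(\type)$ directly, and it does so linearly; the demand, arrival time and deadline influence an agent's utility solely through whether, and how, the job is scheduled. Consequently the setting behaves like a single-parameter problem in $v_j$, augmented with a monotonicity requirement in the remaining three coordinates.

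\emph{The ``if'' direction (monotone $\Rightarrow$ a truthful payment rule exists).} Fix the reports $\type_{-j}$ of the other agents together with a reported triple $(D_j,a_j,d_j)$. Monotonicity of $\Alg$, specialized to the value coordinate (a larger value yields a dominating type), says that $\Alg_j$ viewed as a function of $v_j$ alone is nondecreasing; being $\{0,1\}$-valued, it is a threshold function, and I let $v^{\ast}(D_j,a_j,d_j,\type_{-j})$ be the critical value above which $j$ is completed. I would then define $p_j$ to equal this critical value whenever $j$ is completed and $0$ otherwise, fixing once and for all a tie-breaking convention at $v_j = v^{\ast}$. The classical argument then deters value misreports: since $v^{\ast}$ is independent of the reported value, truthfully reporting $v_j$ maximizes $\max\{0,\, v_j - v^{\ast}\}$, and the payment never exceeds $v_j$ when allocated, so individual rationality holds as well. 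To deter misreports in $(D_j,a_j,d_j)$, the key sub-lemma I would prove is that $v^{\ast}$ is monotone with respect to the domination order on these coordinates --- if one triple dominates another then its critical value is no larger --- which is immediate from monotonicity of $\Alg$. Since the admissible misreports (later arrival, earlier deadline, larger demand) are exactly the types whose non-value coordinates are dominated by the true ones, any such misreport can only raise the threshold and hence weakly decrease $\max\{0,\, v_j - v^{\ast}\}$; combined with the value argument, truthtelling is optimal.

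\emph{The ``only if'' direction (truthful payment exists $\Rightarrow$ monotone).} Fix $j$ and $\type_{-j}$. Comparing two reports differing only in $v_j$, writing the incentive constraint in both directions and adding them yields $(v_j - v_j')\big(\Alg_j(v_j,\cdot) - \Alg_j(v_j',\cdot)\big)\ge 0$, so $\Alg_j$ is monotone in the value coordinate (and $p_j$ is forced into critical-value form on the allocated region). For the remaining coordinates, fix $v_j$ and take two triples, one dominating the other; since reporting the dominated triple is an admissible deviation for an agent holding the dominating one, truthfulness forces the allocation under the dominating triple to be at least that under the dominated triple. Putting the two pieces together gives monotonicity with respect to the full domination order.

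The main obstacle I anticipate is not conceptual but lies in the bookkeeping around the restricted deviation space: one must check that the single critical-value payment, calibrated only to the reported non-value coordinates, simultaneously deters deviations in all four coordinates, and this hinges on the observation that an agent's true type dominates precisely the set of reports it is permitted to submit --- so the ``tempting'' deviations, such as claiming an earlier arrival or a later deadline, are excluded by the model rather than by the payments. Some care is also needed at the boundary $v_j = v^{\ast}$, which is handled by using the tie-breaking convention consistently in the definitions of both $v^{\ast}$ and $p_j$.
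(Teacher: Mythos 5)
The paper does not prove this statement; it cites \cite{HKMP05} and uses the result as a black box, so there is no internal proof to compare against. Judging the proposal on its own terms: your ``if'' direction is correct and complete. Defining $p_j$ to be the critical value $v^*(D_j,a_j,d_j,\type_{-j})$ when the job completes and $0$ otherwise, and proving the sub-lemma that $v^*$ is antitone in the domination order of $(D_j,a_j,d_j)$, is exactly the right plan; the final observation that the achievable utility under any admissible report $(v_j',\xi')$ equals $\max\{0,\, v_j - v^*(\xi')\}\le \max\{0,\, v_j - v^*(\xi)\}$ closes the argument, and the tie-breaking caveat is the right thing to flag.

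The ``only if'' direction has a real gap. You assert that ``truthfulness forces the allocation under the dominating triple to be at least that under the dominated triple,'' but a single incentive constraint does not yield this. If $\xi$ dominates $\xi'$ (same value $v_j$) and $\Alg_j(v_j,\xi)=0$ while $\Alg_j(v_j,\xi')=1$, the IC inequality for an agent of type $(v_j,\xi)$ reporting $(v_j,\xi')$ reads $-p_j(v_j,\xi) \ge v_j - p_j(v_j,\xi')$, which constrains the \emph{payments}, not the allocation. Indeed, without normalizing what a rejected job pays, one can absorb this by charging a large rejection fee $c(\xi')$ for reports with the dominated coordinates, so a non-monotone $\Alg$ could formally admit a truthful payment rule. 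The correct repair is to first use value-monotonicity to put payments into the threshold form $p_j(v,\xi) = v^*(\xi)\cdot\mathbf{1}[v\ge v^*(\xi)] + c(\xi)$, then sweep $v$ across $[v^*(\xi'),v^*(\xi))$ in the cross-coordinate IC to extract $c(\xi') - c(\xi) \ge v^*(\xi) - v^*(\xi')$, and finally invoke the (standard but here implicit) normalization that rejected jobs pay nothing, i.e.\ $c\equiv 0$, to conclude $v^*(\xi)\le v^*(\xi')$ and hence allocation monotonicity. Your sketch compresses these three steps into one claim that does not follow on its own; stating the zero-payment-for-rejection convention explicitly and carrying out the sweep over values is what actually makes the ``only if'' direction go through.
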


\subsection{LP and Dual Fitting}
\label{sec:DualFitting}

Our competitive ratio analysis relies on a relaxed formulation of the problem as a linear program (LP). The relaxed LP formulation was suggested in \cite{JMNY11} and considered later in \cite{JMNY12,LMNY13}. In this paper, we do not require the LP formulation itself, but do rely on its dual. 
For completeness, we present below both the primal and dual programs. The primal program holds a variable $y_j^i(t)$ representing the allocation of a job $j\in\JobInput$ on server $i$ at time $t\in [a_j,d_j]$.
\\

\noindent \textbf{Primal Program.}
    \begin{alignat}{5}
          \max
          \label{eq:PrimalObjective} & \quad \,\,\,\,\,\,
          \sum_{j\in\JobInput} \, \sum_{i=1}^{C} \intop_{a_j}^{d_j} \, \vd_j y_j^i(t)dt & \\
          \label{eq:PrimalDemand} & \quad
          \,\,\,\,\,\,\,\sum_{i=1}^{C} \intop_{a_j}^{d_j} \,  y_j^i(t)dt \,\,\le\,\, D_{j} & \,\,\,\,\, & \quad \forall j \\
          \label{eq:PrimalCapacity} & \quad \sum_{j:t \in [a_j,d_j]} y_j^i(t) \,\,\le\,\, 1 & \,\,\,\,\, & \quad \forall i,t \\
          \label{eq:PrimalGapDecreasing} & \quad
          \,\,\,\,\,\,\,\sum_{i=1}^{C} y_j^i(t) - \frac{1}{D_j} \cdot  \sum_{i=1}^{C} \intop_{a_j}^{d_j} y_j^i(t)dt \,\,\le\,\, 0 & \,\,\,\,\, & \quad \forall j,t\in [a_j,d_j] \\
          & \quad \,\,\,\,\,\,\,\,\,\,y_j^i(t) \ge 0 & \,\,\,\,\, & \quad \forall j,i,t \in [a_j,d_j] \nonumber
    \end{alignat}

\noindent
The first two sets of constraints \eqref{eq:PrimalDemand},\eqref{eq:PrimalCapacity} are standard demand and capacity constraints. The constraints \eqref{eq:PrimalGapDecreasing} are gap-reducing constraints; see \cite{JMNY11} for an interpretation of these constraints.
Note that for the single server case, the constraints \eqref{eq:PrimalGapDecreasing} are redundant, since they follow from \eqref{eq:PrimalCapacity}.
The primal objective \eqref{eq:PrimalObjective} is to maximize the total (fractional) value.

The dual linear program of an instance $\type$ is given as follows.
\\

%
%
\noindent \textbf{Dual Program.}
    \begin{alignat}{5}
          \min        & \quad \sum_{j\in\JobInput} D_j \alpha_j  \,+\,  \sum_{i=1}^{C} \, \intop_{0}^{\infty} \beta_i(t) dt & \label{eq:DualObjective} & \\
          \text{s.t.} & \quad \,\,\,\alpha_j + \beta_i(t) + \pi_j(t) - \frac{1}{D_j}\intop_{a_j}^{d_j} \pi_j(t')dt' \,\,\ge\,\, \vd_j & & \quad \forall j\in\JobInput ,\, i ,\, t\in [a_j,d_j] \label{eq:DualConstraint}\\
          & \quad \,\,\,\alpha_j ,\, \beta_i(t) ,\, \pi_j(t) \,\,\ge\,\, 0 & \,\,\,\,\, & \quad \forall j\in\JobInput ,\, i ,\, t\in [a_j,d_j]
    \end{alignat}

\noindent We provide the intuition behind the dual formulation.
%
The dual program holds a constraint \eqref{eq:DualConstraint} for every tuple $(j,i,t)$, where $j$ is an input job, $i$ is a server index, and $t\in [a_j,d_j]$ is a specific time. Note that since time is continuous, there are an infinite number of constraints. However, this does not impose an issue, since we do not solve the dual program explicitly.
There are three types of dual variables. We typically set $\pi_j(t)=0$, since these variables are not required throughout this paper. The second variable $\alpha_j$ is associated with each job $j$ and appears in all of the constraints of job $j$.
Setting $\alpha_j = \vd_j$ allows us to satisfy all of the constraints associated with job $j$.
As a result, the dual objective function \eqref{eq:DualObjective} increases by $D_j \alpha_j = D_j \vd_j = v_j$. The $\alpha_j$ variables are typically used to cover all the constraints of a completed job $j$, since the cost of covering their constraints is equal to their value.
The last variables $\beta_i(t)$ appear in all constraints associated with a server $i$ and time $t$. These variables are typically used to cover the dual constraints associated with incomplete jobs, since these variables are shared across the constraints of all jobs.

%
%
We denote by $OPT^*(\type)$ the optimal fractional solution of the dual program for an instance $\type$.
Define $\IntegralityGap(s) = \max_{\type : s(\type)=s} \left\{\ffrac{v(OPT^*(\type))}{v(OPT(\type))}\right\}$ as the integrality gap for instances with slackness $s$.
We are interested in online scheduling algorithms that induce upper bounds on the integrality gap.

\begin{definition}
An online scheduling algorithm $\Alg$ induces an upper bound on the integrality gap for a given slackness $s$ if $\IntegralityGap(s) \le \CompetitiveRatio_{\Alg}(s)$.
\end{definition}

The \emph{dual fitting} technique bounds both the competitive ratio $\CompetitiveRatio_\Alg(s)$ of an online algorithm $\Alg$ and the integrality gap $\IntegralityGap(s)$ by constructing a feasible solution to the dual program and bounding its dual cost.
Every feasible dual solution induces an upper bound on the optimal fractional solution, and the well-known weak duality theorem implies that $v(OPT(\type)) \le v(OPT^*(\type))$.
Moreover, $v(\Alg(\type)) \le v(OPT(\type))$. Therefore, we can obtain bounds on the integrality gap and the competitive ratio of $\Alg$. This is summarized in the following theorem.



\begin{theorem}[Dual Fitting \cite{Vazirani01}]
\label{thm:DualFitting}
Let $\Alg$ be an online scheduling algorithm.
If for every instance $\type$ with slackness $s=s(\type)$ there exists a feasible dual solution $(\alpha,\beta,\pi)$ with a dual cost of at most $r(s) \cdot v(\Alg(\type))$, then $\emph{\CompetitiveRatio}_{\Alg}(s) \,\le\, r(s)$ and $\emph{\IntegralityGap}(s) \,\le\, r(s)$.
\end{theorem}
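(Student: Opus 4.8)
The plan is to derive the theorem from weak LP duality, together with the observation that \eqref{eq:PrimalObjective}--\eqref{eq:PrimalGapDecreasing} is a genuine relaxation of the integral scheduling problem. Fix an instance $\type$ with slackness $s=s(\type)$. I will establish the chain
\[
v(\Alg(\type)) \;\le\; v(\OPT(\type)) \;\le\; v(\OPT^*(\type)) \;\le\; \big(\text{cost of the fitted dual solution}\big) \;\le\; r(s)\cdot v(\Alg(\type)),
\]
and then read off both conclusions by taking the appropriate ratios and maximizing over instances.

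First I would justify $v(\OPT(\type)) \le v(\OPT^*(\type))$. Take an optimal offline schedule; without loss of generality it allocates time to a job only if it completes it (discarding wasted processing never decreases value), so letting $y_j^i(t)$ be the $\{0,1\}$ indicator of that schedule produces a point satisfying the demand constraints \eqref{eq:PrimalDemand} and the capacity constraints \eqref{eq:PrimalCapacity}. For the gap-reducing constraints \eqref{eq:PrimalGapDecreasing}, every completed job $j$ has $\sum_i \int_{a_j}^{d_j} y_j^i(t)\,dt = D_j$ while $\sum_i y_j^i(t) \le 1$ at every $t$, so \eqref{eq:PrimalGapDecreasing} holds as well; hence this point is primal-feasible with objective value exactly $v(\OPT(\type))$, so the primal optimum is at least $v(\OPT(\type))$, and by weak duality $v(\OPT^*(\type))$ (the dual optimum) is at least the primal optimum. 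The middle inequality $v(\OPT^*(\type)) \le$ (cost of any feasible dual solution) is weak duality applied to the dual minimization, and the hypothesis supplies a feasible dual solution $(\alpha,\beta,\pi)$ of cost at most $r(s)\cdot v(\Alg(\type))$. The leftmost inequality $v(\Alg(\type)) \le v(\OPT(\type))$ is immediate since $\OPT$ is an optimal offline algorithm.

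It then remains to take maxima over all instances with $s(\type)=s$. Dividing the chain through by $v(\Alg(\type))$ gives $v(\OPT(\type))/v(\Alg(\type)) \le r(s)$ for every such $\type$, hence $\CompetitiveRatio_\Alg(s) \le r(s)$. Dividing instead by $v(\OPT(\type))$ and using $v(\Alg(\type)) \le v(\OPT(\type))$ on the right-hand term gives $v(\OPT^*(\type))/v(\OPT(\type)) \le r(s)\cdot v(\Alg(\type))/v(\OPT(\type)) \le r(s)$, hence $\IntegralityGap(s) \le r(s)$.

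The only delicate point — the one I would expect a careful reader to flag — is the appeal to LP duality for a program with a continuum of constraints (one per tuple $(j,i,t)$) and integrals in the objective. I would sidestep it rather than invoke semi-infinite LP machinery: for everything used above, strong duality is not needed, only that every \emph{feasible} dual solution upper-bounds every \emph{feasible} primal solution. That is a one-line calculation — multiply each primal constraint by its nonnegative dual variable, integrate in $t$ and sum over $j,i$, and compare with the constraints \eqref{eq:DualConstraint} — and it transfers verbatim to the continuous setting because all the sums and integrals appearing are finite over $[a_j,d_j]$. This weak-duality inequality is exactly what yields $v(\OPT^*(\type))\le$ (fitted dual cost) and, applied to the indicator of the optimal schedule, $v(\OPT(\type)) \le v(\OPT^*(\type))$.
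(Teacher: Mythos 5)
Your proposal is correct and follows essentially the same route the paper takes: the chain $v(\Alg(\type)) \le v(\OPT(\type)) \le v(\OPT^*(\type)) \le \text{(fitted dual cost)} \le r(s)\cdot v(\Alg(\type))$, with the first step trivial, the middle two by weak LP duality (the primal being a relaxation of the integral problem), and the conclusions read off by dividing through. You are somewhat more careful than the paper's sketch in explicitly verifying that the indicator of the optimal schedule satisfies the gap-reducing constraints \eqref{eq:PrimalGapDecreasing} and in noting why weak duality survives the continuum of constraints, but the underlying argument is identical.
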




\section{Truthful Non-Committed Scheduling}
\label{sec:NonCommitted_Truthfulness}

Our first goal is to design a truthful online scheduling mechanism under the deadline slackness assumption, without regard for commitments. The algorithmic version of this problem was studied in \cite{LMNY13}. \cite{LMNY13} presents a modified greedy scheduling algorithm, and shows that it obtains a
constant competitive ratio for any $s>1$.
%
%
However, the algorithm in \cite{LMNY13} is not monotone.
We refer the reader to the full version of the paper for a counterexample, in which a job that would not be completed can manipulate the algorithm by reporting a lower value and consequently be completed by its deadline.

In this section, we develop a new \emph{truthful} mechanism $\Alg_T$, which also obtains a constant competitive ratio for any $s > 1$.
The mechanism will be parameterized by constants $\thres > 1$ and $\gap > 1$, which will be specified below.
A key element in $\Alg_T$ is dividing the jobs into buckets (classes), differentiated by their value densities. Precisely, the job classes are $\class_\ell = \left\{ j \mid \vd_j \in \big[\thres^\ell , \thres^{\ell+1}\big) \right\}$.
Notice that job $j$ belongs to class $\class_\ell$ for $\ell = \lfloor \log_\thres (\vd_j)\rfloor$.  We think of a job $j'$ as dominating another job $j$ if $j'$ is in a ``higher'' bucket than $j$.  More formally, we use the following notation throughout the section:

\begin{definition}
Given jobs $j$ and $j'$, we say that $j' \succ j$ if $\lfloor \log_\thres (\vd_{j'})\rfloor > \lfloor \log_\thres (\vd_j)\rfloor$.
\end{definition}

At a high level, algorithm $\Alg_T$ proceeds as follows.
At each point in time, $\Alg_T$ will process the job with highest priority according to the ordering $\succ$.
That is, a pending job $j'$ can preempt a running job $j$ only if $j' \succ j$.  However, there is an important exception: if a job $j$ has not begun its execution by time $d_j - \gap D_j$, then the scheduler will discard that job and will not schedule it thereafter (i.e., it can be rejected immediately).
The following intuition motivates these principles. The preemption rule guarantees that the running jobs belong to the highest classes out of all available jobs (proven later, see Claim \ref{thm:NonCommitted_Truthful_SingleServer_Claim2}). This prevents users from benefiting from a misreport of their values.
The decision to not execute a job that has not begun by time $d_j - \gap D_j$ is used to bound the competitive ratio; note that this condition implies that there is slackness in the time interval from the first time the job is executed, to the job's deadline.


We now formally describe our truthful algorithm for the single server case (see Algorithm \ref{alg:NonCommitted_Truthful_SingleServer} for pseudo-code). The extension to multiple servers can be found in the full version of the paper.


\begin{algorithm}
\label{alg:NonCommitted_Truthful_SingleServer}
\SetKw{KwEvent}{Event:}
\DontPrintSemicolon
\caption{Truthful Non-Committed Algorithm $\Alg_T$ for a Single Server}

$\forall t, \,\,\,\,\, J^P(t) = \left\{ \,j\in\JobInput \mid j \textrm{ partially processed by } \Alg_T \textrm{ at time } t \,\wedge\, t\in [a_j,d_j]  \right\}$.\;
$\,\,\,\,\,\,\,\,\,\,\,\,\, J^E(t) = \left\{ \,j\in\JobInput \mid j \textrm{ unallocated by } \Alg_T \textrm{ at time } t \,\wedge\, t\in [a_j,d_j - \gap D_j]  \right\}$.\;
\;

\KwEvent On arrival of job $j$ at time $t=a_j$:\;
$\,\,\,$ 1. call ClassPreemptionRule($t$). \;
\;

\KwEvent On completion of job $j$ at time $t$:\;
$\,\,\,$ 1. resume execution of job $j' = \arg\max \left\{ \vd_{j'} \mid j' \in J^P(t) \right\}$.\;
$\,\,\,$ 2. call ClassPreemptionRule($t$).\;
$\,\,\,$ 3. delay the output response of $j$ until time $d_j$.\;
\;

\textbf{ClassPreemptionRule ($t$):}\;
$\,\,\,$ 1. $j \,\,\, \leftarrow$ job currently being processed.\;
$\,\,\,$ 2. $j^* \leftarrow \arg\max \left\{ \vd_{j^*} \mid j^{*} \in J^E(t) \right\}$.\;
$\,\,\,$ 3. if $\left( j^* \succ j \right):$\;
$\,\,\,\,\,\,\,\,\,\,\,\,\,\,\,\,\,\,$ 3.1. preempt $j$ and run $j^*$.\;

\end{algorithm}


Note that the algorithm maintains two job sets.
The first set $J^P(t)$ represents jobs $j$ that have been partially processed by time $t$ and can still be executed.  The second set $J^E(t)$ represents all jobs $j$ that have not been allocated by time $t$, where $t \le d_j - \gap D_j$.

The algorithm's decisions are triggered by one of the following two events: either when a new job arrives, or when a processed job is completed.
The algorithm handles both events similarly.
When a new job $j$ arrives, the algorithm invokes a \emph{class preemption rule}, which decides which job to process. In this case, the arriving job $j$ preempts the running job only if it belongs to a higher class.
The second type of event occurs when the running job is completed. As mentioned earlier, the algorithm delays the output of the job until its respective deadline (line 3). 
When a job is completed, the algorithm resumes the best job $j'$ among the preempted jobs in $J^P(t)$ (line 1) and calls the class preemption rule (line 2). The class preemption rule would override the decision to resume $j'$ if there exists an unallocated job $j^*$ in $J^E(t)$ belonging to a higher class. In that case, $j^*$ is processed and $j'$ remains preempted.
Notice that in both cases, the algorithm favors jobs belonging to higher classes. Formally,
\begin{claim}
\label{thm:NonCommitted_Truthful_SingleServer_Claim2}
Let $j = j_{\Alg_T}(t)$ be the job processed at time $t$ by $\Alg_T$. Let $j' \in J^P(t) \,\cup\, J^E(t)$. That is, $j$ has either been allocated by time $t$ and $t\in[a_{j'},d_{j'}]$, or $j$ has not been allocated by time $t$ and $t\in [a_{j'},d_{j'} - \gap D_j]$. Then, $j' \not\succ j$.
\end{claim}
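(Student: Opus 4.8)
The plan is to prove Claim~\ref{thm:NonCommitted_Truthful_SingleServer_Claim2} by induction on the sequence of events (job arrivals and job completions) processed by $\Alg_T$, showing that the invariant ``the currently running job is $\succ$-maximal among all jobs in $J^P(t)\cup J^E(t)$'' is maintained across every event. Formally, I would prove the slightly stronger statement that at every time $t$, if $j=j_{\Alg_T}(t)$ is the running job, then $j'\not\succ j$ for all $j'\in J^P(t)\cup J^E(t)$; the claim as stated then follows immediately (noting the minor bookkeeping point that the interval condition for $J^E$ uses $d_{j'}-\gap D_{j'}$, which is what membership in $J^E(t)$ means, even though the statement writes $d_{j'}-\gap D_j$ — I would clarify this is a typo for $D_{j'}$).

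The base case is trivial: before any job arrives, nothing is running and both sets are empty. For the inductive step I would consider the two event types. \textbf{Arrival of a job $j_0$ at time $t=a_{j_0}$:} just before this event the invariant holds for the running job $j$; the sets $J^P,J^E$ change only by possibly adding $j_0$ to $J^E(t)$ (and the continuously-evolving membership, which between events only removes jobs whose deadline-minus-slack has passed, never added anything that could violate the invariant — I would fold this observation in). \texttt{ClassPreemptionRule}$(t)$ sets $j^*$ to the $\succ$-maximal element of $J^E(t)$ and preempts $j$ in favor of $j^*$ exactly when $j^*\succ j$. After the rule: if a preemption occurred, the new running job is $\succ$-maximal over $J^E(t)$ by choice of $j^*$, and it is $\succeq j$, while $j$ itself was (by induction) $\succ$-maximal over the old $J^P(t)$; since the preempted $j$ moves into $J^P(t)$, the new running job dominates everything in $J^P(t)\cup J^E(t)$. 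If no preemption occurred, then $j^*\not\succ j$, and since $j$ was already $\succ$-maximal over $J^P(t)$ and $j^*$ is $\succ$-maximal over $J^E(t)\supseteq\{j_0\}$, we get $j\succeq$ everything, i.e. $j'\not\succ j$ for all relevant $j'$. \textbf{Completion of job $j$ at time $t$:} the algorithm resumes $j'=\arg\max\{\vd_{j'}\mid j'\in J^P(t)\}$, then calls \texttt{ClassPreemptionRule}$(t)$. I need to argue that $j'$ (the resumed job) is $\succ$-maximal over the current $J^P(t)$ — this is immediate from the $\arg\max$ of value density, since $\succ$ is determined by $\lfloor\log_\thres\vd\rfloor$ and maximizing $\vd$ maximizes that floor (at least, it never produces something strictly dominated). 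Then the same case analysis as above applies, with $j'$ playing the role of the pre-rule running job and using that $j'$ is maximal over $J^P(t)$.

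The main obstacle, and the point requiring the most care, is handling the continuous-time dynamics between discrete events: I must confirm that no job can enter $J^P(t)$ or $J^E(t)$ except at an event, and that the running job never changes except at an event. A job enters $J^P$ only by being preempted (an event) and enters $J^E$ only on arrival (an event); conversely jobs leave these sets when their (slackened) deadline passes, which can only help the invariant. So the invariant, once established at an event, persists until the next event. A secondary subtlety is the interaction between the two sub-steps of the completion event (resume $j'$, then run the preemption rule): I would emphasize that after step~1 the running job is $\succ$-maximal over $J^P(t)$ but possibly not over $J^E(t)$, and step~2 is precisely what repairs this, so the post-event invariant holds over the union. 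Finally I would note the edge case where $J^P(t)$ is empty on completion (then step~1 does nothing and the rule either starts the best job in $J^E(t)$ or leaves the server idle, both consistent with the invariant since an idle server vacuously has no $j'\succ j$, or we treat an idle server as running a ``job'' of density $-\infty$).
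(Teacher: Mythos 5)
Your proof is correct, but it takes a genuinely different route from the paper's. You recast the claim as a global invariant -- the running job is $\succ$-maximal over $J^P(t)\cup J^E(t)$ -- and maintain it by induction over the event sequence, checking the arrival event, both sub-steps of the completion event, and the fact that neither set can gain members between events. The paper instead argues by contradiction on the specific pair $(j,j')$: it lets $t^*$ be the earliest time in $[a_{j'},t]$ at which $j$ is allocated, notes that at $t^*$ the algorithm either started or resumed $j$, and observes that in either case the class preemption rule (or the resumption step) would have favored $j'$ had $j'\succ j$ held. Both arguments rest on exactly the same two algorithmic facts -- an unallocated job is started only if it beats the running job's class, and resumption picks the maximum-density job in $J^P(t)$, whose class is therefore weakly maximal -- so neither needs an idea the other lacks. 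Your induction is more systematic and makes explicit the bookkeeping the paper leaves implicit (that the sets change only at events, the interleaving of resume and preemption-rule in the completion event, the idle-server edge case), at the cost of length; the paper's localized argument is shorter but silently relies on the reader to check that a job $j'$ available at time $t$ was already available (in $J^P(t^*)\cup J^E(t^*)$) at the earlier decision point $t^*$. Your remark that the $\gap D_j$ in the claim statement should read $\gap D_{j'}$ is consistent with the definition of $J^E(t)$ in the algorithm.
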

\begin{proof}
Assume towards contradiction that $j' \succ j$. Let $t^*$ denote the earliest time job inside the interval $\big[a_{j'},t\big]$ during which $j$ is allocated. Note that $t^*$ must exist, since the claim assumes that $j$ has being processed at time $t$. At time $t^*$, the algorithm $\Alg$ either started processing $j$ or resumed the execution of $j$. For $\Alg$ to start $j$, the threshold preemption rule must have preferred $j$ over $j'$, which is impossible. The second case where $\Alg$ resumed the execution of job $j$ is also impossible, since either $j'$ would have been resumed instead of $j$, or the threshold preemption rule would have immediately preempted $j$.
We conclude that $j' \not\succ j$.
\end{proof}
Claim \ref{thm:NonCommitted_Truthful_SingleServer_Claim2} implies that at any point in time, the job allocated by $\Alg_T$ belongs to the highest class among the jobs that can be processed, i.e., either an unallocated job $j$ such that $t \in [a_j, d_j - \gap D_j ]$ or a partially processed job $j$ such that $t \in [a_j,d_j]$. Notice further that equalities in job classes are broken in favor of partially processed jobs. This feature is crucial for proving the truthfulness and the performance guarantees of our algorithm. Using Claim \ref{thm:NonCommitted_Truthful_SingleServer_Claim2} we prove an additional property, which is also required for establishing truthfulness.

\begin{claim}
\label{thm:NonCommitted_Truthful_SingleServer_Claim4}
At any time $t$, the set $J^P(t)$ contains at most one job from each class.
\end{claim}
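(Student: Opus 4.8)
The natural approach is a proof by contradiction combined with an analysis of how jobs enter $J^P(t)$. Suppose at some time $t$ there are two distinct jobs $j_1, j_2$ in the same class $\class_\ell$, both in $J^P(t)$ — i.e., both have been partially processed by $\Alg_T$ and neither has passed its deadline. Without loss of generality, say $j_1$ began being processed (for the first time) no later than $j_2$ did; let $t_2 \le t$ be the first time $j_2$ is processed. I would look at the instant $t_2$: at that moment $\Alg_T$ is starting $j_2$, which happens only via the class preemption rule selecting $j_2$ from $J^E(t_2)$. For that to happen, either the server was running a job of strictly lower class (so $j_2 \succ$ the running job), or $j_2$ was the resumed job after a completion and the preemption rule did not override it.

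The key step is to argue that $j_1$ cannot be "live" at time $t_2$ in a way that would have taken priority. Since $j_1$ was first processed before $t_2$, at time $t_2$ either $j_1$ is currently running, or $j_1 \in J^P(t_2)$ (partially processed, deadline not yet passed, since $j_1 \in J^P(t)$ and $t_2 \le t$), or $j_1$ has already completed. If $j_1$ has completed before $t_2$, then $j_1 \notin J^P(t)$ (completed jobs leave the partial set), contradiction. So $j_1$ is either running at $t_2$ or sitting in $J^P(t_2)$. If $j_1$ is running at $t_2$, then starting $j_2$ requires $j_2 \succ j_1$ by the class preemption rule (line 3), but $j_1, j_2$ are in the same class, so $j_2 \not\succ j_1$ — contradiction. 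If $j_1 \in J^P(t_2)$, then at $t_2$ a completion event occurred and line 1 resumes $\arg\max\{\vd_{j'} : j' \in J^P(t_2)\}$; since ties among partially-processed jobs are relevant here, I need to use the fact that the resumed job is chosen by value-density (not just class) and then the preemption rule is checked. If $j_1$ has higher value-density than $j_2$, then $j_1$ would be resumed, not $j_2$; if $j_2$ is resumed, then the preemption rule must not override it with something in $J^E$, but $j_1$ is in $J^P$, not $J^E$, so that's consistent — I must instead argue that $j_2 \notin J^P(t_2)$ at that moment, i.e., that $t_2$ really is the *first* time $j_2$ runs, so $j_2 \in J^E(t_2)$ rather than $J^P(t_2)$, and hence $j_2$ being "resumed" from $J^P$ is impossible: the completion-event branch only resumes jobs from $J^P$, so $j_2$ must instead be started via the class preemption rule in line 2, which again requires $j_2 \succ$ (running job), and the running job after resume is $j_1$ (or another $J^P$ job of class $\ge \ell$, by Claim 3.3). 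Either way we reach $j_2 \succ j_1$ or $j_2 \succ$ (something of class $\ge \ell$), contradicting same-class membership.

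The main obstacle I anticipate is bookkeeping the completion-event case cleanly: disentangling "resumed from $J^P$" versus "started fresh from $J^E$" and making sure the tie-breaking rule (ties broken in favor of partially processed jobs, as the text emphasizes right before the claim) is invoked correctly, so that a fresh job $j_2$ can never be slotted in while a same-class partially-processed $j_1$ is available. I would lean on Claim 3.3 to assert that whatever is running just before $t_2$ has class $\le \ell$ only if nothing of higher class is available, and on the observation that once $j_1 \in J^P$ it stays there (with class exactly $\ell$) until it either completes or expires — neither of which has happened by $t$. Pinning down that invariant — "a partially processed job of class $\ell$ blocks any new class-$\ell$ job from entering $J^P$" — is the crux, and everything else is routine case-checking against the two event handlers and the ClassPreemptionRule.
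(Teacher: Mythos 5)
Your argument is correct and rests on exactly the same two facts as the paper's proof---Claim \ref{thm:NonCommitted_Truthful_SingleServer_Claim2} together with the requirement that a job can enter $J^P$ only by strictly dominating (in class) the currently running or just-resumed job---so it is essentially the paper's induction-over-events argument recast as a contradiction at the first time the second same-class job starts. The brief detour about $j_2$ being ``resumed'' is harmless, since you correctly rule it out by noting $j_2 \in J^E(t_2)$ rather than $J^P(t_2)$ at its first processing time.
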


\begin{proof}
By induction. Assume the claim holds and consider one of the possible events. Upon arrival of a new job $j^*$ at time $t$, the threshold preemption rule allocates $j^*$ only if $j^* \succ j$. Since $j$ is the maximal job in $J^P(t)$, with respect to $\succ$, if $j^*$ is allocated then it is the single job in $J^P(t)$ from its class. Upon completion of job $j$, it is removed from $J^P(t)$ and the threshold preemption rule is invoked. As before, if a new job is allocated, it belongs to a unique class.
\end{proof}

We now prove that $\Alg_T$ is truthful, i.e., $\Alg_T$ can be used to design a truthful online scheduling mechanism.

\begin{claim}
\label{thm:NonCommitted_Truthful_SingleServer_Truthfulness}
The algorithm $\Alg_T$ (single server) is monotone.
\end{claim}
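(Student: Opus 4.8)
The plan is to invoke Theorem~\ref{thm:truth}, which reduces the claim to showing that $\Alg_T$ is monotone: whenever a job $j$ is completed under a profile $\type$, it is still completed when the report of $j$ is replaced by any dominating type $\type_j'$, with $\type_{-j}$ held fixed. I would first reduce this to four single-coordinate cases --- raising $v_j$, lowering $D_j$, and moving each endpoint of $[a_j,d_j]$ in the favourable direction --- since every dominating $\type_j'$ is reachable from $\type_j$ by a chain of single-coordinate dominance steps (ordered, if need be, with the demand change first so that all intermediate profiles stay in the admissible slackness regime), and monotonicity composes along such a chain.

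For a single-coordinate change I would run $\Alg_T$ in parallel on the original profile $\type$ and the modified profile $\type'$ and establish the coupling invariant: at every time $t$, either $j$ has already completed under $\type'$, or the total amount of processing $j$ has received by time $t$ under $\type'$ is at least the amount it has received by time $t$ under $\type$. Granting this, the claim follows, because in the favourable direction the discard time $d_j-\gap D_j$ and the feasibility window of $j$ only move in $j$'s favour, and the event ``$j$ completes'' is monotone in the accumulated processing together with these quantities, so completion under $\type$ forces completion under $\type'$.

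The engine for the invariant is the structural observation, built on Claim~\ref{thm:NonCommitted_Truthful_SingleServer_Claim2} and Claim~\ref{thm:NonCommitted_Truthful_SingleServer_Claim4}, that the ``high-priority subsystem'' is insensitive to $j$: for any class index $c$ strictly above the class of $j$, a job of class $\ge c$ is processed at time $t$ precisely when it is the highest-class available job among jobs of class $\ge c$, and this never depends on the presence or state of a lower-class job such as $j$; hence the set of times the server runs a class-$\ge c$ job is a function of $\type_{-j}$ alone. In every case that does not change the class of $j$ --- raising $v_j$ within its bucket, shrinking $D_j$ without crossing a threshold (here $d_j-\gap D_j$ moves later and the work required decreases, both helpful), or enlarging the window --- the times the server is occupied by strictly-higher-class jobs are therefore identical under $\type$ and $\type'$, and in the remaining free time $j$ runs weakly earlier/more often: once $j$ has started it sits in $J^P$ and outranks every same-class competitor still in $J^E$, while before it starts the only obstruction is a same-class job in $J^P$, of which there is at most one by Claim~\ref{thm:NonCommitted_Truthful_SingleServer_Claim4}. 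A short induction over arrival and completion events, with some bookkeeping for the within-bucket tie-breaking by value density, then yields the invariant.

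The main obstacle is the case where the improvement lifts $j$ into a strictly higher bucket $\class_{\ell'}$ (a large enough increase of $v_j$, or a decrease of $D_j$ across a bucket boundary). Now $j$ joins a new priority stratum, preempts jobs of the intermediate classes it never touched before, and perturbs the schedules of the other class-$\ell'$ jobs, so one can no longer claim that the relevant higher-class schedule is unchanged. The point to push through is that every job newly delayed by $j$ has class at most $\ell'$ and so never outranks $j$ under $\type'$, while the obstruction faced by $j$ has genuinely weakened: under $\type$ job $j$ had to yield to \emph{all} class-$>\ell$ jobs, including every class-$\ell'$ job whenever it was available, whereas under $\type'$ it only yields to class-$>\ell'$ jobs and to a single currently-running class-$\ell'$ job in $J^P$. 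I would formalize this either by an exchange/charging argument on the processing intervals of class-$\ge\ell'$ jobs, or by strengthening the induction to also carry the bound that, on every suffix of $j$'s window, the total server time spent on jobs that outrank $j$ under $\type'$ does not exceed the corresponding quantity under $\type$; the rest is routine.
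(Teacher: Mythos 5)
Your proposal takes essentially the same route as the paper: both proceed coordinate by coordinate, and both rest on the key structural fact that the set of times during which the server runs a job of class strictly above $j$ is determined by $\type_{-j}$ alone. The paper packages the argument differently — rather than a coupling invariant on $j$'s accumulated processing, it tracks the start time $st(y_j)$ directly, notes that a higher value/lower demand report can only push $st(y_j)$ earlier, and then observes that every job preempting $j$ on $[st'(y_j),d_j]$ under the improved report has class strictly above $j$, arrives after $st'(y_j)$, and would occupy at least as much of the (shorter) interval $[st(y_j),d_j]$ under the original report; this yields the contrapositive without an explicit induction. Your proposed ``strengthened induction'' — that the server time spent on jobs outranking $j$ under $\type'$ on any suffix of $j$'s window never exceeds the corresponding quantity under $\type$ — is morally the same inequality, and it does hold for the reason you give (the outranking set under $\type'$ is a sub-collection of the outranking set under $\type$, and its schedule is unchanged), so the class-crossing case you flag as the main obstacle is not actually problematic.

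One small inaccuracy worth flagging: the claim that ``before it starts the only obstruction is a same-class job in $J^P$'' is incomplete. The \texttt{ClassPreemptionRule} selects $j^* = \arg\max\{\vd_{j^*} : j^* \in J^E\}$, so a same-class job sitting in $J^E$ with a \emph{higher value density} than $j$ will also be chosen ahead of $j$. (The paper itself glosses over this by asserting that within-class behaviour is ``identical regardless of value reported,'' which is similarly not literally true.) This does not break either argument, since the within-class tie-break moves monotonically in $j$'s favour when $v_j$ rises or $D_j$ falls, but it should be stated explicitly if you want the within-bucket case to be airtight.
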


The full proof of Claim \ref{thm:NonCommitted_Truthful_SingleServer_Truthfulness} appears in Appendix \ref{sec:Appendix_NonCommitted_Truthful_SingleServer}. The intuition behind the result is as follows.
The algorithm is defined so that the processing of higher-class jobs is independent of the presence of lower-class jobs in the system. As a result, a job $j$ is completed if precisely two conditions hold: first, that there is some time in $[a_j, d_j - \gap D_j]$ in which no job of equal or higher class is executing (so that job $j$ can start), and second, there are at least $D_j$ units of time after the earliest such start time, but before $d_j$, in which higher class jobs are not executing. These conditions are well-defined because the processing of job $j$ does not impact the times in which jobs of higher class are processed. One can then note, however, that each of these two conditions are monotone with respect to the job's class, length, arrival time, and deadline.  One can therefore conclude that the algorithm is monotone, and hence truthfulness follows from Theorem \ref{thm:truth}.

The competitive-ratio analysis of $\Alg_T$ is similar to the analysis of the non-truthful algorithm $\Alg$ \cite{LMNY13}, and proceeds via the dual fitting methodology. 
The full proof is described in Appendix \ref{sec:Appendix_NonCommitted_Truthful_SingleServer}.
Our result is the following.

\begin{theorem}
\label{thm:NonCommitted_Truthful_SingleServer}
The mechanism $\Alg_T$ (single-server) is truthful and obtains a competitive ratio
\begin{eqnarray*}
\emph{\CompetitiveRatio}_{\Alg_T}(s) & = & 2 + \Theta\left(\frac{1}{\sqrt[3]{s}-1}\right) + \Theta\left(\frac{1}{(\sqrt[3]{s}-1)^3}\right), \quad s>1.
\end{eqnarray*}
\end{theorem}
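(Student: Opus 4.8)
\textbf{Proof proposal for Theorem \ref{thm:NonCommitted_Truthful_SingleServer}.}

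The plan is to invoke Theorem \ref{thm:DualFitting} (dual fitting): it suffices to exhibit, for every instance $\type$ with slackness $s=s(\type)$, a feasible dual solution $(\alpha,\beta,\pi)$ whose cost is at most $r(s)\cdot v(\Alg_T(\type))$ for $r(s) = 2 + \Theta\big(\frac{1}{\sqrt[3]{s}-1}\big) + \Theta\big(\frac{1}{(\sqrt[3]{s}-1)^3}\big)$; truthfulness is already established by Claim \ref{thm:NonCommitted_Truthful_SingleServer_Truthfulness} together with Theorem \ref{thm:truth}, so the content here is purely the competitive ratio. First I would set $\pi_j(t)=0$ for all $j,t$ (as the excerpt foreshadows) and $\alpha_j = \vd_j$ for every job $j$ completed by $\Alg_T$, with $\alpha_j = 0$ otherwise; this contributes exactly $v(\Alg_T(\type))$ to the dual objective. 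The remaining work is to define $\beta(t)$ (single server, so one function) so that (i) the dual constraint $\alpha_j + \beta(t) \ge \vd_j$ holds for every job $j$ and every $t\in[a_j,d_j]$, and (ii) $\int_0^\infty \beta(t)\,dt \le (r(s)-1)\cdot v(\Alg_T(\type))$.

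For the constraint to hold, the essential case is an \emph{incomplete} job $j$ at a time $t\in[a_j,d_j]$: there we need $\beta(t)\ge \vd_j$. The natural choice is to set $\beta(t) = \gap\cdot\vd_{\Alg_T}(t)$ (a constant multiple $\gap$ of the value-density of whatever job $\Alg_T$ is running at time $t$, and $0$ when the server is idle), and then to argue that whenever an incomplete job $j$ is "alive" at time $t$, the job running at $t$ has value-density within a bounded factor of $\vd_j$. Here is where the algorithm's structure enters: by Claim \ref{thm:NonCommitted_Truthful_SingleServer_Claim2}, if $j$ is unallocated and $t\le d_j-\gap D_j$, then the running job $j_{\Alg_T}(t)$ satisfies $j \not\succ j_{\Alg_T}(t)$, i.e. they are in classes within one of each other, giving $\vd_{\Alg_T}(t) \ge \thres^{-2}\vd_j$ or so; and if $t > d_j - \gap D_j$, the job was discarded while alive, which (because a job in $J^E$ of class $\ge$ that of $j$ must have been running throughout the pre-discard window, or $j$ itself would have started) forces a busy interval of length $\gap D_j$ ending near $d_j$ on which the running density dominates a fixed fraction of $\vd_j$ — this is precisely the role of the $\gap D_j$ "no late start" rule. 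For a \emph{completed but not-yet-credited} overlap, $\alpha_j=\vd_j$ already covers its constraint. Choosing $\beta(t)=\gap\cdot\vd_{\Alg_T}(t)$ with $\gap$ tuned to $\thres$ makes (i) hold.

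For (ii), the point is a charging argument: $\int_0^\infty \beta(t)\,dt = \gap\int_0^\infty \vd_{\Alg_T}(t)\,dt = \gap\sum_{j \text{ run by } \Alg_T} \vd_j \cdot(\text{time } j \text{ is run})$. Jobs that $\Alg_T$ completes contribute at most $\gap\sum_{j\in\Alg_T(\type)} v_j = \gap\, v(\Alg_T(\type))$ to this. The delicate part is bounding the processing time spent on jobs that $\Alg_T$ eventually \emph{fails} to complete: I would charge each such partial execution, via the class hierarchy and Claim \ref{thm:NonCommitted_Truthful_SingleServer_Claim4} (at most one partially-processed job per class at any time, so a geometric-in-$\thres$ sum over classes above), to the completed job that preempted it, losing only a constant factor depending on $\thres$ and $\gap$; the slackness $s$ enters because the "no late start" rule guarantees that a preempted job still has a $\gap$-fraction slack window, and one trades $\thres$ close to $1$ (small class-granularity loss, good as $\vd$-ratios) against $\gap,\thres$ forced to grow as $s\to 1$. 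Optimizing over $\thres,\gap$ as functions of $s$ — the natural scaling being $\thres\approx \sqrt[3]{s}$ — yields the stated $2 + \Theta(1/(\sqrt[3]{s}-1)) + \Theta(1/(\sqrt[3]{s}-1)^3)$ bound, where the leading $2$ is "$1$ from $\alpha$" plus "$1$ from the completed-job part of $\beta$" in the $s\to\infty$ limit.

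The main obstacle I expect is the second half of step (ii): cleanly accounting for all the processing time $\Alg_T$ wastes on jobs it does not finish, and for the boundary windows $(d_j-\gap D_j, d_j]$ of discarded jobs, without double-charging. This is exactly the place where the analysis of \cite{LMNY13} must be adapted — the extra complication over \cite{LMNY13} being that the preemption rule is by \emph{class} ($\succ$) rather than by raw value-density, so one needs Claims \ref{thm:NonCommitted_Truthful_SingleServer_Claim2} and \ref{thm:NonCommitted_Truthful_SingleServer_Claim4} to control how far the running density can drop below an alive job's density and to ensure the per-class charging is a convergent geometric series. The single-server restriction keeps this a one-dimensional bookkeeping problem (no need for the gap-reducing constraints \eqref{eq:PrimalGapDecreasing} or the $\pi_j$ variables), so once the class-based charging is set up, the optimization over $\thres$ and $\gap$ is routine.
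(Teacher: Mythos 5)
Your high-level plan — cite Claim \ref{thm:NonCommitted_Truthful_SingleServer_Truthfulness} together with Theorem \ref{thm:truth} for truthfulness, then dual-fit with $\pi\equiv 0$, $\alpha_j=\vd_j$ for jobs $\Alg_T$ completes, and $\beta$ a multiple of $\vd_{\Alg_T}(t)$ bounded by a class-based charging argument, optimized over $\thres,\gap$ — is exactly the paper's structure. But there is a concrete error in the step where you try to cover the dual constraint for a discarded job $j$ at times $t\in(d_j-\gap D_j,\,d_j]$. You assert that the no-late-start rule ``forces a busy interval of length $\gap D_j$ ending near $d_j$ on which the running density dominates a fixed fraction of $\vd_j$.'' That is backwards: Claim \ref{thm:NonCommitted_Truthful_SingleServer_Claim2} only constrains the server while $j$ is still in $J^E(t)$, i.e.\ on $[a_j,\,d_j-\gap D_j]$, a window of length at least $(s-\gap)D_j$ that \emph{ends at} the discard time, not near $d_j$. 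Once $j$ is discarded nothing prevents $\Alg_T$ from going idle or running arbitrarily low-class jobs, so $\beta(t)=\text{const}\cdot\vd_{\Alg_T}(t)$ simply fails the constraint on $(d_j-\gap D_j,\,d_j]$. The fix the paper inherits from \cite{LMNY13} is to construct a more elaborate $\beta$ (effectively spreading the $\beta$-mass accumulated on the forced-busy window over the whole availability window), which is precisely the source of the $\frac{s}{s-\gap}$ factor in the dual cost $v(\Alg_T(\type)) + \thres\cdot\frac{s}{s-\gap}\int\vd_{\Alg_T}(t)\,dt$; your proposal never accounts for that factor. Separately, the final scaling you name, $\thres\approx\sqrt[3]{s}$, is wrong: the paper takes $\gap\approx s^{2/3}$ and $\thres^*(\gap)=\sqrt{\gap}/(\sqrt{\gap}-1)\to 1$, so $\thres-1\approx s^{-1/3}$; it is $\gap$, not $\thres$, that grows. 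The rest of your sketch (Claim \ref{thm:NonCommitted_Truthful_SingleServer_Claim4} controlling one partial job per class, geometric charging across classes, extra factor of $\thres$ relative to \cite{LMNY13} because preemption is by class) is consistent with how the paper adapts the \cite{LMNY13} charging argument.
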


\subsection{Extension to Multiple Servers}

We next extend our algorithm to handle multiple servers. We provide a high level description of the algorithm; 
the details can be found in Appendix \ref{sec:Appendix_NonCommitted_Truthful_MultipleServers}.
The multiple server algorithm runs a local copy of the single server algorithm on each of the $C$ servers. The algorithm allows a job to use different servers throughout time (equivalently, we use say that a job is allowed to \emph{migrate} between servers), yet with some restrictions: a preempted job $j$ can migrate to any other server before time $d_j - \gap D_j$. After that time, the job may only use the subset of servers which were allocated to it before time $d_j - \gap D_j$.
We obtain the following competitive-ratio result.
\begin{theorem}
\label{thm:NonCommitted_Truthful_MultipleServers}
The algorithm $\Alg_T$ (multiple-servers) obtains a competitive ratio of:
\begin{eqnarray*}
\emph{\CompetitiveRatio}_{\Alg_T}(s) & = & 2 + \Theta\left(\frac{1}{\sqrt[3]{s}-1}\right) + \Theta\left(\frac{1}{(\sqrt[3]{s}-1)^3}\right), \quad s>1.
\end{eqnarray*}
\end{theorem}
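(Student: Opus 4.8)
The plan is to mirror the single-server dual-fitting argument of Theorem~\ref{thm:NonCommitted_Truthful_SingleServer} while carefully accounting for the migration restriction. I would begin by fixing an instance $\type$ with slackness $s$ and running the multiple-server algorithm $\Alg_T$, which keeps one local copy of the single-server rule per server and allows a preempted job to migrate freely only before time $d_j - \gap D_j$, after which it is pinned to the servers it has already touched. The first step is to prove the multi-server analogues of Claims~\ref{thm:NonCommitted_Truthful_SingleServer_Claim2} and~\ref{thm:NonCommitted_Truthful_SingleServer_Claim4}: at every time $t$ and on every server $i$, the job $j_{\Alg_T}^i(t)$ belongs to the highest class among jobs that could legally run on server $i$ at time $t$ (respecting the migration constraint), and the set of partially processed jobs that are ``live'' on server $i$ contains at most one job per class. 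These follow by the same earliest-allocation-time contradiction argument as in the single-server case, applied server-by-server; the migration restriction only removes candidate jobs from a server's pool, so it cannot create a violation.

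The heart of the proof is the dual construction. For each job $j$ completed by $\Alg_T$, set $\alpha_j = \vd_j$ (and $\pi_j \equiv 0$), so that these jobs' own dual constraints are satisfied and contribute exactly $v_j = D_j\vd_j$ to the dual objective — this is the $v(\Alg_T(\type))$ "core" term. The work is in defining the $\beta_i(t)$ variables so that every constraint \eqref{eq:DualConstraint} for an \emph{incomplete} job $j$ is covered, i.e.\ $\alpha_j + \beta_i(t) \ge \vd_j$ for all servers $i$ and all $t \in [a_j, d_j]$. Following \cite{LMNY13}, I would set $\beta_i(t)$ proportional to the value density $\vd_{\Alg_T}^i(t)$ of whatever job is running on server $i$ at time $t$, scaled by a constant depending on $\thres$ and $\gap$. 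The reason this covers an incomplete job $j$: since $j$ is not completed, either it never started (so by the discard rule there is no slack window, meaning throughout $[a_j, d_j - \gap D_j]$ some job of class $\ge$ that of $j$ was running on every server, giving $\beta_i(t) \gtrsim \vd_j$ via the class lower bound $\thres^{\lfloor \log_\thres \vd_j\rfloor}$), or it started but accumulated less than $D_j$ units before $d_j$ (so on a large fraction of $[a_j,d_j]$ strictly higher-class jobs were blocking it on the relevant servers). The slackness $s > 1$ and the parameter $\gap$ are what convert "$j$ did not finish" into "a constant fraction of $j$'s window was occupied by jobs of comparable-or-higher density," and the geometric bucketing by $\thres$ is what lets a running job of density $\vd_{\Alg_T}^i(t)$ pay for a blocked job of density at most $\thres$ times larger within the same or adjacent buckets.

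Next I would bound the total dual cost $\sum_j D_j\alpha_j + \sum_i \int_0^\infty \beta_i(t)\,dt$. The first sum is $v(\Alg_T(\type))$. For the second, $\int_0^\infty \beta_i(t)\,dt$ is a constant times $\int_0^\infty \vd_{\Alg_T}^i(t)\,dt = \sum_{j} \vd_j \cdot (\text{time } j \text{ ran on server } i)$, and summing over $i$ gives a constant times $\sum_j \vd_j \cdot (\text{total processing time of } j) \le \sum_j \vd_j D_j$. Crucially, a job that $\Alg_T$ runs for positive time but does not complete still contributes here, so one needs the standard charging argument showing that the total value-density-weighted processing time of \emph{non-completed} jobs is bounded by a constant times $v(\Alg_T(\type))$ — this uses Claim~\ref{thm:NonCommitted_Truthful_SingleServer_Claim4}'s one-job-per-class property together with the geometric gap $\thres$ to sum a geometric series over classes. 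Combining, the dual cost is at most $r(s)\cdot v(\Alg_T(\type))$ with $r(s) = 2 + \Theta\big(\frac{1}{\sqrt[3]{s}-1}\big) + \Theta\big(\frac{1}{(\sqrt[3]{s}-1)^3}\big)$ after optimizing $\thres$ and $\gap$ as functions of $s$ (the natural choice being $\thres, \gap \approx \sqrt[3]{s}$ up to lower-order adjustments). Theorem~\ref{thm:DualFitting} then yields the claimed competitive ratio.

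The main obstacle I anticipate is the migration restriction in the covering step: for an incomplete job $j$ and a server $i$ that $j$ never touched before $d_j - \gap D_j$, one must argue $\beta_i(t) \ge \vd_j - \alpha_j$ throughout $[a_j,d_j]$ even though $j$ was never eligible to run on $i$. This requires showing that $j$ being ``stuck'' on its pinned servers still implies those servers — and hence, by a pigeonhole/load argument across the $C$ servers during $j$'s window, enough total server-time — were busy with jobs of class at least that of $j$; the single-server analysis does not have this subtlety. Handling this, likely by treating the pre-pinning and post-pinning phases of $j$'s window separately and invoking the global priority property of Claim~\ref{thm:NonCommitted_Truthful_SingleServer_Claim2} on every server during the free-migration phase, is where the bulk of the technical care goes, and is presumably why the details are deferred to Appendix~\ref{sec:Appendix_NonCommitted_Truthful_MultipleServers}.
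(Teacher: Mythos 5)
Your proposal follows the same high-level route as the paper: extend the single-server claims to the per-server copies, build a dual solution with $\alpha_j=\vd_j$ for completed jobs and $\beta_i(t)$ proportional to $\vd^i_{\Alg_T}(t)$, invoke the LMNY13-style feasibility argument, bound $\sum_i\int_0^\infty \vd^i_{\Alg_T}(t)\,dt$ server-by-server via the same charging lemma as in the single-server proof, and then optimize $\thres,\gap$. That is precisely what the appendix does: it cites LMNY13 for a feasible dual of cost $\big[1+\thres\frac{s}{s-\gap}\big]\big[v(\Alg_T(\type))+\sum_i\int_0^\infty \vd^i_{\Alg_T}(t)\,dt\big]$ and reuses the single-server charging bound on the integral.

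There is, however, a concrete error in the last step. You write that the natural parameter choice is $\thres,\gap\approx\sqrt[3]{s}$. That is not right: with $\gap\approx s^{1/3}$ and $\thres\approx s^{1/3}$ the factor $\thres\cdot\frac{s}{s-\gap}$ grows like $s^{1/3}$, and the resulting bound diverges rather than tending to $2$. The paper's choice (same as in the single-server proof) is $\gap\approx s^{2/3}$ and $\thres=\frac{\sqrt{\gap}}{\sqrt{\gap}-1}$, so that $\thres\to 1$ as $s\to\infty$; under that choice $(\thres-1)(\gap-1)-1=\sqrt{\gap}\approx s^{1/3}$ and both bracketed factors tend to $\frac{s^{1/3}}{s^{1/3}-1}$, giving the stated $2+\Theta\big(\frac{1}{\sqrt[3]{s}-1}\big)+\Theta\big(\frac{1}{(\sqrt[3]{s}-1)^3}\big)$. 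Also, the worry you raise about the migration restriction is somewhat misdirected: the paper's multi-server claim establishes a pointwise invariant — whenever a job $j$ is eligible but not currently running (partially processed with $t\in[a_j,d_j]$, or unallocated with $t\le d_j-\gap D_j$), \emph{every} server is executing a job of equal or higher class than $j$ (with ties broken by start time). The pinning of a job to a subset of servers restricts where that job can run, but does not weaken the invariant that all servers carry jobs of at least $j$'s class; hence no pigeonhole or load-averaging argument is needed, the pointwise claim already supplies what the dual fitting requires. The genuine subtlety you should flag instead is that the dual constraints for a discarded job $j$ also cover the tail $t\in(d_j-\gap D_j,d_j]$, where no such invariant is available; covering that range is exactly what the $\frac{s}{s-\gap}$ scaling in the LMNY13 construction is for.
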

Observe that the competitive ratio for the multiple server case is (asymptotically) identical to the bound obtained for a single server. However, we note that the constants hidden inside $\Theta$ are slightly larger for the multiple-server case.

\section{Committed Scheduling}
\label{sec:Committed}

In this section we develop the first committed (i.e., responsive) scheduler for online scheduling with general job types, assuming deadline slackness. Our solution is based on a novel reduction of the problem to the ``familiar territory" of non-committed scheduling. 
We introduce a parameter $\omega \in (0,1)$ that affects the time by which the scheduler commits.
Specifically, the scheduler we propose decides whether to admit jobs during the first $(1-\omega)$-fraction of their availability window, i.e., by time $d_j - \omega (d_j - a_j)$ for each job $j$.
The deadline slackness assumption ($d_j - a_j \ge sD_j$) then implies that our scheduler is $(\omega s)$-responsive (cf.~Definition \ref{def:betaResponsive} for $\beta = \omega s$).

We start with the single server case (Section \ref{sec:Committed_SingleServer_Algorithm}), where we highlight the main mechanism design principles. We then extend our solution to accommodate multiple servers, which requires some subtle changes in our proof methodology (Section \ref{sec:Committed_MultipleServers}).

Our competitive-ratio results hold for slackness values greater than some threshold (e.g., $s>4$ for the single-server case). In Section \ref{sec:Committed_LowerBound}, we provide an indication that high slackness is indeed required, by obtaining a related impossibility result for inputs with small slackness.


\subsection{Reduction for a Single Server}
\label{sec:Committed_SingleServer_Algorithm}

Our reduction consists of two key components: (1) \emph{simulator}: a virtual server used to simulate an execution of a non-committed algorithm $\Alg$; and (2) \emph{server}: the real server used to process jobs. The speeds of the simulator and server are the same. We emphasize that the simulator does not utilize actual job resources. It is only used to determine which jobs to admit.
We use the simulator to simulate an execution of the non-committed algorithm. Upon arrival of a new job, we submit the job to the simulator with a \emph{virtual type}, defined below. If a job is completed on the simulator, then the committed scheduler admits it to the system and processes it on the server (physical machine). We argue later that the overall value gained by the algorithm is relatively high, compared to the value guaranteed by $\Alg$.

We pause briefly to highlight the challenges in such simulation-based approach. The underlying idea is to admit and process jobs on the server only after they are ``virtually" completed by $\Alg$ on the simulator. If the simulator completes all jobs near their actual deadlines, the scheduler might not be able to meet its commitments. This motivates us to restrict the latest time in which a job can be admitted. The challenge is to guarantee that all admitted jobs are completed, while still guaranteeing relatively high value.

We now provide more details on how the simulator and server are handled by the committed scheduler throughout execution.
\\

\noindent
\textbf{Simulator.}
The simulator runs an online non-committed scheduling algorithm $\Alg$.
Every arriving job $j$ is automatically sent to the simulator with a virtual type $\virtual{\type}_j = \langle v_j, \virtual{D}_j, a_j, \virtual{d}_j\rangle$, where $\virtual{d}_j = d_j - \omega (d_j - a_j)$ is the virtual deadline of $j$, and $\virtual{D}_j = \ffrac{D_j}{\omega}$ is the virtual demand of $j$.
If $\Alg$ completes the virtual request of job $j$ by its virtual deadline, then $j$ is admitted and sent to the server.
\\

\noindent
\textbf{Server.}
The server receives admitted jobs once they have been completed by the simulator, and processes them according to the Earliest Deadline First (EDF) allocation rule. That is, at any time $t$ the server processes the job with the earliest deadline out of all admitted jobs that have not been completed.
\\

The reduction effectively splits the availability window to two subintervals.  The first $(1-\omega)$ fraction is the first subinterval and the remainder is the second. 
The virtual deadline $\virtual{d}_j$ serves as the breakpoint between the two intervals.
During the first subinterval, the algorithm uses the simulator to decide whether to admit $j$ or not.  Then, at time $\virtual{d}_j$, it communicates the decision to the job. 
In practical settings, this may allow a rejected job to seek other processing alternatives during the remainder of the time. Furthermore, if $j$ is admitted, the scheduler is left with at least $\omega(d_j - a_j)$ time to process the admitted job on the server.

The virtual demand of each job $j$ is increased to $\ffrac{D_j}{\omega}$. We use this in our analysis to guarantee that the server meets the deadlines of admitted jobs. Note that we must require $\ffrac{D_j}{\omega} \le (1-\omega)sD_j$, otherwise $j$ could not be completed on the simulator. By rearranging terms, we get a constraint on the values of $s$ for which our algorithm is feasible: $s \ge \frac{1}{\omega(1-\omega)}$.



\subsubsection{Correctness}
\label{sec:Committed_SingleServer_Correctness}
We now prove that when the reduction is applied, each accepted job is guaranteed to finish by its deadline.
Note that the simulator can complete a job before its virtual deadline, hence it may be  admitted earlier. However, in the analysis below, we assume without loss of generality that jobs are admitted at their virtual deadline. Accordingly,
We define the \emph{admitted type} of job $j$ as $\admitted{\type}_j = \langle v_j, D_j, \virtual{d}_j, d_j\rangle$.

Recall that $\Alg_C(\type)$ represents the jobs completed by the committed algorithm. Equivalently, these are the jobs completed by the non-committed algorithm $\Alg$ on the simulator.
To prove that $\Alg_C$ can meet its guarantees, we must show that the EDF rule deployed by the server completes all jobs in $\Alg_C(\type)$, when submitted with their admitted types.
It is well known that for every set of jobs $S$, if $S$ can be feasibly allocated on a single server (i.e., before their deadline), then EDF produces a feasible schedule of $S$. Hence, it suffices to prove that there exists a feasible schedule of $\Alg_C(\type)$.
We prove the following general claim, which implies the correctness of our algorithm.

\begin{theorem}
\label{thm:Committed_SingleServer_Correctness}
Let $S$ be a set of jobs. For each job $j\in S$, define the virtual deadline of $j$ as $\virtual{d}_j = d_j - \omega(d_j - a_j)$.
If there exists a feasible schedule of $S$ on a single server with respect to the virtual types $\virtual{\type}_j = \big\langle v_j, \ffrac{D_j}{\omega},a_j,\virtual{d}_j\big\rangle$ for each $j\in S$,
then there exists a feasible schedule of $S$ on a single server with respect to the admitted types $\admitted{\type}_j = \big\langle v_j,D_j,\virtual{d}_j,d_j\big\rangle$ for each $j\in S$.
\end{theorem}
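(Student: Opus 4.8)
The statement is essentially a scheduling feasibility transfer: given a feasible schedule of $S$ where each job $j$ has inflated demand $D_j/\omega$ but must finish by the early deadline $\virtual{d}_j = d_j - \omega(d_j-a_j)$, I want to produce a feasible schedule where each job has its true demand $D_j$ but only needs to finish between $\virtual{d}_j$ and its true deadline $d_j$. I would prove this by exhibiting an explicit construction: take the given virtual schedule, and for each job $j$, ``compress'' the portion of its virtual processing into the window $[\virtual{d}_j, d_j]$. The key quantitative fact is that the true window $[\virtual{d}_j, d_j]$ has length $\omega(d_j - a_j)$, while the virtual window $[a_j, \virtual{d}_j]$ has length $(1-\omega)(d_j-a_j)$; and the virtual processing uses $D_j/\omega$ units of a server that also obeys a capacity-1 constraint. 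The natural move is to scale time.

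**Key steps.** First, I would set up the virtual schedule as functions $z_j(t) \in \{0,1\}$ for $t \in [a_j, \virtual{d}_j]$ with $\sum_j z_j(t) \le 1$ for all $t$ and $\int_{a_j}^{\virtual{d}_j} z_j(t)\,dt = D_j/\omega$ for each $j \in S$. Second, I would define a time-rescaling map $\phi$ that sends the whole virtual timeline onto a compressed copy, and in fact the cleanest approach is: apply the affine map $t \mapsto a_j + \omega(t - a_j)$ — wait, that moves different jobs inconsistently — so instead I would use a single global map. The right idea: restrict attention to the portion of the virtual schedule lying in $[\virtual{d}_j - \text{(something)}, \virtual{d}_j]$ is awkward because $\virtual{d}_j$ varies. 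I would instead argue by a global linear time-contraction $\psi(t) = a^{\min} + \omega\cdot(t - a^{\min})$ composed appropriately, OR — more robustly — invoke the standard result that EDF / a feasibility criterion can be checked via a ``demand bound'' condition: $S$ is feasibly schedulable on one unit-speed server with release times $r_j$, deadlines $\delta_j$, demands $p_j$ iff for every interval $[x,y]$, $\sum_{j : x \le r_j, \delta_j \le y} p_j \le y - x$. So I would (i) derive from the hypothesis that $\sum_{j: a_j \ge x,\ \virtual{d}_j \le y} D_j/\omega \le y-x$ for all $x \le y$, and (ii) show this implies $\sum_{j: \virtual{d}_j \ge x',\ d_j \le y'} D_j \le y' - x'$ for all $x' \le y'$, which gives feasibility of the admitted instance. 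Step (ii) is where the arithmetic $\virtual{d}_j - a_j = (1-\omega)(d_j - a_j)$ and $d_j - \virtual{d}_j = \omega(d_j - a_j)$ enters: a job contributing to the admitted-interval sum over $[x', y']$ has $d_j \le y'$ and $\virtual{d}_j \ge x'$, from which I can bound $a_j \ge \virtual{d}_j - (1-\omega)(d_j - a_j) \ge x' - \frac{1-\omega}{\omega}(d_j - \virtual{d}_j) \ge x' - \frac{1-\omega}{\omega}(y' - x')$, and map $[x',y']$ to a virtual interval of length $\frac{1}{\omega}(y'-x')$ containing all relevant $[a_j, \virtual{d}_j]$, so the hypothesis bounds the demand sum by $\frac{1}{\omega}(y'-x')$, and multiplying the demands back down by $\omega$ closes it.

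**Main obstacle.** The delicate point is step (ii): I must choose, for each test interval $[x', y']$ of the admitted instance, the correct virtual test interval $[x, y]$ so that every job $j$ with $\virtual{d}_j \ge x'$ and $d_j \le y'$ satisfies $a_j \ge x$ and $\virtual{d}_j \le y$, while keeping $y - x$ as small as $\frac{1}{\omega}(y' - x')$. Verifying $a_j \ge x$ requires the worst-case job — the one with the largest availability window and hence the earliest $a_j$ relative to $\virtual{d}_j$ — and this is exactly where the slackness-free identities $\virtual{d}_j - a_j = (1-\omega)(d_j-a_j)$ must be combined with the constraints $d_j \le y'$, $\virtual{d}_j \ge x'$ to land the bound. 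I expect this interval-selection argument (and checking the endpoints $x \le a_j$, $\virtual{d}_j \le y$ uniformly) to be the only real content; everything else is the standard demand-bound feasibility criterion for single-machine preemptive scheduling, which I would cite rather than reprove. If a direct explicit-schedule construction is preferred over the demand-bound criterion, the alternative is to process each job $j$ in $[\virtual{d}_j, d_j]$ using EDF and show no deadline is missed by the same counting, but the demand-bound route is cleaner.
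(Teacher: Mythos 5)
Your proposal is correct, and it takes a genuinely different route from the paper's proof. The paper explicitly constructs a schedule: it processes jobs in decreasing order of virtual deadline, placing each job $j$ in the latest available $D_j$ free time units before $d_j$, and then proves feasibility by introducing the interval $[t_L, t_R]$ (where $[d_j, t_R]$ is the maximal fully-used block starting at $d_j$, and $t_L = a_j - \tfrac{1-\omega}{\omega}\ell_R$) and showing that the virtual-feasibility hypothesis bounds the total demand placed there. Your proof instead invokes the standard processor-demand feasibility criterion for preemptive single-machine scheduling: a job set with release times $r_j$, deadlines $\delta_j$, demands $p_j$ is feasible iff for every interval $[x,y]$, $\sum_{j:\,[r_j,\delta_j]\subseteq [x,y]} p_j \le y-x$. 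You derive the admitted-instance criterion from the virtual-instance criterion by mapping each admitted test interval $[x',y']$ to the virtual interval $[x,y] = \bigl[\tfrac{1}{\omega}x' - \tfrac{1-\omega}{\omega}y',\, y'\bigr]$ of length $\tfrac{1}{\omega}(y'-x')$, checking that any $j$ with $\virtual{d}_j \ge x'$ and $d_j \le y'$ satisfies $a_j \ge x$ (via $a_j = \tfrac{1}{\omega}\virtual{d}_j - \tfrac{1-\omega}{\omega}d_j$) and $\virtual{d}_j \le y$; the bound $\sum D_j/\omega \le \tfrac{1}{\omega}(y'-x')$ then rescales to $\sum D_j \le y'-x'$. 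Both arguments ultimately rest on the same affine identity relating $a_j$, $\virtual{d}_j$, $d_j$, but yours is shorter and more modular because it factors out the schedulability characterization instead of reproving a local version of it via the explicit latest-fit construction. The tradeoff is that yours relies on citing the demand-bound criterion whereas the paper's is self-contained.
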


\begin{proof}
We describe an allocation algorithm that generates a feasible schedule of $S$ with respect to admitted types. That is, the algorithm produces a schedule where a each job $j\in S$ is processed for $D_j$ time units inside the time interval $[\virtual{d}_j, d_j]$.
The algorithm we describe allocates jobs in decreasing order of their virtual deadlines. For two jobs $j,j'\in S$, we write $j' \succ j$ when $\virtual{d}_{j'} > \virtual{d}_j$. In each iteration, the algorithm considers some job $j\in S$ by the order induced by $\succ$, breaking ties arbitrarily. We say that time $t$ is \emph{used} when considering $j$ if the algorithm has allocated some job $j'$ at time $t$; otherwise, we say that $t$ is \emph{free}. We denote by $\mathcal{U}_j$ and $\mathcal{F}_j$ the set of used and free times when the algorithm considers $j$, respectively.
The algorithm works as follows. Consider an initially empty schedule.
We iterate over jobs in $S$ in decreasing order of their virtual deadlines, breaking ties arbitrarily; this order is induced by $\succ$. Each job $j$ in this order is allocated during the latest possible $D_j$ free time units.
Formally, define $t' = \arg\max\{ t : \left| [t,d_j] \cap \mathcal{F}_j \right| = D_j \}$ as the latest time such that there are exactly $D_j$ free time units during $[t', d_j]$. The algorithm allocates $j$ during those free $D_j$ time units $[t',d_j] \cap \mathcal{F}_j$.

We now prove that the algorithm returns a feasible schedule of $S$, with respect to the admitted job types.
It is enough to show that when a job $j\in S$ is considered by the algorithm, there is enough free time to process it; namely, there should be at least $D_j$ free time units during $[\virtual{d}_j,d_j]$. Consider the point where the algorithm allocates a job $j\in S$.
Define $\ell_R = \max\{ \ell \mid [d_j,d_j + \ell] \subseteq \mathcal{U}_j \}$ and denote $t_R = d_j + \ell_R$.
By definition, the time interval $[d_j,t_R]$ is the longest continuous block that starts at $d_j$ in which all times $t\in [d_j,t_R]$ are used.
Define $t_L = a_j - \ell_R \cdot \ffrac{(1-\omega)}{\omega}$.
We claim that any job $j' \succ j$ allocated in the interval
$[\virtual{d}_j,t_R]$ must satisfy $[a_{j'},d_{j'}] \subseteq [t_L,t_R]$. Assume the claim holds. We show how the claim leads to the theorem. Denote by $J_{LR}$ all jobs $j' \succ j$ that have been allocated sometime during the interval $[\virtual{d}_j,t_R]$. Obviously, we also have $[a_j,d_j] \subseteq [t_L,t_R]$.
Now, since we know there exists a feasible schedule of $S$ with respect to the virtual types, we can conclude that the total virtual demand of jobs in $J_{LR} \cup \{j\}$ is at most $t_R - t_L$, since the interval $[t_L,t_R]$ contains the availability windows of all these jobs.
Notice that $t_R - t_L = \ffrac{(t_R - \virtual{d}_j)}{\omega}$.
Since the virtual demand is $\ffrac{1}{\omega}$ times larger than the admitted demand, we can conclude that the total amount of used time slots during $[\virtual{d}_j,t_R]$ is at most $(t_R - \virtual{d}_j) - D_j$.
Thus, there have to be $D_j$ free time units during $[\virtual{d}_j,d_j]$ since $[d_j,t_R]$ is completely full.
%
It remains to prove the claim. Let $j' \in J_{LR}$.
Notice that $d_{j'} \le t_R$; otherwise, the allocation algorithm could have allocated $j'$ after time $t_R$, and since we assume $j'$ has been allocated sometime between $[\virtual{d}_j,d_j]$, this would contradict the definition of $t_R$. Also, $j' \succ j$ means $\virtual{d}_{j'} \ge \virtual{d}_j$. Therefore:
\begin{eqnarray*}
a_{j'}
   & \,\,=\,\, &
\frac{1}{\omega} \cdot \virtual{d}_{j'} - \frac{1-\omega}{\omega} \cdot d_{j'}
   \,\,\,\, \ge \,\,\,\,
\frac{1}{\omega} \cdot \virtual{d}_j - \frac{1-\omega}{\omega} \cdot t_R
\\
   & \,\,=\,\, &
\frac{1}{\omega} \cdot d_j - (d_j - a_j) - \frac{1-\omega}{\omega} \cdot (d_j + \ell_R)
   \,\,\,\,\, = \,\,\,\,\,
a_j - \frac{1-\omega}{\omega} \cdot \ell_R
   \,\,\,\, = \,\,\,\,
t_L
\end{eqnarray*}
which completes the proof.
\end{proof}

\begin{figure}
  \begin{center}
  \includegraphics[width=\textwidth]{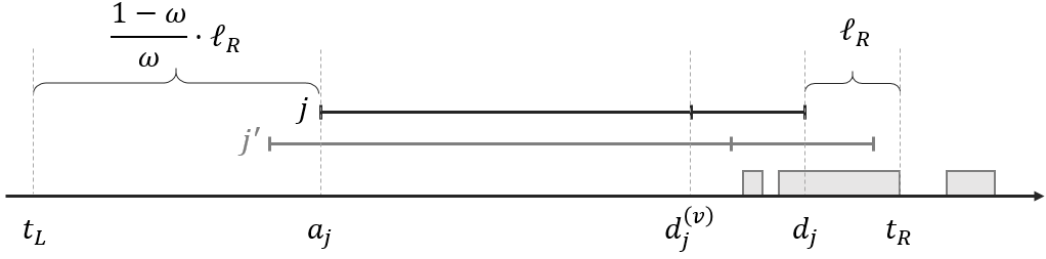}
  \caption{Illustration of the proof to Theorem \ref{thm:Committed_SingleServer_Correctness}.}
  \label{fig:CommittedProof}
  \end{center}
\end{figure}


\subsubsection{Competitive Ratio}
\label{sec:Committed_SingleServer_CompetitiveRatio}

We now analyze the competitive ratio obtained via the single server reduction. The competitive ratio is bounded using dual fitting arguments. Specifically, for every instance $\type$ with slackness $s=s(\type)$, we construct a feasible dual solution $(\alpha,\beta)$ with dual cost proportional to $v(\Alg_C(\type))$, the total value gained by $\Alg_C$ on $\type$.
Recall the dual constraints \eqref{eq:DualConstraint} corresponding to types $\tau_j = \big\langle v_j, D_j, a_j, d_j\big\rangle$. For the single server case, we make two simplifications. First, we denote $\beta(t) = \beta_1(t)$ to simplify notation. Second, we assume that $\pi = 0$ without loss of generality\footnote{This assumption is valid due to the redundancy of the primal constraints corresponding to $\pi$ for a single server.}. The dual constraints corresponding to $\type$ reduce to:
\begin{alignat}{5}
& \quad \,\,\,\alpha_j + \beta(t) \,\,\ge\,\, \vd_j & \,\,\,\,\, & \quad \forall j\in\JobInput ,\, t\in \big[a_j,d_j\big]. \label{eq:DualConstraintSingleServer}
\end{alignat}
Our goal is to construct a dual solution which satisfies \eqref{eq:DualConstraintSingleServer} and has a dual cost of at most $r \cdot v(\Alg_C(\type))$ for some $r$. Note that $v(\Alg_C(\type)) = v(\Alg(\virtual{\type}))$. To do so, we transform a dual solution corresponding to virtual types $\virtual{\type}$ to a dual solution satisfying \eqref{eq:DualConstraintSingleServer}. The dual constraints corresponding to the virtual types are:
\begin{alignat}{5}
& \quad \,\,\,\alpha_j + \beta(t) \,\,\ge\,\, \omega \vd_j & \,\,\,\,\, & \quad \forall j\in\JobInput ,\, t\in \big[a_j,\virtual{d}_j\big] \label{eq:DualConstraintVirtualSingleServer}
\end{alignat}
Assume that the non-committed algorithm $\Alg$ induces an upper bound on $\IntegralityGap(\virtual{s})$, where $\virtual{s} = s \cdot \omega(1-\omega)$ is the slackness of the virtual types $\virtual{\type}$. This implies that the optimal dual solution $(\alpha^*,\beta^*)$ satisfying \eqref{eq:DualConstraintVirtualSingleServer} has a dual cost of at most
$\CompetitiveRatio_{\Alg}(\virtual{s}) \cdot v(\Alg(\virtual{\type})) = \CompetitiveRatio_{\Alg}(\virtual{s}) \cdot v(\Alg_{C}(\type))$.
Yet, $(\alpha^*,\beta^*)$ satisfies \eqref{eq:DualConstraintVirtualSingleServer}, while we require a solution that satisfies \eqref{eq:DualConstraintSingleServer}.
To construct a feasible dual solution corresponding to the original job types $\type$, we perform two transformations on $(\alpha^*,\beta^*)$ called \emph{stretching} and \emph{resizing}.

%
%
\begin{lemma}[Resizing Lemma]
\label{thm:ResizingLemma}
Let $(\alpha,\beta)$ be a feasible solution for the dual program corresponding to a type profile $\type_j = \big\langle v_j,D_j,a_j,d_j\big\rangle$. There exists a feasible solution $(\alpha',\beta')$ for the dual program with demands $D'_j = f\cdot D_j$ for some $f > 0$, with a dual cost of:
\begin{eqnarray*}
\sum_{j\in\JobInput} D'_j \alpha'_j  \,+\, \intop_{0}^{\infty} \beta'(t) dt
   & = &
\sum_{j\in\JobInput} D_j \alpha_j  \,+\,  \frac{1}{f} \cdot \intop_{0}^{\infty} \beta(t) dt.
\end{eqnarray*}
\end{lemma}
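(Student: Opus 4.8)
The plan is to transform $(\alpha,\beta)$ into a candidate solution for the scaled instance simply by rescaling the $\alpha$ variables and leaving $\beta$ essentially untouched, then verify feasibility and compute the new cost. Concretely, since the demand of each job changes from $D_j$ to $D'_j = f D_j$, the dual constraint \eqref{eq:DualConstraintSingleServer} for the scaled instance reads $\alpha'_j + \beta'(t) \ge \rho'_j = v_j / D'_j = \rho_j / f$ on the interval $[a_j,d_j]$ (the time window is unchanged, since only the demand is scaled). So the natural guess is to set $\alpha'_j = \alpha_j / f$ and $\beta'(t) = \beta(t) / f$: then $\alpha'_j + \beta'(t) = (\alpha_j + \beta(t))/f \ge \rho_j / f = \rho'_j$, using feasibility of $(\alpha,\beta)$ for the original instance. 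Nonnegativity of $\alpha',\beta'$ is immediate from $f>0$ and nonnegativity of $\alpha,\beta$.

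Next I would compute the cost. With $\alpha'_j = \alpha_j/f$ and $D'_j = fD_j$, the job term is $\sum_j D'_j \alpha'_j = \sum_j (fD_j)(\alpha_j/f) = \sum_j D_j \alpha_j$, which is exactly the first term claimed. For the server term, $\int_0^\infty \beta'(t)\,dt = \frac1f \int_0^\infty \beta(t)\,dt$, which matches the $\frac1f \int_0^\infty \beta(t)\,dt$ in the statement. Hence the total cost is $\sum_j D_j\alpha_j + \frac1f \int_0^\infty \beta(t)\,dt$, as required, and $(\alpha',\beta')$ is feasible for the dual program with demands $D'_j = fD_j$.

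I do not expect a genuine obstacle here — the lemma is essentially a bookkeeping identity exploiting the homogeneity of the LP in the $\alpha$/demand direction. The one point that warrants care is making sure the time windows $[a_j,d_j]$ (equivalently $[a_j,\tilde d_j]$ in the virtual setting) are exactly preserved under the resizing operation, so that the set of constraints to be satisfied is in bijective correspondence before and after; since only $D_j$ is multiplied by $f$, this is clear. (In the multi-server setting one would carry the same argument through constraint \eqref{eq:DualConstraint}, scaling each $\beta_i$ by $1/f$ and, if used, $\pi_j$ by $1/f$ as well, but for the single-server application at hand $\pi=0$ and a single $\beta$ suffice.) The subsequent "stretching" transformation — which adjusts the time axis to convert the virtual deadline $\tilde d_j$ back to $d_j$ and account for the slackness change — is handled separately; this lemma only needs the demand rescaling.
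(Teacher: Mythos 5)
Your proof is correct and matches the paper's argument exactly: both set $\alpha'_j = \alpha_j/f$ and $\beta'(t) = \beta(t)/f$, observe that the scaled value density is $\vd_j/f$ so feasibility is preserved, and note that $D'_j\alpha'_j = D_j\alpha_j$ while the $\beta$-integral picks up the $1/f$ factor.
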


\begin{proof}
Notice that the value density corresponding to $D'_j = f\cdot D_j$ is $\vd'_j = \ffrac{\vd_j}{f}$. Hence, by setting $\alpha'_j = \ffrac{\alpha_j}{f}$ for every job $j\in\JobInput$ and $\beta(t) = \ffrac{\beta(t)}{f}$ for every time $t$, we obtain a feasible dual solution corresponding to resized demands $D'_j$. The dual cost is as stated since $D'_j \alpha'_j = D_j \alpha_j$ for every job $j$.
\end{proof}

%
%
\begin{lemma}[Stretching Lemma, \cite{LMNY13}]
\label{thm:StretchingLemma}
Let $(\alpha,\beta)$ be a feasible solution for the dual program corresponding to a type profile $\type_j = \big\langle v_j,D_j,a_j,d_j \big\rangle$. There exists a feasible solution $(\alpha',\beta')$ for the dual program with deadlines $d'_j = d_j + f\cdot (d_j - a_j)$ for some $f$, with a dual cost of:
\begin{eqnarray*}
\sum_{j\in\JobInput} D_j \alpha'_j  \,+\, \intop_{0}^{\infty} \beta'(t) dt
   & = &
\sum_{j\in\JobInput} D_j \alpha_j  \,+\,  (1+f) \cdot \intop_{0}^{\infty} \beta(t) dt.
\end{eqnarray*}
\end{lemma}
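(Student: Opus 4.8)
\emph{Overall plan.} I would leave the job variables untouched, setting $\alpha'_j := \alpha_j$ for every $j$, and take $\beta'$ to be the pointwise-smallest nonnegative function that satisfies the stretched dual constraints; all the real work then lies in bounding $\int\beta'$. Write $c_j := \vd_j - \alpha_j$. Feasibility of $(\alpha,\beta)$ together with $\beta\ge 0$ says exactly that $\beta(t)\ge C(t):=\max\big(\{c_j : a_j\le t\le d_j\}\cup\{0\}\big)$ for every $t$, so $\int_0^\infty\beta \ge \int_0^\infty C$. For the stretched instance (deadlines $d'_j=d_j+f(d_j-a_j)$) I put $\beta'(t):=C'(t):=\max\big(\{c_j : a_j\le t\le d'_j\}\cup\{0\}\big)$. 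Then $(\alpha',\beta')$ is feasible for the stretched dual: for $t\in[a_j,d'_j]$, either $c_j>0$, whence $\beta'(t)\ge c_j$ and $\alpha'_j+\beta'(t)\ge\vd_j$, or $c_j\le 0$, whence $\alpha'_j=\alpha_j\ge\vd_j$ already. Thus everything reduces to the single inequality $\int_0^\infty C'(t)\,dt \le (1+f)\int_0^\infty C(t)\,dt$: given it, the dual cost of $(\alpha',\beta')$ is at most $\sum_j D_j\alpha_j+(1+f)\int\beta$, which is exactly what the dual-fitting argument needs; and if one wants the verbatim equality in the statement, one simply raises $\beta'$ on an arbitrary positive-measure set — the constraints being one-sided, feasibility is preserved — until the integral hits $(1+f)\int\beta$.

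\emph{Reducing to a statement about intervals.} The function $C$ is a finite pointwise maximum of the ``boxes'' $c_j\,\mathbf{1}[a_j\le t\le d_j]$, so the layer-cake formula gives $\int_0^\infty C = \int_0^\infty\lvert U_\lambda\rvert\,d\lambda$ with $U_\lambda:=\bigcup_{j:\,c_j>\lambda}[a_j,d_j]$, and likewise $\int_0^\infty C' = \int_0^\infty\lvert U'_\lambda\rvert\,d\lambda$ with $U'_\lambda:=\bigcup_{j:\,c_j>\lambda}[a_j,d'_j]$. Hence it suffices to prove, for every level $\lambda$, that $\lvert U'_\lambda\rvert\le(1+f)\lvert U_\lambda\rvert$. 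This is now purely geometric: if each interval $[a,b]$ in a finite family is replaced by $[a,\,b+f(b-a)]$ — left endpoint frozen, right endpoint pushed out by $f$ times the interval's length — then the measure of the union grows by a factor at most $1+f$.

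\emph{Proving the geometric claim.} Write $U_\lambda$ as the disjoint union of its connected components $I_1=[p_1,q_1],\dots,I_m=[p_m,q_m]$. Each interval $[a_j,d_j]$ contributing to $U_\lambda$ lies inside a single component $I_{k(j)}$, so $p_{k(j)}\le a_j$ and $d_j\le q_{k(j)}$, and therefore $d'_j=d_j+f(d_j-a_j)\le q_{k(j)}+f(q_{k(j)}-p_{k(j)})$, i.e.\ $[a_j,d'_j]\subseteq[p_{k(j)},\,q_{k(j)}+f(q_{k(j)}-p_{k(j)})]$. Taking the union over $j$ yields $U'_\lambda\subseteq\bigcup_{k=1}^m[p_k,\,q_k+f(q_k-p_k)]$, and by subadditivity of Lebesgue measure $\lvert U'_\lambda\rvert\le\sum_{k=1}^m(1+f)(q_k-p_k)=(1+f)\lvert U_\lambda\rvert$. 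Integrating over $\lambda$ completes the argument.

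\emph{Where the difficulty is.} The tempting shortcut — a single global rescaling $t\mapsto t/(1+f)$ of the time axis, i.e.\ $\beta'(t):=\beta(t/(1+f))$ — does not work: it sends the stretched window $[a_j,d'_j]$ to $[a_j/(1+f),\,d_j]$, which fails to cover $[a_j,d_j]$ whenever $a_j>0$, and the obvious local patches (sliding-window suprema and the like) inflate $\int\beta'$ by an unbounded factor. The point of the lemma is that one should not transform $\beta$ at all, but pass to its superlevel sets and observe that within each connected component the contributing intervals have \emph{frozen} left endpoints, so the component can only overflow on the right, and by no more than $f$ times its own length — giving the factor $1+f$ componentwise and hence globally. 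Two caveats to record: the argument uses $f\ge 0$ (the only regime needed, via $f=\omega/(1-\omega)$ in the reduction), and it uses that $\JobInput$ is finite, so the $I_k$ are honest intervals and the layer-cake bookkeeping is elementary.
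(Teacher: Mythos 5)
The paper cites \cite{LMNY13} for the Stretching Lemma and does not reprove it, so there is no in-paper proof to compare against; I evaluate your argument on its own terms, and it is correct. The key moves — replacing $\beta$ by the lower envelope $C(t) = \max\bigl(\{\vd_j - \alpha_j : t \in [a_j, d_j]\} \cup \{0\}\bigr)$, passing to the superlevel sets $U_\lambda$ via the layer-cake formula, and observing that inside each connected component $[p_k, q_k]$ of $U_\lambda$ every contributing interval $[a_j,d_j]$ has $a_j \ge p_k$ and $d_j - a_j \le q_k - p_k$, so the stretched union cannot extend past $q_k + f(q_k - p_k)$ — are exactly what the inequality $\lvert U'_\lambda \rvert \le (1+f)\lvert U_\lambda\rvert$ requires, and integrating over $\lambda$ then gives $\int C' \le (1+f)\int C \le (1+f)\int\beta$. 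Your diagnosis of why a global time-dilation fails (arrival times are anchored, so $[a_j, d'_j]$ does not map back onto $[a_j, d_j]$ when $a_j>0$) is the right motivating observation, and the layer-cake-plus-components argument is the natural way around it. Two small notes. First, as you flag, the argument needs $f \ge 0$ (both in $d'_j \le q_k + f(q_k - p_k)$ and to make the stretch an actual enlargement); the lemma's phrase ``for some $f$'' is imprecise, but the paper only ever invokes it with $f = \omega/(1-\omega) > 0$ or $f = 4$. Second, you establish the $\le$ direction of the stated equality, which is all the dual-fitting argument consumes; your remark that $\beta'$ can be padded on a positive-measure set to achieve literal equality is sound because the dual constraints are one-sided.
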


These two lemmas allow us to bound the competitive ratio of $\Alg_C$.


\begin{theorem}
\label{thm:Committed_SingleServer_CompetitiveRatio}
Let $\Alg$ be a single server scheduling algorithm that induces an upper bound on the integrality gap $\emph{\IntegralityGap}(\virtual{s})$ for $\virtual{s} = s \cdot \omega(1 - \omega)$ and $\omega\in(0,1)$. Let $\Alg_C$ be the committed algorithm obtained by the single server reduction. Then $\Alg_C$ is $\omega s$-responsive and
\begin{eqnarray*}
\emph{\CompetitiveRatio}_{\Alg_C}(s) & \le & \frac{\emph{\CompetitiveRatio}_{\Alg}\Big( s \cdot \omega(1 - \omega) \Big)}{\omega(1-\omega)}
\,\,\,\,\,\,\, , \,\,\,\,\,\,\, s > \frac{1}{\omega (1-\omega)}.
\end{eqnarray*}
\end{theorem}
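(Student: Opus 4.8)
The statement splits into a responsiveness claim and a competitive-ratio bound, and I would treat them separately. The $\omega s$-responsiveness is immediate from the construction: the committed scheduler communicates its admit/reject decision for a job $j$ no later than its virtual deadline $\virtual{d}_j = d_j - \omega(d_j - a_j)$, and the slackness hypothesis $d_j - a_j \ge s D_j$ gives $\virtual{d}_j \le d_j - (\omega s) D_j$; moreover Theorem~\ref{thm:Committed_SingleServer_Correctness} guarantees that every admitted job (the admitted set being exactly $\Alg_C(\type)$, which $\Alg$ scheduled feasibly with respect to the virtual types, hence feasibly with respect to the admitted types, hence completed by EDF) actually finishes by $d_j$. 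That is $\omega s$-responsiveness in the sense of Definition~\ref{def:betaResponsive}. For the competitive ratio I would use dual fitting (Theorem~\ref{thm:DualFitting}): fix an instance $\type$ with $s(\type) = s$ and produce a feasible solution of the single-server dual constraints~\eqref{eq:DualConstraintSingleServer} of cost at most $\bigl(\omega(1-\omega)\bigr)^{-1}\CompetitiveRatio_{\Alg}(\virtual{s})\cdot v(\Alg_C(\type))$, where $\virtual{s} = s\,\omega(1-\omega)$.

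The starting point is the virtual instance $\virtual{\type}$ submitted to the simulator. Each virtual job has demand $\virtual{D}_j = D_j/\omega$ and window $\virtual{d}_j - a_j = (1-\omega)(d_j - a_j)$, so the instance's slackness is exactly $\virtual{s} = s\,\omega(1-\omega)$ — the job attaining the minimum ratio in $\type$ maps to one attaining $\virtual{s}$, and every other virtual job has slackness at least $\virtual{s}$ — and $s > \bigl(\omega(1-\omega)\bigr)^{-1}$ ensures $\virtual{s} > 1$, so the reduction (and $\Alg$) is well defined. Since the jobs completed by $\Alg_C$ are exactly those $\Alg$ completes on the simulator, and virtualization does not change values, $v(\Alg_C(\type)) = v(\Alg(\virtual{\type}))$. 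As $\Alg$ induces an upper bound on $\IntegralityGap(\virtual{s})$, there is a feasible solution $(\alpha^{*},\beta^{*})$ of the virtual dual constraints~\eqref{eq:DualConstraintVirtualSingleServer} (value density $\omega\vd_j$, deadline $\virtual{d}_j$) of cost $\sum_j \virtual{D}_j\alpha^{*}_j + \intop_0^{\infty}\beta^{*}(t)\,dt \le \CompetitiveRatio_{\Alg}(\virtual{s})\cdot v(\Alg(\virtual{\type}))$.

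Now I would apply the two supplied transformations in sequence. First the Resizing Lemma (Lemma~\ref{thm:ResizingLemma}) with factor $f = \omega$, passing from the virtual demands $\virtual{D}_j = D_j/\omega$ to the true demands $D_j$: this keeps each product $D_j\alpha_j$ fixed, multiplies the $\beta$-integral by $1/\omega$, and (since the resized value density is $\omega\vd_j/\omega = \vd_j$) converts the constraints to $\alpha_j + \beta(t)\ge\vd_j$ for $t\in[a_j,\virtual{d}_j]$. Then the Stretching Lemma (Lemma~\ref{thm:StretchingLemma}) with factor $f = \omega/(1-\omega)$, extending each deadline from $\virtual{d}_j$ to $\virtual{d}_j + \tfrac{\omega}{1-\omega}(\virtual{d}_j - a_j) = \virtual{d}_j + \omega(d_j - a_j) = d_j$: this leaves the $\alpha$'s fixed, multiplies the $\beta$-integral by $1+f = 1/(1-\omega)$, and produces a feasible solution of~\eqref{eq:DualConstraintSingleServer}. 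Its cost is $\sum_j \virtual{D}_j\alpha^{*}_j + \tfrac{1}{\omega(1-\omega)}\intop_0^{\infty}\beta^{*}(t)\,dt$; since $\omega(1-\omega)\le\tfrac14$, the factor $\tfrac{1}{\omega(1-\omega)}\ge 1$, so this is at most $\tfrac{1}{\omega(1-\omega)}$ times the original cost $\sum_j \virtual{D}_j\alpha^{*}_j + \intop_0^{\infty}\beta^{*}(t)\,dt$, hence at most $\bigl(\omega(1-\omega)\bigr)^{-1}\CompetitiveRatio_{\Alg}(\virtual{s})\, v(\Alg_C(\type))$. Applying Theorem~\ref{thm:DualFitting} then gives $\CompetitiveRatio_{\Alg_C}(s)\le \CompetitiveRatio_{\Alg}\bigl(s\,\omega(1-\omega)\bigr)\big/\bigl(\omega(1-\omega)\bigr)$.

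I do not expect a genuine obstacle, since the real work is already carried by the Resizing and Stretching Lemmas and by the correctness theorem; the only delicate point is the arithmetic of the two factors — in particular verifying that $f=\omega/(1-\omega)$ is exactly the stretch that moves $\virtual{d}_j$ to $d_j$ (using $\virtual{d}_j - a_j = (1-\omega)(d_j - a_j)$), that feasibility over $[a_j,\virtual{d}_j]$ is genuinely carried over to the enlarged window $[a_j,d_j]$ by the Stretching Lemma, and that the virtual instance's slackness equals $\virtual{s}$ exactly, so that $\CompetitiveRatio_{\Alg}(\virtual{s})$ is the right quantity to substitute.
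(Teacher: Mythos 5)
Your proof is correct and follows essentially the same route as the paper: responsiveness by noting the commitment deadline $\virtual{d}_j = d_j - \omega(d_j - a_j) \le d_j - \omega s D_j$, then dual fitting by taking a near-optimal dual for the virtual instance and applying the Resizing Lemma (your $f=\omega$, giving the same $1/\omega$ cost blowup the paper describes) followed by the Stretching Lemma with $f=\omega/(1-\omega)$, and closing with Theorem~\ref{thm:DualFitting}. Your extra checks — that the virtual slackness is exactly $s\,\omega(1-\omega)$, that $s>1/(\omega(1-\omega))$ makes $\virtual{s}>1$, and that the blowup lands only on the $\beta$-integral so $1/(\omega(1-\omega))\ge 1$ is needed to bound the whole cost — are all correct and merely make explicit what the paper leaves implicit.
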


\begin{proof}
We first prove that the scheduler is $\omega s$-responsive. Note that each job $j$ is either committed or rejected by its virtual deadline $\virtual{d}_j = d_j - \omega(d_j - a_j)$. The deadline slackness assumption states that $d_j - a_j \ge sD_j$ for every job $j$. Hence, each job is notified by time $d_j - \omega s D_j$, as required.

We now bound the competitive ratio. Consider an input instance $\type$ and denote its slackness by $s=s(\type)$.
Let $\virtual{\type}$ denote the virtual types corresponding to $\type$, and let $\virtual{s}= s \cdot \omega(1-\omega)$ denote their slackness.
We prove the theorem by constructing a feasible dual solution $(\alpha,\beta)$ satisfying \eqref{eq:DualConstraintSingleServer} and bounding its total cost.
By the assumption on $\Alg$, the optimal fractional solution $(\alpha^*,\beta^*)$ corresponding to $\virtual{\type}$ has a dual cost of at most $\CompetitiveRatio_{\Alg}(\virtual{s}) \cdot v(\Alg(\virtual{\type})) = \CompetitiveRatio_{\Alg}(\virtual{s}) \cdot v(\Alg_C(\type))$. We transform $(\alpha^*,\beta^*)$ into a feasible solution $(\alpha,\beta)$ corresponding to $\type$ by applying the resizing lemma and the stretching lemma, as follows.
\begin{itemize}
  \item We first apply the resizing lemma for $f=\frac{1}{\omega}$ to cover the increased job demands during simulation. The dual cost increases by a multiplicative factor of $\frac{1}{\omega}$.
  \item We then apply the stretching lemma to cover the remaining constraints; that is, the times in the jobs' execution windows not covered by the execution windows of the virtual types. We choose $f$ such that $d_j = \virtual{d}_j + f \cdot \big( \virtual{d}_j - a_j\big)$; hence, $f = \frac{\omega}{1-\omega}$. As a result, the competitive ratio is multiplied by an additional factor of $1+f = \frac{1}{1 - \omega}$.
\end{itemize}
After applying both lemmas, we obtain a feasible dual solution that satisfies the dual constraints \eqref{eq:DualConstraintSingleServer}. The dual cost of the solution is at most
$\frac{1}{\omega(1-\omega)} \cdot \CompetitiveRatio_{\Alg}\big(s \cdot \omega (1 - \omega) \big) \cdot v(\Alg_{C}(\type))$. The theorem follows through the correctness of the dual fitting technique, Theorem \ref{thm:DualFitting}.
\end{proof}

Applying Theorem \ref{thm:Committed_SingleServer_CompetitiveRatio} to the single server scheduling algorithm $\Alg_T$ from Section \ref{sec:NonCommitted_Truthfulness} and choosing $\omega = 1/2$, one obtains a $(s/2)$-responsive scheduler with a competitive ratio that approaches $8$ as $s$ grows large.  However, we note that a more careful analysis, specific to the algorithm $\Alg_T$, leads to an improved bound (approaching $5$ as $s$ grows large).  This tighter analysis, which involves merging the dual-fitting techniques from Theorem \ref{thm:Committed_SingleServer_CompetitiveRatio} with the dual-fitting techniques used to bound the competitive ratio of $\Alg_T$, is described in Appendix \ref{sec:Appendix_Theorem_Statements}.




\subsection{Reductions for Multiple Servers}
\label{sec:Committed_MultipleServers}

We extend our single server reduction to incorporate multiple servers.
We distinguish between two cases based on the following definition.

\begin{definition}
A scheduler is called \emph{non-migratory} if it does not allow preempted jobs to resume their execution on different servers. That is, a job is allocated at most one server throughout its execution.
\end{definition}

Constant-competitive non-migratory schedulers are known to exist in the presence of deadline slackness \cite{LMNY13}. Given such a scheduler, we can easily construct a committed algorithm for multiple servers by extending the single server reduction; see full paper for details. 
However, we do not know how to use this reduction to obtain a committed scheduler which is truthful, since it requires that the non-committed scheduler is both truthful and non-migratory; unfortunately, we are not aware of such schedulers.

Therefore, we construct below a second reduction, which does not require a non-migratory non-committed scheduler. This is essential for Section \ref{sec:Committed_Truthful}, where we design a truthful committed scheduler.
We note that the first reduction leads to better competitive-ratio guarantees, hence should be preferred in domains where users are not strategic.

\subsubsection{Non-Migratory Case}
\label{sec:Committed_MultipleServers_NoMigration}
In the following, let $\Alg$ be a non-committed scheduler for multiple servers which is non-migratory. We extend our single server reduction to obtain a committed scheduler $\Alg_C$ for multiple servers. The reduction remains essentially the same: the simulator runs the non-committed scheduler on a system with $C$ virtual servers. When a job is completed on virtual server $i$, it is admitted and processed on server $i$. Each server runs the EDF rule on the jobs admitted to it. To prove correctness (i.e., the scheduler meets all commitments), we simply apply Theorem \ref{thm:Committed_SingleServer_Correctness} on each server independently. The bound on the competitive ratio obtained in Theorem \ref{thm:Committed_SingleServer_CompetitiveRatio} can be extended directly to the non-migratory model.

\begin{corollary}
\label{thm:Committed_MultipleServer_NonMigration_CompetitiveRatio}
Let $\Alg$ be a multiple server, non-migratory scheduling algorithm that induces an upper bound on the integrality gap $\emph{\IntegralityGap}(\virtual{s})$ for $\virtual{s} = s \cdot \omega(1 - \omega)$ and $\omega\in(0,1)$. Let $\Alg_C$ be the committed algorithm obtained by the multiple server reduction for non-migratory schedulers. Then $\Alg_C$ is $\omega s$-responsive and
\begin{eqnarray*}
\emph{\CompetitiveRatio}_{\Alg_C}(s) & \le & \frac{\emph{\CompetitiveRatio}_{\Alg}\Big( s \cdot \omega(1 - \omega) \Big)}{\omega(1-\omega)}
\,\,\,\,\,\,\, , \,\,\,\,\,\,\, s > \frac{1}{\omega (1-\omega)}.
\end{eqnarray*}
\end{corollary}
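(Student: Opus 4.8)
The plan is to reduce the multiple-server, non-migratory case back to the single-server analysis already carried out in Theorems \ref{thm:Committed_SingleServer_Correctness} and \ref{thm:Committed_SingleServer_CompetitiveRatio}, exploiting the fact that a non-migratory scheduler never moves a job between servers. First I would observe that responsiveness is immediate and identical to the single-server argument: every job $j$ is committed or rejected by its virtual deadline $\virtual{d}_j = d_j - \omega(d_j-a_j)$, and the slackness assumption $d_j - a_j \ge sD_j$ gives notification by time $d_j - \omega s D_j$, so $\Alg_C$ is $\omega s$-responsive regardless of the number of servers.

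For correctness, the key structural point is that in the non-migratory simulation, each virtual server $i$ runs an independent instance of the single-server problem: the jobs completed on virtual server $i$ form a set $S_i$ that was feasibly scheduled (by $\Alg$) on that one server with respect to the virtual types $\virtual{\type}_j = \langle v_j, D_j/\omega, a_j, \virtual{d}_j\rangle$. Applying Theorem \ref{thm:Committed_SingleServer_Correctness} to each $S_i$ separately yields a feasible schedule of $S_i$ on server $i$ with respect to the admitted types $\admitted{\type}_j = \langle v_j, D_j, \virtual{d}_j, d_j\rangle$; since EDF is optimal for single-server feasibility, the EDF rule on server $i$ completes all of $S_i$ by their deadlines. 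As the sets $\{S_i\}$ partition $\Alg_C(\type)$ and each admitted job is processed only on the server that completed it virtually, all commitments are met.

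For the competitive ratio, I would mirror the dual-fitting argument of Theorem \ref{thm:Committed_SingleServer_CompetitiveRatio}, now using the full (multi-server) dual program \eqref{eq:DualConstraint} with $\pi = 0$. By hypothesis $\Alg$ induces an upper bound on $\IntegralityGap(\virtual{s})$ with $\virtual{s} = s\cdot\omega(1-\omega)$, so the optimal dual solution $(\alpha^*,\beta^*)$ for the virtual types has cost at most $\CompetitiveRatio_\Alg(\virtual{s})\cdot v(\Alg(\virtual{\type})) = \CompetitiveRatio_\Alg(\virtual{s})\cdot v(\Alg_C(\type))$. The Resizing Lemma (Lemma \ref{thm:ResizingLemma}) and Stretching Lemma (Lemma \ref{thm:StretchingLemma}) are stated in a form that applies verbatim to the multi-server dual — resizing scales every $\alpha_j$ and every $\beta_i(t)$ by $1/f$, stretching scales every $\beta_i(t)$ by $1+f$ — so applying resizing with $f = 1/\omega$ and then stretching with $f = \omega/(1-\omega)$ converts $(\alpha^*,\beta^*)$ into a feasible solution for the original types $\type$ at a multiplicative cost of $\frac{1}{\omega}\cdot\frac{1}{1-\omega}$. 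Theorem \ref{thm:DualFitting} then gives $\CompetitiveRatio_{\Alg_C}(s) \le \CompetitiveRatio_\Alg(s\cdot\omega(1-\omega))/(\omega(1-\omega))$ for $s > \frac{1}{\omega(1-\omega)}$, as claimed.

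The main obstacle is really just verifying that the two reduction lemmas were proved at the level of generality needed — i.e., that resizing and stretching act multiplicatively on each $\beta_i(t)$ independently and do not interact across servers — so that the single-server dual-fitting bookkeeping carries over term-by-term; once that is in hand, the corollary is essentially a server-by-server application of the already-established single-server results, and no genuinely new estimate is needed.
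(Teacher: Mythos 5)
Your proposal is correct and mirrors the paper's own (very terse) justification of this corollary: apply Theorem \ref{thm:Committed_SingleServer_Correctness} server-by-server for correctness since the non-migratory schedule decomposes into $C$ independent single-server schedules, and carry the dual-fitting argument of Theorem \ref{thm:Committed_SingleServer_CompetitiveRatio} over by applying the resizing and stretching lemmas to the multi-server dual with $\pi = 0$. The only caveat you flag — that the two lemmas scale each $\beta_i(t)$ independently and hence extend to the multi-server dual — is indeed routine and is exactly what the paper implicitly relies on.
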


Applying Corollary \ref{thm:Committed_MultipleServer_NonMigration_CompetitiveRatio} to the non-migratory multiple-server algorithm presented in \cite{LMNY13} and setting $\omega = 1/2$, one obtains a $(s/2)$-responsive scheduling algorithm for multiple servers with competitive ratio $8 + \Theta\Big(\frac{1}{\sqrt[3]{s/4}-1}\Big) + \Theta\Big(\frac{1}{(\sqrt[3]{s/4}-1)^2}\Big)$.  As in Theorem \ref{thm:Committed_SingleServer_CompetitiveRatio}, one can achieve a tighter approximation factor of $5 + \Theta\Big(\frac{1}{\sqrt[3]{s/4}-1}\Big) + \Theta\Big(\frac{1}{(\sqrt[3]{s/4}-1)^2}\Big)$ using the details of the dual-fitting analysis from \cite{LMNY13}.  This gives the result described in Section~\ref{sec:Introduction_OurResults} as Theorem~\ref{thm.alg.responsive}.  The details of this improved analysis appear in Appendix \ref{sec:Appendix_Theorem_Statements}.

\subsubsection{Migratory Case}
\label{sec:Committed_MultipleServers_Migration}
We now assume that $\Alg$ allows migrations.  This will be important for truthful committed scheduling, explored in the next section.  Unfortunately, the reduction proposed for the non-migratory case does not work here. We explain why: consider some job $j$ that is admitted after being completed on the simulator; note that $j$ may have been processed on more than one virtual server. Our goal is to process $j$ by time $d_j$.
Assume each server runs the EDF rule on the jobs assigned to it, as suggested in Section \ref{sec:Committed_MultipleServers_NoMigration}.
Since $j$ has been processed on more than one virtual server, it is unclear how to assign $j$ to a server in a way that guarantees the completion of all admitted jobs.
One might suggest to assign each server $i$ the portion of $j$ that was processed on virtual server $i$. However, this does not necessarily generate a legal schedule. If each server runs EDF independently, a job might be allocated simultaneously on more than one server.

We propose the following modifications.
First, we will use a result of \cite{CLT05}, which shows that any set $\mathcal{S}$ of jobs that can be scheduled with migration on $C$ servers can also be scheduled without migration on $C$ servers with a speedup of $(3+2\sqrt{2}) \approx 5.828$.  Thus, if we increase the virtual demand of the jobs submitted to the simulator by this amount, then it will be possible to modify the resulting migratory schedule to be non-migratory.
Next, instead of running the EDF rule on each server independently, we run a global EDF rule. That is, at each time $t$ the system processes the (at most) $C$ admitted jobs with earliest deadlines. This is known as the EDF rule for multiple servers (also known as f-EDF \cite{Funk04}).
It is well known that the EDF rule is not optimal on multiple servers; formally, for a set $\mathcal{S}$ of jobs that can be feasibly scheduled on $C$ servers with migration, EDF does not necessarily produce a feasible schedule on input $\mathcal{S}$ \cite{HL89}. Nevertheless, it is known that EDF produces a feasible schedule of $\mathcal{S}$ when the servers are twice as fast \cite{PSTW97}.
Thus, since server speedup is directly linked with demand inflation, if we double the virtual demand of the jobs submitted to the simulator, we are guaranteed that EDF would produce a feasible schedule for the admitted jobs.
We will therefore modify the virtual demand of each job submitted to the simulator. The virtual demand of each job $j$ will be increased to $2(3+2\sqrt{2})\cdot \ffrac{D_j}{\omega}$. The additional factor of $3+2\sqrt{2} \approx 5.828$ is necessary for correctness, which is established in the following theorem.


\begin{theorem}
\label{thm:Committed_MultipleServer_Migration_CompetitiveRatio}
Let $\Alg$ be a multiple server scheduling algorithm that induces an upper bound on the integrality gap $\emph{\IntegralityGap}(\virtual{s})$ for $\virtual{s} = s \cdot \omega(1 - \omega)$ and $\omega\in(0,1)$. Let $\Alg_C$ be the committed algorithm $\Alg_C$ obtained by the multiple server reduction. Then $\Alg_C$ is $\omega s$-responsive and
\begin{eqnarray*}
\emph{\CompetitiveRatio}_{\Alg_C}(s)
  & \le &
\frac{11.656}{\omega (1-\omega)} \cdot \emph{\CompetitiveRatio}_{\Alg}\left(s \cdot \frac{\omega(1 - \omega)}{11.656} \right)
 \,\,\,\,\,\,\, , \,\,\,\,\,\,\,  s > \frac{11.656}{\omega (1-\omega)}.
\end{eqnarray*}
\end{theorem}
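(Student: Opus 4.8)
The plan is to mirror the single-server argument of Theorem~\ref{thm:Committed_SingleServer_CompetitiveRatio}, but with the virtual demand inflated by the extra factor $11.656 \approx 2(3+2\sqrt 2)$ that is needed to repair the migratory schedule. First I would establish correctness, i.e.\ that $\Alg_C$ meets all its commitments. Recall that the jobs completed by $\Alg_C$ are exactly those virtually completed by $\Alg$ on the simulator, run with virtual types $\virtual{\type}_j = \langle v_j, 2(3+2\sqrt 2)\cdot \ffrac{D_j}{\omega}, a_j, \virtual d_j\rangle$. Since $\Alg$ is migratory, the set $\Alg_C(\type)$ admits a feasible migratory schedule with respect to these (heavily inflated) virtual types. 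Factor out the two sources of inflation: by the result of \cite{CLT05}, a factor $(3+2\sqrt2)$ of the virtual demand suffices to convert this migratory schedule into a non-migratory one on $C$ servers; the remaining factor $2\cdot\ffrac{1}{\omega}$ then plays the role that $\ffrac{1}{\omega}$ played in the single-server proof. Concretely, applying Theorem~\ref{thm:Committed_SingleServer_Correctness} server-by-server to this non-migratory schedule (with the per-server virtual demand being $2\cdot \ffrac{D_j}{\omega}$) shows that each job $j$ can be scheduled for $2 D_j$ units inside $[\virtual d_j, d_j]$ on its assigned server. Finally, a set of jobs that is feasible on $C$ servers at double speed is scheduled correctly by the global EDF (f-EDF) rule at unit speed \cite{PSTW97}; equivalently, halving the per-job demand back to $D_j$ and running f-EDF produces a feasible schedule of $\Alg_C(\type)$ with respect to the admitted types $\admitted{\type}_j = \langle v_j, D_j, \virtual d_j, d_j\rangle$. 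Hence all admitted jobs complete by their deadlines, and the $\omega s$-responsiveness follows exactly as in Theorem~\ref{thm:Committed_SingleServer_CompetitiveRatio}: the deadline-slackness bound $d_j - a_j \ge s D_j$ gives notification by time $d_j - \omega s D_j$.

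For the competitive ratio I would again argue by dual fitting, transforming an optimal fractional dual solution for the virtual instance into a feasible dual solution for the original instance. Let $\virtual s = s\cdot\omega(1-\omega)/11.656$ be the slackness of the virtual types (the extra division by $11.656$ reflecting the demand inflation), and note $v(\Alg_C(\type)) = v(\Alg(\virtual\type))$. By the hypothesis that $\Alg$ induces an upper bound on $\IntegralityGap(\virtual s)$, the optimal dual solution $(\alpha^*,\beta^*,\pi^*)$ corresponding to the virtual instance has cost at most $\CompetitiveRatio_\Alg(\virtual s)\cdot v(\Alg(\virtual\type))$. Now apply the Resizing Lemma (Lemma~\ref{thm:ResizingLemma}) with $f = 11.656/\omega$ to undo the demand inflation: this multiplies the $\beta$-part of the dual cost by $11.656/\omega$ and leaves the $\alpha$-part unchanged, giving a feasible dual for the original demands $D_j$ but still on the virtual execution windows $[a_j,\virtual d_j]$. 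Then apply the Stretching Lemma (Lemma~\ref{thm:StretchingLemma}) with $f = \omega/(1-\omega)$, chosen so that $d_j = \virtual d_j + f(\virtual d_j - a_j)$, which extends the windows to $[a_j,d_j]$ at the cost of a further factor $1+f = 1/(1-\omega)$. The composed solution satisfies the original dual constraints, with total cost at most $\frac{11.656}{\omega(1-\omega)}\cdot\CompetitiveRatio_\Alg(\virtual s)\cdot v(\Alg_C(\type))$. Theorem~\ref{thm:DualFitting} then yields the claimed bound $\CompetitiveRatio_{\Alg_C}(s)\le \frac{11.656}{\omega(1-\omega)}\cdot\CompetitiveRatio_\Alg\!\left(s\cdot\frac{\omega(1-\omega)}{11.656}\right)$, valid whenever $\virtual s > 1$, i.e.\ $s > \frac{11.656}{\omega(1-\omega)}$ (the feasibility condition ensuring the jobs can be virtually completed).

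The main obstacle, I expect, is the correctness argument rather than the dual-fitting bookkeeping: one must carefully track how the two independent fixes — the \cite{CLT05} migration-removal with speedup $3+2\sqrt2$, and the \cite{PSTW97} EDF-with-speedup-$2$ guarantee — compose, and verify that it is legitimate to apply Theorem~\ref{thm:Committed_SingleServer_Correctness} per server \emph{after} the migratory-to-non-migratory conversion (so that each job lives on a single server with its inflated demand), and only \emph{then} invoke the global f-EDF speedup result on the admitted instance. Stated as constants this is clean, but justifying that the same $11.656$ factor simultaneously suffices for both steps — i.e.\ that inflating the virtual demand once by $2(3+2\sqrt2)$ really buys both the non-migratory reschedulability and the f-EDF feasibility — requires spelling out the chain of implications in the right order. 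The dual-fitting portion is essentially identical to the single-server case with $11.656/\omega$ in place of $1/\omega$ in the Resizing Lemma, so I would present it briefly and refer back to the proof of Theorem~\ref{thm:Committed_SingleServer_CompetitiveRatio}.
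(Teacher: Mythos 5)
Your proposal matches the paper's own proof in every essential respect: the correctness argument uses exactly the same four-step chain (simulator gives a migratory schedule with demand $11.656 D_j/\omega$, \cite{CLT05} converts it to a non-migratory one at demand $2D_j/\omega$, Theorem~\ref{thm:Committed_SingleServer_Correctness} applied per server yields a schedule at demand $2D_j$ on $[\virtual d_j, d_j]$, and \cite{PSTW97} then guarantees that global EDF completes the admitted jobs at demand $D_j$), and the competitive-ratio bound uses the same resizing-then-stretching composition of the dual solution. Two minor observations: your figure $\virtual s = s\,\omega(1-\omega)/11.656$ for the slackness of the virtual instance is in fact the correct one and agrees with the argument passed to $\CompetitiveRatio_\Alg$ in the theorem's conclusion, whereas the hypothesis as printed in the theorem ($\virtual s = s\,\omega(1-\omega)$) appears to be a carry-over slip from the single-server statement; and your ``$f = 11.656/\omega$'' in invoking the Resizing Lemma is the cost multiplier rather than the demand-scaling parameter $f$ as defined in Lemma~\ref{thm:ResizingLemma} (which would be $\omega/11.656$), but since you state the effect on the $\beta$-cost correctly this is only a notational mismatch, one the paper itself also makes.
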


\begin{proof}
Let $\mathcal{S}$ denote the set of jobs admitted by the committed algorithm $\Alg_C$ on an instance $\type$.
To prove correctness, we must show that there exists a feasible schedule in which each job $j\in\mathcal{S}$ is allocated $2D_j$ demand during $[\virtual{d}_j, d_j]$. If so, then \cite{PSTW97} implies that EDF completes all admitted jobs by their deadline. This follows since:
\begin{enumerate}
  \item There exists a feasible schedule of $\mathcal{S}$ with types $\langle v_j, \frac{11.656}{\omega} \cdot D_j, a_j, \virtual{d}_j \rangle$ on $C$ servers with migration. This is the ``simulator'' schedule produced by the non-committed algorithm $\Alg$.
  \item \cite{CLT05} proved that any set $\mathcal{S}$ of jobs that can be scheduled with migration on $C$ servers can also be scheduled without migration on $C$ servers with $5.828$-speedup. As a result, there exists a feasible non-migratory schedule of $\mathcal{S}$ with types $\langle v_j, \frac{2}{\omega} \cdot D_j, a_j, \virtual{d}_j \rangle$ on $C$ servers.
  \item By applying Theorem \ref{thm:Committed_SingleServer_Correctness} on each server separately, we obtain a feasible non-migratory schedule of $\mathcal{S}$ with types $\langle v_j, 2D_j, \virtual{d}_j, d_j \rangle$ on $C$ servers, as desired.
  \item Therefore, EDF produces a feasible schedule of the admitted jobs $\mathcal{S}$ with types $\langle v_j, D_j, \virtual{d}_j, d_j \rangle$.
\end{enumerate}
We note that step 4 (i.e., using EDF) is necessary.  Even though Steps 2 and 3 establishe that feasible non-migratory schedules of $\mathcal{S}$ exist, they cannot necessarily be generated online, unlike EDF.  The competitive ratio can be bounded by following the same steps as in the single server case (Theorem \ref{thm:Committed_SingleServer_CompetitiveRatio}), however the resizing lemma must be applied with $f=11.656\omega$.
Finally, note that the slackness $s$ must satisfy $s(1-\omega) \ge \ffrac{11.656}{\omega}$, otherwise jobs could not be completed on the simulator.
\end{proof}

We use this reduction in Section \ref{sec:Committed_Truthful} to design a truthful committed scheduler for multiple servers.

\subsection{Impossibility Result}
\label{sec:Committed_LowerBound}

The committed schedulers we construct guarantee a constant competitive ratio, provided that the deadline slackness $s$ is sufficiently large. For example, $s$ has to be at least $(\omega(1-\omega))^{-1}$ for the single server case, implying that $s>4$ (since $\omega=1/2$ minimizes the expression). A valid question is whether these conditions on $s$ are merely a consequence of our choice of construction, or an inherent property of any possible committed scheduler.
In this subsection, we provide some indication that the latter is more likely, by provider an impossibility result. In particular, we prove a lower bound for committed schedulers that satisfy an additional requirement, termed \emph{no early processing}. A no early processing scheduler is a scheduler that may not process jobs before committing to their execution. We note that the schedulers we have designed in this section satisfy this requirement. It is also worth mentioning that although we did not include no-early processing as part of our $\beta$-responsive commitment definition, this is a natural property to require in many practical settings; e.g., when there is a cost (of data transmission, etc.) associated with beginning the execution of a job.  
Our result is the following.
\begin{theorem}
\label{thm:Committed_LowerBound}
Consider a cluster with $C<4$ machines. Then any committed scheduler that satisfies the no-early processing requirement has an unbounded competitive ratio for $s<\ffrac{4}{C}$.
\end{theorem}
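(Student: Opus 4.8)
The plan is to construct an adversarial instance that forces any no-early-processing committed scheduler into a losing position. The key leverage is that such a scheduler must commit to a job \emph{before} it starts processing it, and that a commitment is irrevocable: every committed job must be finished by its deadline. Suppose for contradiction that some committed no-early-processing scheduler $\Alg$ achieves competitive ratio $\rho$ on all inputs with slackness $s < 4/C$. I would fix such an $s$ and a large integer $k$ (with $\rho$ and $k$ to be tuned against each other at the end), and describe an adaptive adversary that releases jobs in phases.

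The construction is a standard ``escalating value'' tree tailored to the commitment constraint. At time $0$, release $C$ identical jobs of size $D$, value $1$, and deadline $d = sD < 4D/C \le 4D$ (for $C=1$; scale appropriately for $C<4$). Because of the slackness bound, these jobs occupy essentially a $1/s$-fraction of each machine's timeline on $[0,d]$, so the scheduler cannot keep too many of them ``alive'' without committing. Now watch the scheduler: either it commits to (and hence must eventually process) some of these cheap jobs early, or it does not. If it commits early and begins processing, the adversary floods the system at a slightly later time with $C$ much-more-valuable jobs (value $V \gg \rho$) whose windows are nested inside the remaining time but are too tight — given the slackness constraint and the committed cheap jobs already eating machine time — to all be completed; the scheduler loses a factor $\approx V$. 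If instead the scheduler refuses to commit (keeps the cheap jobs only ``tentatively,'' doing no processing), then the adversary simply lets their deadlines pass with nothing released afterward, so OPT earns $C$ (complete all cheap jobs) while $\Alg$ earns $0$ on them; to avoid unbounded ratio $\Alg$ must therefore commit to most of them, which returns us to the first case. Iterating this dichotomy over $k$ levels, with geometrically growing values and geometrically shrinking (but always slackness-$s$-feasible) windows, forces $\CompetitiveRatio_\Alg(s) \ge g(s,k)$ where $g(s,k)\to\infty$ as $k\to\infty$ precisely because $s<4/C$ makes it impossible to ``pack'' a committed low level together with the high level on the available machines. Taking $k$ large contradicts the assumed constant $\rho$.

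The heart of the argument — and the place where the threshold $4/C$ must be extracted exactly — is the packing/feasibility lemma for a single level: one shows that if $s < 4/C$ then there is a choice of release time and window for the ``next level'' of $C$ expensive jobs such that (a) these jobs are individually slackness-$s$-feasible, (b) OPT can complete all of them (by having ignored the cheap level entirely), but (c) any scheduler that has committed to the current cheap level cannot complete all $C$ expensive ones, because the time already pledged to the cheap jobs plus the $C$ expensive demands exceeds the total machine-time available in the relevant interval. This is a counting argument on interval lengths: with $C$ machines, deadline $d=sD$, and $C$ committed jobs of size $D$, the committed load is $CD$ out of $Cd = CsD < 4D$ machine-time-units on $[0,d]$, and one arranges the expensive level's common window to be an interval of length just under $d$ inside which the residual free machine-time is provably less than $CD'$ where $D'$ is the expensive jobs' size. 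I expect the bookkeeping to make (a), (b), (c) simultaneously hold — choosing the release time of the expensive level, their size $D'$, and the nesting so that the slackness of \emph{every} job stays exactly $\ge s$ while the adversary's packing still bites — to be the main obstacle; everything else (the inductive escalation, the final choice of $k$ versus $\rho$, the reduction to the no-early-processing property) is routine once that single-level lemma is in hand. I would also handle the two sub-cases ``commits early'' versus ``does not commit'' uniformly by phrasing the invariant as: at the start of level $i$, either $\Alg$ has already lost a factor $\ge 2^i$, or $\Alg$ has committed to all $C$ jobs of level $i-1$ and is therefore constrained as in (c); the no-early-processing hypothesis is what rules out the escape route of ``process a bit, see what comes, then decide.''
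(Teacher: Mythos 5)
Your overall framework — an adaptive adversary that releases jobs with geometrically escalating values, forcing the no-early-processing committed scheduler into a dichotomy of ``commit and lose timeline'' versus ``refuse and lose value'' — is the right one, and your observation that no-early-processing is exactly what rules out hedging (``process a bit, then decide'') matches the role it plays in the paper's argument (it ensures that the residual committed workload at the moment of admission equals the full job demand). But the place you locate the threshold $4/C$ is wrong, and that is the heart of the proof.

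You claim the key is a \emph{single-level} packing lemma: committing to the cheap level leaves too little free machine-time to accommodate the next, more valuable level. That lemma is false as stated. Take $C=1$ and $s$ slightly below $4$. After committing to $\langle 1, 1, 0, s\rangle$, the server still has $s-1 \approx 3$ units of slack before time $s$, and a single expensive job of size $D'$ with $D' < 1/(s-1)$ and window nested inside $[0,s]$ can perfectly well be completed \emph{alongside} the committed cheap job. No single cheap/expensive pair produces a contradiction when $s$ is close to the threshold; the scheduler only dies after \emph{many} rounds, because the residual free time decays too fast. Concretely, the paper's adversary releases jobs one at a time (not $C$ per batch), always at the moment the previous job is admitted, each with the maximum demand $D_n = (s - a_n)/s$ that slackness permits, and all with a common deadline $s$. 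Letting $\Delta_n = s - t_n$ be the free time just before job $n$ is admitted, one obtains the recursion $\Delta_{n+1} \le \Delta_n - \Delta_{n-1}/s$, and the threshold $4$ appears when one analyzes the ratio $y_n = \Delta_{n-1}/\Delta_n$: the inequality $1 - 1/\alpha \le \alpha/4$ shows $y_n/y_{n-1} \ge 4/s > 1$ whenever $s<4$, so $y_n \to \infty$, contradicting the trivial upper bound $y_n \le s$. That asymptotic ratio argument — not a one-shot feasibility count — is where the constant $4$ lives. Without that recursion your plan has a genuine gap: the iteration over $k$ levels is asserted but the quantity $g(s,k)$ that should blow up is never defined, and the proposed single-level reason it would blow up is incorrect.

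Two further remarks. First, you would also need the analogue of the paper's Theorem for non-``natural'' algorithms (those that admit a new job before finishing previous commitments); the paper handles this by having the adversary idle for $|\ell_n|$ time whenever the algorithm jumps ahead, using work-preservation to restore the natural-algorithm invariant. Second, the paper does not re-run the whole construction with $C$ jobs per level; it proves the single-server $s<4$ bound and then reduces the $C$-server, $s<4/C$ case to it by dilating every time unit by a factor of $C$ — far simpler than redoing the recursion with batched releases. Your batched variant might be made to work, but it is extra machinery and you would still need to re-derive a recursion and extract the constant $4/C$ from it.
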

In view of Theorem \ref{thm:Committed_SingleServer_CompetitiveRatio}, note that this bound is tight for the single server case (under the no early processing requirement).
%
It remains an open question whether removing the no-early processing requirement could lead to bounded competitive ratio for a larger range of $s$. More generally, obtaining tighter lower bounds for multiple servers is a direction that is still unresolved.



\section{Truthful Committed Scheduling}
\label{sec:Committed_Truthful}

In this section we construct a scheduling mechanism that is both truthful and committed. As it turns out, the reductions presented in the previous section preserve monotonicity with respect to values, deadlines, and demands, but not necessarily with respect to arrival times. Therefore, by plugging in an existing truthful non-committed scheduler (Section \ref{sec:NonCommitted_Truthfulness}), we can obtain a committed mechanism that is truthful assuming all arrival times are publicly known.
In Section \ref{sec:Committed_Truthful_Full} we show how to modify the construction to achieve full truthfulness.


\subsection{Public Arrival Times}
\label{sec:Committed_Truthful_PublicArrivalTimes}

In this subsection, we consider the case where job arrival times are common knowledge, i.e., users cannot misreport the arrival times of their jobs.
To construct the partially truthful mechanism, we apply one of the reductions from committed scheduling to non-committed scheduling (Section \ref{sec:Committed}) on a truthful non-committed mechanism, which we denote by $\Alg_T$. We denote by $\Alg_{\tilde{T}C}$ the resulting mechanism. In the following, we prove that $\Alg_{\tilde{T}C}$ is \emph{almost} truthful: it is monotone with respect to values, deadlines, and demands, but not with respect to arrival times.


\begin{claim}
Let $\Alg_T$ be a truthful scheduling algorithm, and let $\Alg_{\tilde{T}C}$ be a committed mechanism obtained by applying one of the reductions from committed scheduling to non-committed scheduling (assume all required preconditions apply). Then, $\Alg_{\tilde{T}C}$ is monotone with respect to values, demands and deadlines.
\end{claim}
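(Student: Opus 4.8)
The plan is to show that monotonicity is inherited from $\Alg_T$ through the reduction, separately for each of the three parameters (value, demand, deadline), by tracking how a dominating deviation in the original type translates into a deviation in the virtual type submitted to the simulator, and then how admission on the simulator translates into completion on the real server.

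First I would recall the structure of the reduction: a job $j$ with reported type $\type_j = \langle v_j, D_j, a_j, d_j\rangle$ is submitted to the simulator with virtual type $\virtual{\type}_j = \langle v_j, c\cdot D_j/\omega, a_j, d_j - \omega(d_j-a_j)\rangle$ (with $c=1$ in the single-server/non-migratory case and $c = 11.656$ in the migratory case), and $j$ is completed by $\Alg_{\tilde T C}$ if and only if $\Alg_T$ completes the virtual request on the simulator. Since completion on the server is guaranteed for every admitted job (Theorems~\ref{thm:Committed_SingleServer_Correctness}, \ref{thm:Committed_MultipleServer_Migration_CompetitiveRatio}), $\Alg_{\tilde T C, j}(\type) = \Alg_{T,j}(\virtual\type)$ as $0/1$ indicators. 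So it suffices to check that for a fixed profile of the other agents, if agent $j$ deviates to a report $\type'_j$ that dominates $\type_j$ in value, demand, or deadline (keeping $a_j$ fixed), then the induced virtual type $\virtual\type'_j$ dominates $\virtual\type_j$ in the sense required by monotonicity of $\Alg_T$; then monotonicity of $\Alg_T$ gives $\Alg_{T,j}(\virtual\type'_j, \virtual\type_{-j}) \ge \Alg_{T,j}(\virtual\type_j, \virtual\type_{-j})$, and we are done. The key observations are: (i) increasing $v_j$ increases the virtual value and leaves $\virtual D_j, a_j, \virtual d_j$ unchanged; (ii) decreasing $D_j$ decreases $\virtual D_j = c D_j/\omega$ and changes nothing else; (iii) decreasing $d_j$ (with $a_j$ fixed) decreases $\virtual d_j = d_j - \omega(d_j - a_j) = (1-\omega)d_j + \omega a_j$, which is increasing in $d_j$, so a smaller deadline yields a smaller virtual deadline — i.e., the virtual type is dominated, which is the correct direction since $\type'_j$ dominated means $d'_j \le d_j$. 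In each case $\virtual\type'_j$ dominates $\virtual\type_j$, so monotonicity propagates. I would also note that the profile of virtual types of the other agents is unaffected by $j$'s deviation, so "for any declarations of the other agents'' is preserved.

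The one subtlety — and the step I expect to be the main obstacle — is the deadline case, because the transformation of $d_j$ into $\virtual d_j$ also rescales the virtual \emph{slackness}, and more importantly it changes the length of $j$'s virtual execution window $[a_j, \virtual d_j]$, which in turn affects which of the other jobs $j$ can interfere with on the simulator. I would argue that this is not actually an issue for \emph{monotonicity}: monotonicity of $\Alg_T$ already accounts for the effect of changing one job's window on that job's own completion, and since $\virtual\type'_j$ genuinely dominates $\virtual\type_j$, there is nothing further to check — the point is precisely that we do not need to reason about the interference pattern directly, only to invoke the black-box monotonicity of $\Alg_T$. The same remark covers why we must \emph{not} claim monotonicity in arrival time: shifting $a_j$ changes $\virtual d_j = (1-\omega)d_j + \omega a_j$ in the \emph{same} direction as $a_j$, so a later arrival (which "dominates'' in the arrival coordinate) yields a \emph{later} virtual deadline — a dominating deviation in the original type induces a non-dominating change in the virtual type (deadline moves the wrong way), which is exactly why the reduction fails to preserve arrival-time monotonicity and motivates Section~\ref{sec:Committed_Truthful_Full}. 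Finally, I would conclude by invoking Theorem~\ref{thm:truth}: monotonicity of $\Alg_{\tilde T C}$ in $(v_j, D_j, d_j)$ yields a payment rule making the mechanism truthful among agents who report arrival times honestly.
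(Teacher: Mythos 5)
Your proposal is correct and takes essentially the same approach as the paper: observe that $\Alg_{\tilde{T}C}$ completes $j$ exactly when $\Alg_T$ completes the virtual job $\virtual{\type}_j$, check that the map $(v_j, D_j, d_j) \mapsto (v_j, \alpha D_j, (1-\omega)d_j + \omega a_j)$ is componentwise monotone in the required directions, and then invoke the black-box monotonicity of $\Alg_T$. Your added discussion of why the deadline case is not problematic and why arrival time fails is a sound elaboration of the same argument rather than a different route.
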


\begin{proof}
Recall that upon an arrival of a new job $j$, the job is submitted to $\Alg_T$ with a virtual type of $\virtual{\type}_j = \langle v_j, \alpha D_j, a_j, \virtual{d}_j \rangle$ for some constant $\alpha \ge 1$ (the constant differs between the reductions for a single server and for multiple servers). Also recall that $\virtual{d}_j = d_j - \omega (d_j - a_j)$ is the virtual deadline of job $j$, which is a monotone function of $d_j$. Moreover, $\Alg_{\tilde{T}C}$ then completes job $j$ on input $\type$ precisely if $\Alg_T$ completes job $j$ on input $\virtual{\type}$. But since $\Alg_T$ is monotone, and since $v_j$, $\alpha D_j$, and $\virtual{d}_j$ are appropriately monotone functions of $v_j$, $D_j$, and $d_j$ (respectively), it follows that $\Alg_{\tilde{T}C}$ is monotone with respect to $v_j$, $D_j$, and $d_j$.
\end{proof}

\noindent
Hence, the reductions from committed to non-committed scheduling (Theorems \ref{thm:Committed_SingleServer_CompetitiveRatio} and \ref{thm:Committed_MultipleServer_Migration_CompetitiveRatio}) can be extended to guarantee truthfulness (public arrival times), as long as the given (non-committed) scheduler is monotone.

Recall that the definition of $\beta$-responsiveness for mechanisms requires not only that allocation decisions be made sufficiently early, but also that requisite payments be calculated in a timely fashion as well.  To obtain a $\beta$-responsive mechanism we must therefore establish that it is possible to compute payments at the time of commitment, for each job $j$.  Fortunately, because the time of commitment is independent of a job's reported value, this is straightforward.  At the time of commitment, it is possible to determine the lowest value at which the job would have been accepted (i.e., scheduled by the simulator). This critical value is the appropriate payment to guarantee truthfulness (see, e.g., \cite{HKMP05}), so it can be offered promptly.  

It is important to understand why $\Alg_{\tilde{T}C}$ may give incentive to misreport arrival times. Consider the single server case, take $\omega = 1/2$, and suppose there are two jobs $\type_1 = \langle v_1, D_1, a_1, d_1 \rangle = \langle 1, 1, 0, 8 \rangle$ and $\type_2 = \langle v_2, D_2, a_2, d_2 \rangle = \langle 10, 2, 0, 100 \rangle$. In this instance, job 1 would not be accepted: the simulator will process job $2$ throughout the interval $[0,4]$ (recall that demands are doubled in the simulation), blocking the execution of job $1$.  Since time $4$ is the virtual deadline of job $1$ (half of its execution window), the job will be rejected at that time. However, if job $1$ instead declared an arrival time of $4$, then the simulator would successfully complete the job by its virtual deadline of $6$, and the job would be accepted.

\subsection{Full Truthfulness}
\label{sec:Committed_Truthful_Full}

In this subsection, we explicitly construct a truthful, committed scheduling mechanism. The issue in the last example is that misreporting a later arrival time can lead to a later virtual deadline being used by the simulator. This ability to delay the virtual deadline can incentivize non-truthful reporting. We address this issue by imposing additional structure on the time intervals used for simulation. Given the reported job demand $D_j$ and execution window $[a_j, d_j]$, we determine a collection of subintervals of $[a_j, d_j]$ in which to run simulations.  If the simulator accepts the job in \emph{any} of these subintervals, we admit the job and process it in the subsequent interval; otherwise the job is rejected.  We will construct the subintervals in such a way that monotonicity is preserved: declaring a smaller execution window or a greater demand can lead only to less desirable simulation windows (i.e., subsets of the originals).

Truthfulness follows from the fact that the simulation parameters cannot be influenced beneficially by the reported arrival and departure times. The main technical challenge is to establish a competitive ratio bound for this modified solution; it turns out that the dual-fitting argument used to bound the competitive ratio of $\Alg_T$ in Section \ref{sec:NonCommitted_Truthfulness} can be modified to provide the necessary bounds. We end up with the following result.  A full proof, and a more formal description of the reduction, 
appears in Appendix \ref{sec:Appendix_Committed_Truthful}.


\begin{theorem}
\label{thm:Committed_FullTruthfulness}
There exist constants $c_0$ and $s_0$ such that, for any $s > s_0$, there exists a truthful, $(2 s / s_0)$-responsive scheduling algorithm $\Alg_{TC}$ such that:
\begin{eqnarray*}
\emph{\CompetitiveRatio}_{\Alg_{TC}}(s) & = & c_0 + \Theta\left(\frac{1}{\sqrt[3]{\ffrac{s}{s_0}}-1}\right) + \Theta\left(\frac{1}{(\sqrt[3]{\ffrac{s}{s_0}}-1)^3}\right).
\end{eqnarray*}
\end{theorem}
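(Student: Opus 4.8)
The plan is to combine the truthful non-committed mechanism $\Alg_T$ from Section~\ref{sec:NonCommitted_Truthfulness} with the simulation-based reduction of Section~\ref{sec:Committed}, modified so that the simulation windows depend on the reported demand and execution window in a way that is \emph{monotone} (smaller window or larger demand yields only a sub-collection of simulation subintervals). First I would fix $\omega = 1/2$ and, given a reported type $\langle v_j, D_j, a_j, d_j\rangle$, partition (or cover) the execution window $[a_j,d_j]$ into a collection of candidate ``simulation blocks'' of geometrically related lengths, each of the form $[x, x + \text{(length proportional to $D_j$)}]$, with the subsequent block reserved for physical processing on the server. The key design requirement is that these blocks be anchored to $a_j$ and $d_j$ in such a way that (i) every block has enough slack to run the blown-up virtual job ($\virtual{D}_j = \alpha D_j/\omega$ with $\alpha$ the speedup constant from Theorem~\ref{thm:Committed_MultipleServer_Migration_CompetitiveRatio}, i.e.\ $\alpha = 1$ for a single server, $\alpha = 2(3+2\sqrt2)$ for multiple servers), and (ii) shrinking $[a_j,d_j]$ to a dominating window or increasing $D_j$ only deletes blocks from the collection — never creates new ones or shifts existing ones favorably. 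Condition (ii) forces $s_0$ to be a constant multiple of the single-server feasibility bound $4/(\omega(1-\omega)) \cdot \alpha$ (times an extra factor absorbing the geometric block structure), which is where the constants $s_0 = 12$ and $s_0 = 139.872$ come from.

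The next step is to establish the two properties separately. \textbf{Truthfulness:} I would show that job $j$ is admitted iff $\Alg_T$ completes the virtual job in \emph{some} block of $j$'s collection, and then argue monotonicity in each coordinate. Value monotonicity and the payment-timeliness claim are inherited verbatim from the discussion in Section~\ref{sec:Committed_Truthful_PublicArrivalTimes}: the set of simulation blocks does not depend on $v_j$, so the critical value is computable at the commitment time (which is the end of the last block, comfortably before $d_j - (2s/s_0)D_j$). Demand and deadline monotonicity: increasing $v_j$ keeps every block, and by Theorem~\ref{thm:truth} plus the monotonicity of $\Alg_T$, completion in a fixed block is monotone in $v_j$; decreasing $d_j$ or $a_j$ (to a dominating window) or increasing $D_j$ only removes blocks and makes each surviving virtual job harder, so admission is preserved under dominating reports. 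The crucial point — and the reason this construction works where $\Alg_{\tilde T C}$ failed — is that an agent can no longer ``buy'' a more favorable virtual deadline by delaying $a_j$, because the block positions are pinned and a later $a_j$ only deletes early blocks. \textbf{Correctness (commitments are met):} for each admitted job, the physical processing block immediately follows its successful simulation block and has length $\ge D_j$; applying Theorem~\ref{thm:Committed_SingleServer_Correctness} (single server) or the chain of steps in Theorem~\ref{thm:Committed_MultipleServer_Migration_CompetitiveRatio} (migration + $5.828$-speedup of \cite{CLT05} + factor-$2$ for f-EDF of \cite{PSTW97}) shows EDF / global-EDF completes all admitted jobs by their true deadlines.

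For the \textbf{competitive ratio} I would run a dual-fitting argument, adapting the one used for $\Alg_T$. Since $\OPT$ completes at most one copy of each job, it suffices to charge $\OPT$'s value against the dual solution built from the simulator's execution of $\Alg_T$. The simulator sees virtual jobs with slackness $\virtual s = \Theta(s/s_0)$; by Theorem~\ref{thm:NonCommitted_Truthful_SingleServer} / \ref{thm:NonCommitted_Truthful_MultipleServers} together with Theorem~\ref{thm:DualFitting}, $\Alg_T$ induces a feasible dual solution for the virtual instance of cost $\le \big(2 + \Theta(1/(\sqrt[3]{\virtual s}-1)) + \Theta(1/(\sqrt[3]{\virtual s}-1)^3)\big)\cdot v(\Alg_{TC}(\type))$. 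I then transform this solution back to one feasible for the original types by: (a) the Resizing Lemma (Lemma~\ref{thm:ResizingLemma}) with $f = \alpha/\omega$ to undo the demand inflation, multiplying the $\beta$-part by $\alpha/\omega$; (b) the Stretching Lemma (Lemma~\ref{thm:StretchingLemma}) to cover the portion of $[a_j,d_j]$ outside the chosen simulation block, losing a further constant factor; and (c) an additional constant factor to handle the fact that the chosen block may be only one of several — each original-type dual constraint at time $t$ must be covered by \emph{some} block that contains $t$, so I sum (or take a max over) the blocks, absorbing the number of blocks (a constant, since their lengths are geometric and the window has bounded slack ratio) into $c_0$. Collecting the multiplicative constants from (a)–(c) and the block count gives $c_0 = 9$ for $C=1$ and $c_0 = 94.248$ for general $C$ (the latter dominated by the $2(3+2\sqrt2)$ speedup and the extra EDF factors), with the lower-order terms in $s/s_0$ carried through unchanged.

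The main obstacle I anticipate is \textbf{part (c)} together with the geometric block construction: one must choose the block lengths and positions so that (1) the collection is genuinely monotone under dominating type reports, (2) every point of $[a_j,d_j]$ that matters is covered by enough blocks that the dual-fitting charge stays bounded, and (3) the feasibility constraint $\virtual D_j \le (1-\omega)s D_j$ holds for \emph{every} block, not just the full window — and these three pull in opposite directions, which is exactly why $s_0$ must be taken as large as $12$ (resp.\ $139.872$) and why the argument is ``rather technical.'' A secondary subtlety is making sure the commitment time (end of the last block) is still at most $d_j - (2s/s_0)D_j$, i.e.\ that the responsiveness parameter $2s/s_0$ is honestly achieved by the block layout; this is a matter of bookkeeping on where the final block ends relative to $d_j$.
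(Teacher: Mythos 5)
Your high-level strategy matches the paper's: run the truthful non-committed scheduler $\Alg_T$ as a simulator over a family of sub-intervals of each job's window, admit a job if some simulation succeeds, and then push the dual solution through Resizing and Stretching. That skeleton is right, and your observation that the construction must make the sub-interval family a \emph{monotone} function of the reported type is the key insight. However, there are two places where the details as written would not go through.

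First, the construction of simulation intervals. You say the blocks are ``anchored to $a_j$ and $d_j$'' with lengths ``proportional to $D_j$,'' and then claim that ``the block positions are pinned.'' These are in tension: if block endpoints are computed from $a_j, d_j, D_j$, then a small perturbation of any of these quantities \emph{shifts} every block, rather than merely deleting some. This is exactly the manipulation you want to rule out, and the proposal does not explain how to achieve it. The paper's device is to anchor blocks to a \emph{global} dyadic grid on the timeline, independent of the agent's report: admissible intervals have the form $[t \cdot 2^k, (t+1)\cdot 2^k]$ with $k \geq k_j$ where $2^{k_j}$ is the smallest power of two at least $2\sigma D_j$, and one keeps only the \emph{maximal} such intervals that (together with a follow-on interval of equal length for execution) fit inside $[a_j, d_j]$. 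Rounding $D_j$ up to a power of two and snapping endpoints to the grid is what guarantees that tightening $[a_j,d_j]$ or raising $D_j$ only \emph{deletes or merges} intervals; it never shifts them favorably. Without this (or some equivalent quantization), the monotonicity step of your truthfulness argument fails.

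Second, the competitive-ratio accounting. In step (c) you write that you handle the multiplicity of blocks by ``absorbing the number of blocks (a constant, since their lengths are geometric and the window has bounded slack ratio) into $c_0$.'' The number of blocks is \emph{not} bounded by a constant: it can be as large as $\Theta((d_j - a_j)/2^{k_j}) = \Theta(s/\sigma)$, which is unbounded as $s \to \infty$. The paper avoids this entirely: the blocks in ${\cal C}_j$ are \emph{disjoint} and their union $I$ satisfies $|I| \geq \frac14(d_j - a_j)$ (Proposition~\ref{prop.aligned.large}). So one never sums over blocks; instead one takes $\alpha_j$ to be the \emph{max} over the phantoms' $\alpha$-entries (which does not increase the dual cost), and applies the Stretching Lemma once with $f = 4$ so that the (disjoint, contiguous) stretched phantom windows together exactly tile $[a_j, d_j]$. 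The resulting blow-up is the product $2 \cdot 4 = 8$ (Resizing $\times$ Stretching), independent of the number of blocks. Your step (b) uses the Stretching Lemma only ``to cover the portion of $[a_j,d_j]$ outside the chosen simulation block,'' treating the block as singular, and step (c) then tries to patch this by a bogus count over blocks; instead you need the $1/4$-coverage lemma for the union of all blocks. These two fixes, rather than the feasibility bound $4/(\omega(1-\omega))\cdot\alpha$ you quote, are where the specific constants $s_0 = 12$ (single server) and $c_0 = 1 + 8 = 9$ come from, with the additional factor $2(3+2\sqrt{2})$ for migration/EDF yielding $s_0 = 12\cdot 11.656$ and $c_0 = 1 + 8\cdot 11.656$ in the multi-server case.
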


For the case of multiple identical servers, we obtain constants $c_0 = 94.248$ and $s_0 = 139.872$.  For the single server case, we obtain $c_0 = 9$ and $s_0 = 12$.




\section{Conclusion}
\label{sec:Conclusions}

This paper designs and analyzes truthful online scheduling mechanisms. 
Although the model studied herein is clearly a theoretical abstraction of the full complexity faced by scheduling of tasks in the cloud, we believe that the principles developed here can carry over to more complex settings.

%

The $\beta$-responsive mechanisms described in Section \ref{sec:Committed} and Section \ref{sec:Committed_Truthful_PublicArrivalTimes} actually satisfy two stronger properties. First, they satisfy an alternate responsiveness property: there exists a constant $\omega \in (0,1)$ such that the scheduler makes a commitment for each job after a $(1-\omega)$ fraction of the job execution window has passed, i.e., by time $d_j - \omega (d_j - a_j)$. Second, they satisfy no early processing, i.e., the mechanisms may process jobs only once they have committed to their completion. In contrast, the truthful scheduling mechanism from Section \ref{sec:Committed_Truthful_Full} does not necessarily satisfy the two properties. An interesting open question is whether there exists a (fully) truthful scheduling mechanism with constant competitive ratio that commits to scheduling each job before a constant fraction of its execution window has elapsed.

The most obvious problem left open by our work is to improve the constants in our results.  The mechanisms constructed for our most general results involve large constants that can potentially be improved. One particularly interesting question along these lines is whether one can obtain an approximation factor that approaches $1$ as the number of servers $C$ grows large.
An additional avenue of future work is to extend our results to more sophisticated scheduling problems.  One might investigate jobs with parallelism, or jobs made up of many interdependent tasks (see, e.g., \cite{spaa14}), or the impact of non-uniform machines or time-varying capacity, and so on.  The primary question is then to determine to what extent deadline slackness helps to construct constant-competitive mechanisms for variations of the online scheduling problem.

\bibliographystyle{acmsmall}
\bibliography{ec2015}

\begin{thebibliography}{}

\bibitem[\protect\citeauthoryear{Archer and \'{E}va Tardos}{Archer and \'{E}va
  Tardos}{2001}]{ArcherTardos01}
{\sc Archer, A.} {\sc and} {\sc \'{E}va Tardos}. 2001.
\newblock Truthful mechanisms for one-parameter agents.
\newblock In {\em FOCS}. 482--491.

\bibitem[\protect\citeauthoryear{Azure}{Azure}{2015}]{AzureMLPricing}
{\sc Azure}. 2015.
\newblock Azure machine learning pricing.
\newblock
  {\url{http://azure.microsoft.com/en-us/pricing/details/machine-learning/}}.

\bibitem[\protect\citeauthoryear{Bar-Noy, Canetti, Kutten, Mansour, and
  Schieber}{Bar-Noy et~al\mbox{.}}{1999}]{BCKMS99}
{\sc Bar-Noy, A.}, {\sc Canetti, R.}, {\sc Kutten, S.}, {\sc Mansour, Y.}, {\sc
  and} {\sc Schieber, B.} 1999.
\newblock Bandwidth allocation with preemption.
\newblock {\em SIAM J. Comput.\/}~{\em 28,\/}~5, 1806--1828.

\bibitem[\protect\citeauthoryear{Bod{\'\i}k, Menache, Naor, and
  Yaniv}{Bod{\'\i}k et~al\mbox{.}}{2014}]{spaa14}
{\sc Bod{\'\i}k, P.}, {\sc Menache, I.}, {\sc Naor, J.~S.}, {\sc and} {\sc
  Yaniv, J.} 2014.
\newblock Brief announcement: deadline-aware scheduling of big-data processing
  jobs.
\newblock In {\em SPAA}. 211--213.

\bibitem[\protect\citeauthoryear{Canetti and Irani}{Canetti and
  Irani}{1998}]{CI98}
{\sc Canetti, R.} {\sc and} {\sc Irani, S.} 1998.
\newblock Bounding the power of preemption in randomized scheduling.
\newblock {\em SIAM J. Comput.\/}~{\em 27,\/}~4, 993--1015.

\bibitem[\protect\citeauthoryear{Chan, Lam, and To}{Chan
  et~al\mbox{.}}{2005}]{CLT05}
{\sc Chan, H.}, {\sc Lam, T.~W.}, {\sc and} {\sc To, K.} 2005.
\newblock Nonmigratory online deadline scheduling on multiprocessors.
\newblock {\em {SIAM} Journal of Computing\/}~{\em 34,\/}~3, 669--682.

\bibitem[\protect\citeauthoryear{Curino, Difallah, Douglas, Krishnan,
  Ramakrishnan, and Rao}{Curino et~al\mbox{.}}{2014}]{rayon}
{\sc Curino, C.}, {\sc Difallah, D.~E.}, {\sc Douglas, C.}, {\sc Krishnan, S.},
  {\sc Ramakrishnan, R.}, {\sc and} {\sc Rao, S.} 2014.
\newblock Reservation-based scheduling: If you're late don't blame us!
\newblock In {\em Proceedings of the ACM Symposium on Cloud Computing}. ACM,
  1--14.

\bibitem[\protect\citeauthoryear{DasGupta and Palis}{DasGupta and
  Palis}{2000}]{DP00}
{\sc DasGupta, B.} {\sc and} {\sc Palis, M.~A.} 2000.
\newblock Online real-time preemptive scheduling of jobs with deadlines.
\newblock In {\em APPROX}. 96--107.

\bibitem[\protect\citeauthoryear{Ferguson, Bodik, Kandula, Boutin, and
  Fonseca}{Ferguson et~al\mbox{.}}{2012}]{jockey}
{\sc Ferguson, A.}, {\sc Bodik, P.}, {\sc Kandula, S.}, {\sc Boutin, E.}, {\sc
  and} {\sc Fonseca, R.} 2012.
\newblock Jockey: guaranteed job latency in data parallel clusters.
\newblock In {\em Proceedings of the 7th ACM european conference on Computer
  Systems}. ACM, 99--112.

\bibitem[\protect\citeauthoryear{Funk}{Funk}{2004}]{Funk04}
{\sc Funk, S.~H.} 2004.
\newblock {EDF} scheduling on heterogeneous multiprocessors.
\newblock Ph.D. thesis, University of North Carolina.

\bibitem[\protect\citeauthoryear{Garay, Naor, Yener, and Zhao}{Garay
  et~al\mbox{.}}{2002}]{GNYZ02}
{\sc Garay, J.~A.}, {\sc Naor, J.}, {\sc Yener, B.}, {\sc and} {\sc Zhao, P.}
  2002.
\newblock On-line admission control and packet scheduling with interleaving.
\newblock In {\em INFOCOM}.

\bibitem[\protect\citeauthoryear{Hajiaghayi, Kleinberg, Mahdian, and
  Parkes}{Hajiaghayi et~al\mbox{.}}{2005}]{HKMP05}
{\sc Hajiaghayi, M.~T.}, {\sc Kleinberg, R.}, {\sc Mahdian, M.}, {\sc and} {\sc
  Parkes, D.~C.} 2005.
\newblock Online auctions with re-usable goods.
\newblock 165--174.

\bibitem[\protect\citeauthoryear{Hong and Leung}{Hong and Leung}{1989}]{HL89}
{\sc Hong, K.~S.} {\sc and} {\sc Leung, J.~Y.} 1989.
\newblock Preemptive scheduling with release times and deadlines.
\newblock {\em Real-Time Systems\/}~{\em 1,\/}~3, 265--281.

\bibitem[\protect\citeauthoryear{Jain, Menache, Naor, and Yaniv}{Jain
  et~al\mbox{.}}{2011}]{JMNY11}
{\sc Jain, N.}, {\sc Menache, I.}, {\sc Naor, J.}, {\sc and} {\sc Yaniv, J.}
  2011.
\newblock A truthful mechanism for value-based scheduling in cloud computing.
\newblock In {\em SAGT}. 178--189.

\bibitem[\protect\citeauthoryear{Jain, Menache, Naor, and Yaniv}{Jain
  et~al\mbox{.}}{2012}]{JMNY12}
{\sc Jain, N.}, {\sc Menache, I.}, {\sc Naor, J.}, {\sc and} {\sc Yaniv, J.}
  2012.
\newblock Near-optimal scheduling mechanisms for deadline-sensitive jobs in
  large computing clusters.
\newblock In {\em SPAA}. 255--266.

\bibitem[\protect\citeauthoryear{Koren and Shasha}{Koren and
  Shasha}{1992}]{KS92}
{\sc Koren, G.} {\sc and} {\sc Shasha, D.} 1992.
\newblock D$^{\mbox{over}}$; an optimal on-line scheduling algorithm for
  overloaded real-time systems.
\newblock In {\em RTSS}. IEEE Computer Society, 290--299.

\bibitem[\protect\citeauthoryear{Koren and Shasha}{Koren and
  Shasha}{1994}]{KS94}
{\sc Koren, G.} {\sc and} {\sc Shasha, D.} 1994.
\newblock Moca: A multiprocessor on-line competitive algorithm for real-time
  system scheduling.
\newblock {\em Theor. Comput. Sci.\/}~{\em 128,\/}~1{\&}2, 75--97.

\bibitem[\protect\citeauthoryear{Lavi and Swamy}{Lavi and
  Swamy}{2007}]{LaviSwamy07}
{\sc Lavi, R.} {\sc and} {\sc Swamy, C.} 2007.
\newblock Truthful mechanism design for multi-dimensional scheduling via cycle
  monotonicity.
\newblock In {\em EC}.

\bibitem[\protect\citeauthoryear{Lucier, Menache, Naor, and Yaniv}{Lucier
  et~al\mbox{.}}{2013}]{LMNY13}
{\sc Lucier, B.}, {\sc Menache, I.}, {\sc Naor, J.}, {\sc and} {\sc Yaniv, J.}
  2013.
\newblock Efficient online scheduling for deadline-sensitive jobs.
\newblock In {\em SPAA}. 305--314.

\bibitem[\protect\citeauthoryear{Phillips, Stein, Torng, and Wein}{Phillips
  et~al\mbox{.}}{1997}]{PSTW97}
{\sc Phillips, C.~A.}, {\sc Stein, C.}, {\sc Torng, E.}, {\sc and} {\sc Wein,
  J.} 1997.
\newblock Optimal time-critical scheduling via resource augmentation.
\newblock In {\em Proceedings of the Twenty-Ninth Annual {ACM} Symposium on the
  Theory of Computing, El Paso, Texas, USA, May 4-6, 1997}. 140--149.

\bibitem[\protect\citeauthoryear{Porter}{Porter}{2004}]{Porter04}
{\sc Porter, R.} 2004.
\newblock Mechanism design for online real-time scheduling.
\newblock In {\em In Proc. ACM Conf. on Electronic Commerce (EC)}. ACM Press,
  61--70.

\bibitem[\protect\citeauthoryear{Vazirani}{Vazirani}{2001}]{Vazirani01}
{\sc Vazirani, V.~V.} 2001.
\newblock {\em Approximation algorithms}.
\newblock Springer.

\end{thebibliography}

\newpage

\appendix{\textbf{\large Appendices}}






%

\medskip



\section{An Alternative Notion of Promptness}
\label{sec:Appendix_Committed_Definition}

Recall the definition of $\beta$-responsiveness: a scheduling mechanism is $\beta$-responsive (for $\beta \geq 0$) if, for every job $j$, by time $d_j - \beta \cdot D_j$ it either (a) rejects the job or, (b) guarantees that the job will be completed by its deadline and specifies the required payment.  
In this section we discuss a different, equally natural notion of responsiveness.

Given $\omega \in [0,1]$, we could ask for a scheduling mechanism to make the choice of whether to accept or reject each job $j$ by time $d_j - \omega (d_j - a_j)$.  That is, a decision must be reached for each job when a $(1 - \omega)$ fraction of its execution window has elapsed.  The case $\omega = 1$ corresponds to allocation decisions being made upon arrival, and $\omega = 0$ is equivalent to no commitment.  

We note that the mechanisms constructed in Section \ref{sec:Committed} (non-truthful) and Section \ref{sec:Committed_Truthful_PublicArrivalTimes} (truthful when arrival times are public) actually satisfy this alternative notion of responsiveness for a constant $\omega$, in addition to being $\beta$-responsive.  Indeed, for these mechanisms, $\beta$-responsiveness actually follows as a corollary of this alternative form of multiplicative responsiveness, combined with the slackness condition.  However, the truthful and $\beta$-responsive mechanism from Section \ref{sec:Committed_Truthful_Full} does not satisfy multiplicative responsiveness for any constant $\omega$.  We leave open the problem of designing a fully truthful scheduler that makes commitments before a constant faction of each job's execution window has passed.


\section{Truthful Non-Committed Scheduling}
\label{sec:Appendix_NonCommitted_Truthful}


\subsection{Non-Truthfulness of \cite{LMNY13}}
\label{sec:Appendix_NonCommitted_Truthful_SPAA2013}

Recall that the non-committed algorithm by \cite{LMNY13} is based on the following two properties. First, a running job $j$ can only be preempted by a job $j'$ satisfying $\vd_{j'} > \thres \vd_j$ for some parameter $\gamma > 1$. Second, if a job $j$ is not allocated by time $d_j - \gap D_j$ for some $\gap \ge 1$, it is not allocated at all. In the following, we prove that such a scheduler is not truthful.

Assume the system consists of a single server. Consider four job types: $A,B,C$ and $D$. Assume $\vd_A = 1$, $\vd_B = \thres$, $\vd_C = \thres ^2$ and $\vd_D = \infty$. Specifically, type $B$ jobs cannot preempt type $A$ jobs; type $C$ jobs cannot preempt type $B$ jobs; however, type $C$ jobs can preempt type $A$ jobs. We use type $D$ jobs to maintain the server busy when needed.
Our input consists of one type $A$ job (which we simply refer to as $A$), one type $B$ job (referred as B) and $s$ type $C$ jobs. We construct an instance such that $A$ is not completed due to type $C$ jobs. However, by decreasing the value of $A$, job $B$ blocks the type $C$ jobs from running. This allows $A$ to complete.

Set $D_A = 2\gap$ and $d_A = s+\gap$. Set $a_B = 0.5 \gap$ and $D_B = s - 0.5\gap$. Finally, set $a_C = \gap$, $d_C = s + \gap$ and $D_C = 1$. Assume all other parameters are set such every job $j$ satisfies $v_j = \vd_j D_j$ and $d_j - a_j = sD_j$. Type $D$ jobs are set such that the server is busy until time $t=0$.
\\

\noindent\textbf{Case 1 - $\vd_A = 1$.}
\begin{table}[h]{
\begin{tabular}{ll}
$t<0$ & The algorithm processes type $D$ jobs. \\
$t=0$ & The algorithm begins to process $A$. \\
$t=0.5\gap$ & Job $B$ arrives. The algorithm decides not to preempt $A$. \\
$t=\gap$ & All type $C$ jobs arrive. Job $A$ is preempted.\\
              & The algorithm processes $s$ type $C$ jobs until time $t=s+\gap$. \\
$t=s+\gap$ & The type $C$ jobs are all processed, but $A$ is not completed by its deadline.

\end{tabular}}
\end{table}

\noindent\textbf{Case 2 - $\vd_A < 1$.}
\begin{table}[h]{
\begin{tabular}{ll}
$t<0$ & The algorithm processes type $D$ jobs. \\
$t=0$ & The algorithm begins to process $A$. \\
$t=0.5\gap$ & Job $B$ arrives. The algorithm preempts $A$ and begins to process $B$. \\
$t=\gap$ & All type $C$ jobs arrive. Job $B$ is not preempted.\\
$t=s$ & The algorithm completes $B$. The type $C$ were not allocated by $d_C - \gap D_C = s$.\\
& Hence, all type $C$ jobs are rejected. The algorithm resumes processing $A$. \\
$t=s + \gap$ & The algorithm completes $A$ by its deadline.

\end{tabular}}
\end{table}


\subsection{Single Server}
\label{sec:Appendix_NonCommitted_Truthful_SingleServer}

We prove the truthfulness of the non-committed single server algorithm.
\\

\begin{proofof}{Theorem \ref{thm:NonCommitted_Truthful_SingleServer_Truthfulness}}
By Theorem \ref{thm:truth}, it suffices to show that $\Alg_T$ is monotone.  Consider some job $j$. Throughout the proof, we fix the types $\type_{-j}$ of all jobs beside $j$. To ease exposition, we drop $\type_{-j}$ from our notation.  Write $\tau_j = \langle v_j, D_j, a_j, d_j \rangle$ for the \emph{true} type of job $j$.  Suppose that $\Alg_T(\tau_j, \type_{-j})$ completes job $j$.  We must show that $\Alg_T$ will still complete job $j$ under a reported type of $\tau_j'$, where $\tau_j' \succ \tau_j$.  Since one can modify each component of a reported type in sequence, it will suffice to establish monotonicity with respect to each coordinate independently.

\medskip
\noindent
\textbf{Step 1: Value Monotonicity. }
Let us first establish value monotonicity.  Consider some $v_j' > v_j$ and assume $j$ is completed when reporting $v_j$.
Let $\vd'_j = \ffrac{v'_j}{D_j}$ be the value-density and let $\ell'_j = \lfloor \log_\thres(\vd'_j) \rfloor$ be the class of job $j$ when reporting $v'_j$. Let $st'(y_j)$ be the starting time of $j$ when reporting $v'_j$. Similarly, denote $\vd_j$, $\ell_j$ and $st(y_j)$ with respect to $v_j$.
Notice that if $\ell_j = \ell'_j$, then the behavior of $\Alg_T$ is identical regardless of value reported; hence $j$ is completed.  We therefore assume $\ell'_j > \ell_j$.  Note then that $st'(y_j) \le st(y_j)$, since if $j$ does not start by time $st(y_j)$ under reported value $v_j'$, then it will start at that time since it has only a higher class.

%
Now, consider the case where $j$ reports a value of $v_j'$. Observe $J^P(t)$ at time $t=st'(y_j)$. Claims \ref{thm:NonCommitted_Truthful_SingleServer_Claim2} and \ref{thm:NonCommitted_Truthful_SingleServer_Claim4} imply that no existing job can preempt job $j$. Specifically, any job that can run instead of $j$ during the interval $[st'(y_j),d_j]$ must have arrived after time $st'(y_j)$. If $j$ did not complete, then during this interval the algorithm processed higher priority jobs during more than $d_j - st'(y_j) - D_j$ time units. These jobs would also be preferred by $\Alg_T$ when $j$ reports a lower value of $v_j$. Hence, $j$ could not have been completed when the value $v_j$ was reported, a contradiction.

\medskip

\noindent
\textbf{Step 2: Monotonicity of Other Properties. }
Next consider misreporting a demand $D'_j \leq D_j$, and suppose the job completes under report $D_j$.  Then the job's value density is higher under report $D'_j$, and its latest possible starting time is increased to $d_j - \gap D'_j$. This only extends the possibilities of $j$ being completed, and hence job $j$ would be completed under report $D_j'$ as well.  Following similar arguments, a later deadline $d'_j$ instead of $d_j$ only increases the latest possible start time of $j$, and increases the time slots in which the job can be completed, which again can only increase the allocation to $j$. Finally, we show that a later arrival time $a'_j \le a_j$ cannot be detrimental.  We assume that $j$ is completed when the job is submitted at time $a_j$.  It remains to prove that $j$ is completed when submitting at time $a'_j$. This follows in a similar fashion to our argument for value monotonicity. If when reporting $a'_j$ the job is not processed until $a_j$, then both executions of $\Alg_T$ are identical, hence $j$ is completed. Otherwise, $j$ necessarily begins earlier than when $a_j$ is reported, and again job $j$ is completed. We reach the same conclusion as before.
%

Since $\Alg_T$ satisfies all required monotonicity conditions, we conclude that $\Alg_T$ is truthful.
\end{proofof}

We now bound the competitive ratio of the truthful non-committed algorithm for a single server.
\\

\begin{proofof}{Theorem \ref{thm:NonCommitted_Truthful_SingleServer}}
We bound the competitive ratio of $\Alg_T$ for the single server case.
Our proof strongly relies on the original analysis of the non-committed scheduling algorithm by \cite{LMNY13}. Consider an execution of $\Alg_T$ on an instance $\type$.
Recall that Claim \ref{thm:NonCommitted_Truthful_SingleServer_Claim2} states that at each time $t$, no job in $J^P(t)$ or $J^E(t)$ can have a value density larger than $\thres \vd_{\Alg_T}(t)$. Furthermore, the algorithm does not allocate resources to any job $j$ that has not been allocated by time $d_j - \mu D_j$. For schedulers satisfying these two properties, \cite{LMNY13} proved that there exists a feasible solution for the dual program corresponding to $\type$, with a total dual cost of:
\begin{eqnarray}
\label{eq:NonCommitted_Truthful_SingleServer_CompetitiveRatio1}
v(\Alg_T(\tau)) \, + \, \thres \cdot \frac{s}{s-\gap} \cdot \intop_{0}^{\infty} \vd_{\Alg_T}(t)dt.
\end{eqnarray}

\noindent
It remains to bound the integral.
Let $\TimeFull$ denote the set of times during which completed jobs were processed, and denote by $\TimePartial$ the remaining times. Notice that all jobs processed during $\TimePartial$ are partially processed jobs. Hence, the integral can be written as $v(\Alg_T(\tau)) + \int_{\TimePartial} \vd_{\Alg_T}(t) dt$. The latter expression represents the total value corresponding to partial work lost from not completing jobs. That is, if the algorithm processed half of some job $j$, then it lost a partial value of $0.5 v_j$.

Several useful properties of the original non-truthful algorithm are preserved in our truthful variant. We prove here that $\Alg_T$ preserves these properties and show how they can be used to bound the lost partial value. Consider a partially processed job $j$. Let $j'$ be any job other than $j$ running during some time $t\in[st(y_j),d_j]$. We claim that $st(y_j) < st(y_{j'})$. Assume the contrary. Since $j$ starts processing at time $st(y_j)$, this implies that $j \succ j' $. However, we know that $j$ has never been completed. By Claim \ref{thm:NonCommitted_Truthful_SingleServer_Claim2}, it is impossible that $j'$ was processed at time $t$. Therefore, the claim holds. Moreover, it follows that $j' \succ j$, since $j'$ started being processed after time $st(y_j)$.

Consider again the interval $[st(y_j),d_j]$. Notice that the length of the interval is at least $\gap D_j$. Our previous claims imply that during this interval the algorithm processed jobs that belong to higher classes than $j$ for at least $(\gap - 1)D_j$ of the time. This translates to a value of at least $(\gap -1)v_j$, since the value density of these jobs are at least $\vd_j$. Intuitively, one would wish that this value could account for the loss of $j$. However, jobs processed inside the interval $[st(y_j),d_j]$ have not necessarily been completed. This calls for a more rigorous analysis. \cite{LMNY13} introduced a complex charging argument to bound the lost partial value. By slightly modifying their proof\footnote{Our analysis differs since the truthful algorithm preempts jobs according to class. Consider two jobs $j,j'$ that belong to classes $\ell,\ell'$, respectively. Notice that if $\ell' - \ell = i > 0$ then $\vd_{j'} \le \thres^{i-1} \vd_j$. This bound is weaker by a factor of $\thres$ compared to an equivalent bound obtained by \cite{LMNY13}, which increases the bound on the lost partial value by $\thres$.}, we can obtain the following bound:
\begin{eqnarray}
\label{eq:NonCommitted_Truthful_SingleServer_CompetitiveRatio2}
\intop_{0}^{\infty} \vd_{\Alg_T}(t)dt
  & \,\,\le\,\, &
v(\Alg_T(\type)) \cdot \left[ 1 + \frac{\thres}{(\thres - 1)(\gap - 1) - 1} \right]
\end{eqnarray}
By combining \eqref{eq:NonCommitted_Truthful_SingleServer_CompetitiveRatio1} and \eqref{eq:NonCommitted_Truthful_SingleServer_CompetitiveRatio2} we can then apply the dual fitting theorem (Theorem \ref{thm:DualFitting}) and get:
\begin{eqnarray}
\label{eq:NonCommitted_Truthful_SingleServer_CompetitiveRatio3}
\CompetitiveRatio_{\Alg_T}(s)
  & \le &
1 + \thres \cdot \frac{s}{s-\gap} \cdot \left[ 1 + \frac{\thres}{(\thres - 1)(\gap - 1) - 1} \right].
\end{eqnarray}
For every $\gap$, the above bound is optimized for a unique value $\thres^*(\gap) = \frac{\sqrt{\gap}}{\sqrt{\gap} - 1}$. By choosing $\gap \approx s^{\ffrac{2}{3}}$ we obtain the bound stated in the theorem.
\end{proofof}


\subsection{Multiple Servers}
\label{sec:Appendix_NonCommitted_Truthful_MultipleServers}

We extend $\Alg_T$ to accommodate multiple servers.
In the multiple server variant, which we also denote by $\Alg_T$, each server runs a local copy of the single server algorithm. Specifically, if a job $j$ has not been executed on a server $i$ during the interval $[a_j, d_j - \gap D_j]$, then the algorithm prevents it from running on server $i$. The detailed implementation of the algorithm is given fully in Algorithm \ref{alg:Truthful_NonCommitted_MultipleServers}. The algorithm follows the general approach described here in an efficient manner. Upon arrival of a job $j$ at time $t$, we only invoke the class preemption rule on server $i_{\textrm{min}}(t)$, which is the server running the job belonging to the lowest class (unused servers run idle jobs of class $-\infty$). Ties are broken in favor of the job with the later start time in the system. This is crucial for proving truthfulness. Notice also that it suffices to invoke the class preemption rule of server $i_{\textrm{min}}(t)$: if job $j$ is rejected, it would be rejected by the class preemption rule of any other server. When job $j$ completes on server $i$, we first load the job with maximal value-density out of the jobs preempted from server $i$, and then invoke the class preemption rule.
Notice that the class preemption rule allows preempted jobs to migrate between servers. That is, a job preempted from server $i$ might start executing on a different server at time $t$, provided that $t \le d_j - \gap D_j$.


\begin{algorithm}
\label{alg:Truthful_NonCommitted_MultipleServers}
\SetKw{KwEvent}{Event:}
\DontPrintSemicolon
\caption{Truthful Non-Committed Algorithm $\Alg_T$ for Multiple Servers}


$\forall t, \,\,\,\,\, J_i^P(t) = \left\{ \,j\in\JobInput \mid j \textrm{ partially processed on server } i \textrm{ at time } t \,\wedge\, t\in [a_j,d_j]  \right\}$.\;
$\,\,\,\,\,\,\,\,\,\,\,\,\, J_i^E(t) = \left\{ \,j\in\JobInput \mid j \textrm{ unallocated on server } i \textrm{ at time } t \,\wedge\, t\in [a_j,d_j - \gap D_j]  \right\}$.\;
$\,\,\,\,\,\,\,\,\,\,\,\,\, J_{\Alg}(t) = \left\{ \,j_\Alg^i(t) \mid 1 \le i \le C \right\}$. $\qquad\qquad\qquad\,\,\,\,\,\,\,\,\,$ (jobs executing at time $t$) \;
\;

\KwEvent On arrival of job $j$ at time $t=a_j$:\;
$\,\,\,$ 1. call ClassPreemptionRule($i_{\textrm{min}}(t),t)$, where:\; $\,\,\,\,\,\,\,\,\,$ $i_{\textrm{min}}(t) = \arg\!\min \{ \, \lfloor \log_{\thres} \vd_{\Alg}^i(t) \rfloor \, \mid \, 1 \le i \le C \, \}$ $\,\,\,\,\,\,\,\,\,$ (ties broken according to later start time)\;
\;

\KwEvent On completion of job $j$ on server $i$ at time $t$:\;
$\,\,\,$ 1. resume execution of job $j' = \arg\max \{ \vd_{j'} \mid j' \in J^P_i(t) \}$.\;
$\,\,\,$ 2. call ClassPreemptionRule($i$,$t$).\;
$\,\,\,$ 3. delay the output response of $j$ until time $d_j$.\;
\;

\textbf{ClassPreemptionRule($i$,$t$):}\;
$\,\,\,$ 1. $j \,\,\, \leftarrow$ job currently being processed on server $i$.\;
$\,\,\,$ 2. $j^* \leftarrow \arg\!\max \left\{ \vd_{j^*} \mid j \in J_i^E(t) \setminus J_\Alg(t) \right\}$ $\,\,\,\,\,\,\,\,\,\,\,\,\,\,\,\,\,\,\,\,\,\,$ {\small (ties broken by earlier start time)}\;
$\,\,\,$ 3. if $\left( j^* \succ j \right)$\;
$\,\,\,\,\,\,\,\,\,\,\,\,\,\,\,\,\,\,$ 3.1. preempt $j$ and run $j^*$.\;

\end{algorithm}


We argue that the proposed mechanism is truthful.
Note that claims \ref{thm:NonCommitted_Truthful_SingleServer_Claim2} and \ref{thm:NonCommitted_Truthful_SingleServer_Claim4} apply on each server separately. However, this is insufficient for proving truthfulness. Instead, we prove the following useful claim.
Recall that $y_j^i(t)$ indicates whether job $j$ was allocated on server $i$ at time $t$. Define the \emph{starting point} $st(y_j) = \min \big\{ \{ t \mid y_j^i(t) = 1 \} \cup \{ \infty \}\big\}$ of job $j$ as the first point in time at which $j$ is allocated. If no such $t$ exists, $st(y_j)=\infty$.

\begin{claim}
Let $j = j_{\Alg_T}^i(t)$ be the job processed on server $i$ at time $t$ by $\Alg_T$. Let $j'$ be any job not running at time $t$, and assume $j'$ is either an allocated job such that $t\in[a_{j'},d_{j'}]$ or an unallocated job such that $t\in [a_{j'},d_{j'} - \gap D_j]$. Let $\class_\ell,\class_{\ell'}$ denote the classes of $j,j'$, respectively. Then, either $\ell > \ell'$ or $\ell = \ell' \,\wedge\, st(y_j) < st(y_{j'})$.
\end{claim}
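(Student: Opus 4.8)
The statement is the multi-server analogue of Claim~\ref{thm:NonCommitted_Truthful_SingleServer_Claim2}, with the addition of the tie-breaking condition on starting points. The plan is to argue by contradiction, examining the earliest time in $[a_{j'},t]$ at which $j$ occupies server $i$, exactly as in the single-server proof, but now keeping careful track of which server events occur on and of the tie-breaking rules used by the multi-server algorithm.

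\medskip\noindent\textbf{Main steps.} First I would suppose for contradiction that the conclusion fails, i.e.\ either $\ell' > \ell$, or $\ell' = \ell$ and $st(y_{j'}) \le st(y_j)$. Let $t^*$ be the earliest time in the interval $[\max\{a_{j'}, st(y_j)\}, t]$ at which $j$ is allocated on server $i$; such a $t^*$ exists because $j$ runs on $i$ at time $t$, and I would note that $j'$ is available (in the relevant sense) throughout $[t^*, t]$ since $t^* \ge a_{j'}$ and $t^* \le t \le d_{j'}$ (resp.\ $\le d_{j'} - \gap D_j$). At time $t^*$ one of the algorithm's events placed $j$ on server $i$: it was either started/resumed there by a completion event plus \texttt{ClassPreemptionRule}, or it was moved there by a \texttt{ClassPreemptionRule} triggered by an arrival. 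In each case I would derive a contradiction with the fact that $j'$ was also available: if $\ell' > \ell$, the class preemption rule would have preferred $j'$ (or some job of class $\ge \ell'$) over $j$, so $j$ could not have been the job selected on server $i$; and the ``resume'' step (line~1 of the completion event) also cannot select $j$ over $j'$ because $j'$, if partially processed on $i$, has no lower value-density, and in any case the subsequent \texttt{ClassPreemptionRule} call would immediately preempt $j$. The remaining case is $\ell' = \ell$ with $st(y_{j'}) \le st(y_j)$. Here I would use the explicit tie-breaking rules: the arrival-triggered call invokes \texttt{ClassPreemptionRule} on $i_{\min}(t)$ with ties broken toward \emph{later} start time, and step~2 of \texttt{ClassPreemptionRule} breaks ties toward \emph{earlier} start time. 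Since $j$ and $j'$ are in the same class, if $j'$ had an earlier-or-equal start point and was available, then at time $t^*$ the algorithm would have preferred $j'$: either $j'$ would have been resumed/retained on the relevant server rather than $j$, or $j'$ would not have been displaced in favor of $j$. Tracing through which of the (at most two) servers involved — the one $j'$ was running on, if any, and server $i$ — is where the bookkeeping lives, but in every sub-case the tie-breaking rules force $j'$ to be chosen over $j$ at or before $t^*$, contradicting that $j$ occupies $i$ at $t$.

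\medskip\noindent\textbf{Expected obstacle.} The main subtlety, compared with the single-server Claim~\ref{thm:NonCommitted_Truthful_SingleServer_Claim2}, is migration: between $st(y_j)$ and $t$ the job $j$ may have hopped among several servers, and $j'$ likewise, so ``$j$ is allocated'' and ``$j'$ is available'' must be tracked per-server and per-event. The argument must establish that at time $t^*$ the relevant event — whichever server it fires on — actually compares $j$ against $j'$ (or against a job dominating $j'$ in class), which follows because the arrival handler fires \texttt{ClassPreemptionRule} on $i_{\min}(t)$, the server with the lowest-class running job (ties toward later start time), and a completion handler on server $i$ compares the resumed job against $J_i^E(t)\setminus J_\Alg(t)$. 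Making this ``the comparison really happens'' step airtight, together with correctly chasing the two tie-breaking conventions in the $\ell' = \ell$ case, is the part of the proof that requires care; the rest is a direct transcription of the single-server argument.
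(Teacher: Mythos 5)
Your proposal takes a genuinely different route from the paper, and unfortunately that route has a gap.

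The paper's proof has two separate pieces. For the class comparison ($\ell \ge \ell'$) it simply invokes the single-server Claim~\ref{thm:NonCommitted_Truthful_SingleServer_Claim2} on server $i$ as a local copy. For the tie-break (if $\ell = \ell'$ then $st(y_j) < st(y_{j'})$) it does \emph{not} use a single-moment argument at all: it assumes $st(y_{j'}) < st(y_j)$, observes that $j'$ must therefore be running at time $st(y_j)$ (so both jobs are running simultaneously), and then \emph{tracks both jobs' running statuses forward over $[st(y_j), t]$}. The key observations are (i) $j'$ cannot be preempted while $j$ is still running (the $i_{\min}$ tie-break toward later start time means $j$ would be the one to go), and (ii) if both are eventually preempted, the resumption logic would bring $j'$ back before $j$. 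This interval-tracking step is what produces the contradiction at time $t$.

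Your plan, by contrast, fixes a single time $t^*$ — the earliest time $j$ occupies server $i$ — and tries to derive a contradiction from the event at $t^*$. The gap is the case where $j'$ is concurrently running on a different server $i' \ne i$ at time $t^*$. In that case nothing is wrong at $t^*$: both jobs are running, $j'$ \emph{was} chosen by the algorithm, and the claim's hypothesis (that $j'$ is \emph{not} running at time $t$) hasn't been contradicted. You gesture at ``bookkeeping'' across the ``at most two servers,'' but the real missing ingredient is the forward-in-time argument: you must show $j'$ cannot lose its slot between $t^*$ and $t$ while $j$ keeps or regains one, which is not a single-moment fact. There is a secondary error as well: you assert that ``the subsequent \texttt{ClassPreemptionRule} call would immediately preempt $j$'' in favor of $j'$ when they are in the same class, but \texttt{ClassPreemptionRule} preempts only for \emph{strictly} higher class ($j^* \succ j$), so a same-class $j'$ never triggers a preemption there; same-class precedence is exerted only through the \texttt{resume} step on $J_i^P$ and through the $i_{\min}$ tie-break, which is exactly why the paper's interval argument routes through those two mechanisms rather than through \texttt{ClassPreemptionRule}.
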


\begin{proof}
Since each server runs a local copy of the single server algorithm, Claim \ref{thm:NonCommitted_Truthful_SingleServer_Claim2} implies that $j' \not\succ j$, therefore $\ell \ge \ell'$. It remains to prove that if $\ell = \ell'$ then $st(y_j) < st(y_{j'})$. Assume towards contradiction that $st(y_{j'}) < st(y_j)$ (equality is impossible, since we assume $j$ is running at time $t$ and $j'$ is not). This implies that the algorithm always prioritizes $j'$ over $j$. Notice that at time $st(y_j)$ job $j'$ must be running; otherwise, the algorithm would have not started processing job $j$. Therefore, at time $st(y_j)$ both jobs are running, and at time $t$ only job $j$ is running. We show that this scenario is impossible. Notice that $j'$ cannot be preempted while $j$ is running, since the algorithm would choose to preempt $j$ instead. Hence, sometime during the interval $[st(y_j),d_j]$ both jobs were preempted and $j$ resumed execution. This is impossible, since $j'$ would have been resumed instead of $j$. We reach a contradiction. Therefore, the claim holds.
\end{proof}

The claim implies that at every time $t$ the algorithm is processing the $C$ top available jobs, where the jobs are ordered first by their class (high to low), and in case of equality ordered by their start times (low to high). This observation is essential for proving truthfulness.

\begin{theorem}
\label{thm:NonCommitted_Truthful_MultipleServers_Truthfulness}
The algorithm $\Alg_T$ for multiple servers is truthful.
\end{theorem}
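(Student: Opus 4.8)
The plan is to invoke Theorem~\ref{thm:truth} and show that the multiple-server version of $\Alg_T$ is monotone; by the construction it then admits a payment rule making it truthful. As in the single-server proof (Theorem~\ref{thm:NonCommitted_Truthful_SingleServer_Truthfulness}), I would fix the reports $\type_{-j}$ of all other jobs, assume job $j$ is completed under true type $\tau_j$, and show that it remains completed under any dominating report $\tau_j' \succ \tau_j$. Since a dominating report can be reached by changing one coordinate at a time, it suffices to check monotonicity separately in $v_j$ (higher), $D_j$ (lower), $d_j$ (earlier), and $a_j$ (later).

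The key structural input is the claim proved just above the theorem: at every time $t$, $\Alg_T$ processes the $C$ highest-priority available jobs, where priority is by class first (high to low) and, within a class, by starting time (low to high), and moreover a job is eligible on server $i$ only while $t \le d_{j'} - \gap D_{j'}$ (if unstarted) or $t\le d_{j'}$ (if started). Crucially, this ordering means that the set of times at which jobs of strictly higher class than $j$ (or of the same class but started earlier than $j$) occupy all $C$ servers is \emph{independent} of $j$'s own report — raising $j$'s class, or changing its window, cannot cause a higher-class job to be processed where it otherwise would not have been. So, exactly as in the single-server argument, completion of $j$ is governed by two monotone conditions: (i) there is some time in $[a_j, d_j - \gap D_j]$ at which not all $C$ servers are busy with jobs of class $\ge$ that of $j$ having earlier start times, so $j$ can begin on some server; and (ii) starting from the earliest such time, there are at least $D_j$ units of time before $d_j$ during which at least one server is free of jobs that would preempt $j$ (higher class, or same class started earlier), and — because migration back to a server is forbidden after $d_j - \gap D_j$ — this accounting must be done per server, but the per-server argument is precisely Theorem~\ref{thm:Committed_SingleServer_Correctness}-style bookkeeping on each local copy. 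Each of (i) and (ii) only becomes easier to satisfy when $j$'s class rises, its demand drops, its deadline moves later, or its arrival moves earlier.

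For the value, demand, and deadline coordinates I would mirror Step~1 and Step~2 of the single-server proof essentially verbatim, now appealing to the multi-server claim in place of Claim~\ref{thm:NonCommitted_Truthful_SingleServer_Claim2}: raising $v_j$ weakly raises $\ell_j$, so $j$'s start point $st(y_j)$ can only move earlier and it cannot be preempted by any job that would not already have preempted the lower-value version; lowering $D_j$ raises value density, pushes the latest start time $d_j - \gap D_j$ later, and shrinks the work needed; moving $d_j$ later only adds eligible time slots. The main obstacle — and the genuinely multi-server part of the argument — is the arrival-time coordinate together with migration: when $a_j$ decreases, I must argue that either the two executions coincide from time $a_j$ onward (if the earlier-arriving copy is not processed before $a_j$), or the earlier copy starts strictly sooner, and in the latter case that the subset of servers on which $j$ has been run by time $d_j - \gap D_j$ — the only servers it may use afterward — is no worse than in the original execution, so that the per-server completion accounting still goes through. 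Handling this requires tracking how the tie-breaking rule (ties broken toward later start time for preemption victims, toward earlier start time for the preemptor) interacts with migration, and showing it never flips in a way that hurts $j$; this is where the bulk of the technical care lies, and it is relegated to the appendix. Once all four coordinates are verified, monotonicity holds and truthfulness follows from Theorem~\ref{thm:truth}.
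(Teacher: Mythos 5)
Your high-level plan matches the paper's: invoke Theorem~\ref{thm:truth}, reduce truthfulness to monotonicity, split into the four coordinates, and use the multi-server claim (that $\Alg_T$ always runs the $C$ top available jobs, ordered by class then start time) to argue that a dominating report can only help. The paper's own proof is in fact terser: it proves the single key inequality that a higher value density leads to a no-later start time $st''(y_j) \le st'(y_j)$, concludes that raising the value density only raises $j$'s effective priority, and then says the single-server arguments carry over.

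Two points deserve correction. First, you assert that the set of times at which jobs of strictly higher class, \emph{or of the same class but started earlier than $j$}, occupy all $C$ servers is ``independent of $j$'s own report.'' For strictly higher classes this is fine, but for the same-class-earlier-start set it is not: $j$'s own start time is endogenous to $j$'s report. What the paper actually proves is a monotonicity statement, not an invariance statement — raising $j$'s priority moves $st(y_j)$ weakly earlier, which only \emph{shrinks} the set of same-class jobs that can outrank $j$ on the start-time tiebreak; it is not unchanged. Second, the appeal to Theorem~\ref{thm:Committed_SingleServer_Correctness} for ``per-server bookkeeping'' is misplaced: that theorem is about EDF feasibility for the committed scheduler and plays no role in this truthfulness argument. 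The genuine multi-server subtlety — that an earlier-starting job might have touched a different subset of servers by time $d_j - \gap D_j$, and migration is thereafter restricted to that subset — you flag correctly as the hard part, and you are in good company in not spelling it out, since the paper also leaves it implicit in its ``repeat the arguments'' step.
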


\begin{proof}
The proof follows directly from the equivalent single server proof.
Consider some job $j$ and two value-densities $\vd'_j \le \vd''_j$. Let $\ell', \ell'$ denote the corresponding classes and let $st'(y_j), st''(y_j)$ denote the corresponding start times. Notice that $\ell' \le \ell''$. We prove that also $st''(y_j) \le st'(y_j)$. Consider the case where $j$ has a value-density of $\vd''_j$. If $j$ is processed before time $st'(y_j)$, the claim holds. Otherwise, the behavior of the algorithm up to time $st'(y_j)$ is identical in both cases. Since now $j$ has a higher value density, it will also begin processing.
We conclude that increasing the value-density only increases the priority of $j$, with respect to the algorithm $\Alg_T$. Therefore, we can repeat the arguments that lead to prove the truthfulness of the single server algorithm.
\end{proof}

We conclude by proving the bound on the competitive ratio stated in Theorem \ref{thm:NonCommitted_Truthful_MultipleServers}.
\\

\begin{proofof}{Theorem \ref{thm:NonCommitted_Truthful_MultipleServers}}
Similar to the single server case, we can show that for every time $t$ a running job $j$ and a pending job $j'$ satisfy $\vd_{j'} \le \thres \vd_j$. Furthermore, each server does not begin processing any job $j$ after time $d_j - \gap D_j$. \cite{LMNY13} proved that in this case, there exists a feasible solution for the dual program corresponding to $\type$, with a total dual cost of:
\begin{eqnarray}
\label{eq:NonCommitted_Truthful_MultipleServers_CompetitiveRatio1}
\bigg[ 1 + \thres \cdot \frac{s}{s-\gap} \bigg] \cdot \bigg[ v(\Alg_T(\type)) + \sum_{i=1}^C \intop_{0}^{\infty} \vd_{\Alg_T}^i(t)dt \bigg].
\end{eqnarray}
We can bound the integral $\intop_{0}^{\infty} \vd_{\Alg_T}^i(t)dt$ for each server $i$ individually, as done for the single server case. Summing over all servers, we get the following bound on the competitive ratio of $\Alg_T$:
\begin{eqnarray}
\label{eq:NonCommitted_Truthful_MultipleServers_CompetitiveRatio2}
\CompetitiveRatio(\Alg_T)
  & \le &
\bigg[ 1 + \thres \cdot \frac{s}{s-\gap} \bigg] \cdot \bigg[ 1 + \frac{\thres}{(\thres - 1)(\gap - 1) - 1} \bigg].
\end{eqnarray}
By setting $\thres = \frac{\sqrt{\gap}}{\sqrt{\gap} - 1}$ and $\gap \approx s^{\ffrac{2}{3}}$ we obtain the bound stated in the theorem.
\end{proofof}


\section{Lower Bound on Committed Scheduling}
\label{sec:Appendix_Committed_LowerBound}

In the following section we prove Theorem \ref{thm:Committed_LowerBound}.
We first prove that no single server committed scheduler can provide any constant competitive ratio for $s<4$. We then generalize our bound for $C\le 3$ servers, and prove an impossibility result for $s < \ffrac{4}{C}$.

\begin{theorem}
\label{thm:Committed_LowerBound_SingleServer}
In the single server model, any online algorithm that commits to jobs on admission has an unbounded competitive ratio for $s<4$.
\end{theorem}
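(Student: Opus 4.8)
The plan is to construct an adversarial instance that forces any no-early-processing committed scheduler into a bad decision. The core tension a committed single-server scheduler faces is: when a job arrives, it has only until time $d_j - \beta D_j$ (with $\beta \ge 1$ for a useful commitment, but really we only need the implicit $0$-responsive-on-admission requirement here — the statement says ``commits to jobs on admission'') to decide whether to accept it. If it commits, it is locked into spending $D_j$ units before $d_j$; if it declines, a better job may never show up. The slackness budget $s < 4$ is exactly what makes this a losing game.

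First I would set up a two-phase instance. In phase one, present a single ``small'' job $j_1$ with demand $D_1 = 1$ and a window of length $s$, so $a_1 = 0$, $d_1 = s$. Because the scheduler is no-early-processing and commits on admission, at time $0$ it must either (a) reject $j_1$ — in which case the adversary stops and the competitive ratio is already unbounded (send nothing else, $\OPT = v_1 > 0$, algorithm gets $0$) — or (b) accept $j_1$, committing the server to process one unit of work inside $[0,s]$. In case (b), the adversary now floods the system: at time $0^+$ (or immediately after), release a large job $j_2$ with huge value $v_2 \gg v_1$, demand $D_2$, and a tight-ish window. The point is to choose $D_2$ and the window so that the committed obligation to $j_1$ (one unit somewhere in $[0,s]$) makes it impossible to also complete $j_2$ on a single server, while $\OPT$ — which never committed to $j_1$ — completes $j_2$ and ignores $j_1$. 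Recursing this construction (each accepted job spawns a more valuable competitor that the accumulated commitments block) drives the ratio to infinity; the slackness arithmetic is what forces $s < 4$: each doubling of demand halves the relative slack, and the chain of commitments can be sustained precisely while $s \cdot (\text{something}) > 4$ fails, i.e. $s<4$.

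The key steps, in order: (1) formalize the recursive instance — a nested sequence of jobs $j_1, j_2, \dots, j_k$ with geometrically increasing values and demands, windows chosen so that $j_{m+1}$'s window is contained in $j_m$'s residual window, and so that committing to $j_1,\dots,j_m$ leaves insufficient room for $j_{m+1}$; (2) show that a no-early-processing committed scheduler, at each job's arrival, faces the binary reject/accept-and-commit choice, and that rejecting at any stage already gives an unbounded ratio against the $\OPT$ that skips everything accepted so far and takes the rejected job; (3) verify the slackness bookkeeping: each job individually satisfies $d_{j_m} - a_{j_m} \ge s D_{j_m}$, yet the union of commitments is infeasible — this is where the inequality $s < 4$ enters, via a geometric-series bound on how much ``wasted'' committed time accumulates relative to the window shrinkage; (4) take $k \to \infty$ (or $v_k/v_1 \to \infty$) to conclude the ratio is unbounded.

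The main obstacle I expect is step (3): getting the constants to line up so that the construction works for \emph{every} $s < 4$ and not just, say, $s < 2$. The naive ``accept one small job, then one big job blocks it'' argument typically only yields $s < 2$; squeezing out the full factor of $4$ requires a more clever two-sided construction — releasing jobs both slightly before and with windows extending past, so that a single committed unit sits in the ``middle'' and wastes slack on both ends, effectively doubling the obstruction. I would look for a configuration where the committed job's mandatory unit must lie in an interval of length $\approx s D_2 / 2$ on each side of where $j_2$ needs the server, so that $j_2$ needs $D_2$ units in a window of length $< 2 D_2$, giving the factor $s/2 < 2$, i.e. $s < 4$. Pinning down this geometry — likely by placing $j_2$ with $a_{j_2}$ near the midpoint of $[0,s]$ and $D_2$ close to $s/2$ — is the delicate part; the rest is routine once the single-level gadget is right, since the recursion just iterates it.
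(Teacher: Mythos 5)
Your high-level framework is the right one and matches the paper's: force acceptance of every job by making each new job's value geometrically dominate the sum of all previous values, all jobs share (or are nested within) a common time horizon, and eventually the accumulated commitments become infeasible. Your steps (1), (2), and (4) are essentially correct, and you correctly flagged step (3) as the hard part. But you did not close that gap, and the fix you sketch is not what works.

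The paper's argument is not a ``two-sided'' window geometry. Instead, all jobs $j_1, j_2, \dots$ share the \emph{same} deadline $s$; the $n$-th job arrives at the moment the $(n-1)$-th job is admitted, with demand set to the maximum permitted by the slackness constraint, $D_n = (s - a_n)/s$. Tracking the free time $\Delta_n = s - t_n$ remaining at the $n$-th admission, one gets the second-order recurrence $\Delta_{n+1} \le \Delta_n - \Delta_{n-1}/s$. Setting $y_n = \Delta_{n-1}/\Delta_n$ and applying the elementary inequality $1 - \frac{1}{\alpha} \le \frac{\alpha}{4}$ (i.e., $(\alpha-2)^2 \ge 0$) gives $y_n / y_{n-1} \ge 4/s$. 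For $s < 4$ this forces $y_n \to \infty$, contradicting the bound $y_n \le s$ that follows from nonnegativity of $\Delta_{n+1}$. So the free time must eventually go negative: the algorithm was forced to accept a job it cannot complete. The constant $4$ is exactly the optimum of this quadratic inequality; it has nothing to do with slack being ``wasted on both ends.'' Your two-sided heuristic, as stated, gives a one-shot infeasibility gadget, but you never explain how to iterate it, and the arithmetic you offer (``$D_2$ units in a window of length $<2D_2$, giving $s/2<2$'') is not a valid derivation of a slackness threshold. Separately, you assume the scheduler must decide at the moment $j_1$ arrives; in fact a committed scheduler may delay its decision, and the paper handles this by having the adversary simply wait until each job is admitted before releasing the next, and then separately treats ``non-natural'' schedulers that admit a new job before discharging prior commitments. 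Your proposal has the correct skeleton, but the recurrence analysis that produces the sharp constant is missing, and the replacement you propose would not supply it.
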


To prove Theorem \ref{thm:Committed_LowerBound_SingleServer}, we describe the following adversarial strategy.
The adversary sets the value of each arriving job to be significantly larger than the sum of all previous jobs, and waits for the job to be accepted before submitting a new job. This forces the algorithm to admit all arriving jobs, otherwise the algorithm would not maintain a constant competitive ratio. In addition, all jobs share the same deadline.
We first make a simplifying assumption on the scheduling algorithm, which we later relax. Given that all deadlines are identical, it is natural to assume that the scheduling algorithm does not admit a job before completing all previous commitments. We call such an algorithm \emph{natural}.


\begin{lemma}
\label{thm:Committed_LowerBound_SingleServer_Natural}
In the single server model, any online \emph{natural} algorithm that commits to jobs on admission has an unbounded competitive ratio for $s<4$.
\end{lemma}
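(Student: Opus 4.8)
\textbf{Proof plan for Lemma~\ref{thm:Committed_LowerBound_SingleServer_Natural}.}
The plan is to exhibit, for any $s < 4$, an adversarial sequence of jobs on which a natural committed single-server scheduler is forced to accept every job it sees, yet can only complete a vanishing fraction of the total value that $\OPT$ collects. Fix $s < 4$ and let $\varepsilon > 0$ be small enough that $s \le 4 - \varepsilon$. All jobs will share a common deadline $d$ (to be taken large) and the adversary will present jobs one at a time: each new job $j_{k+1}$ is released only after $j_k$ has been accepted, and is given value $v_{k+1} > k \cdot \max_{i \le k} v_i$ (super-increasing values). Since rejecting any $j_k$ immediately caps the algorithm's value at $\sum_{i<k} v_i = o(v_k)$ while $\OPT$ could have taken $j_k$ alone, a scheduler with bounded competitive ratio must accept every job; so we may assume all jobs are admitted, and by ``natural''ness the algorithm never admits $j_{k+1}$ before finishing all of $j_1,\dots,j_k$.

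The geometry is the heart of the argument. I would choose the demands and arrival times so that a natural committed scheduler accumulates more and more unfinished committed work, while the slackness constraint $d_j - a_j \ge s D_j$ with $s < 4$ is never violated. Concretely: when job $j_{k+1}$ arrives at time $a_{k+1}$, the algorithm (being natural and having committed to $j_1,\dots,j_k$) still owes the residual processing of those jobs; since it has committed to complete $j_{k+1}$ as well, by time $a_{k+1}$ it must be that all of $j_1,\dots,j_{k+1}$ can still be packed into $[a_{k+1}, d]$, i.e. $\sum_{i=1}^{k+1} D_i \le d - a_{k+1}$ minus whatever has already been processed. I would track the quantity ``total remaining committed work at time $a_{k+1}$'' and show that, because the algorithm is natural, it is forced to have processed job $j_k$ essentially to completion before $a_{k+1}$, whereas $\OPT$ — not bound by commitment — can afford to abandon every $j_i$ with $i < k$ and devote the whole window to the single most valuable job. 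Calibrating $D_{k+1}$ to grow geometrically (ratio slightly above the threshold determined by $s$) while keeping $a_{k+1}$ chosen so that $d - a_{k+1} = s D_{k+1}$ exactly, one derives a recursion on the ``budget'' $d - a_k$ that becomes infeasible once $s < 4$: the natural scheduler would need $d - a_{k+1} \ge D_{k+1} + (\text{residual of } j_k) \ge D_{k+1} + (\text{something like } D_k)$, and with the slackness relations $d - a_i = s D_i$ this forces $s D_{k+1} \ge D_{k+1} + c D_k$ for a constant $c \to 3$ (coming from the factor-of-$2$ window split inherent to any commitment argument combined with the natural-scheduler obligation), i.e. $s \ge 4$ in the limit — contradiction. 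Hence for $s < 4$ no bounded competitive ratio is attainable: pushing $k \to \infty$ and $d \to \infty$ gives $v(\OPT)/v(\Alg) \to \infty$.

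I would organize the proof as: (i) set up the super-increasing value schedule and argue every job is admitted; (ii) use naturalness to lower-bound the work the algorithm has irrevocably committed and (partly) executed by each arrival time; (iii) write down the slackness constraints $d - a_k = s D_k$ and the packing feasibility constraint the algorithm must satisfy, and combine them into a numeric recursion on $D_k$ (or on $d - a_k$); (iv) show this recursion has no solution with all $D_k > 0$ and $d - a_k$ bounded by the fixed horizon when $s < 4$, giving the contradiction; (v) conclude the competitive ratio is unbounded by letting the construction run for arbitrarily many rounds. The main obstacle I expect is step (iii)–(iv): getting the constants to line up so that the threshold comes out to be \emph{exactly} $4$ rather than some weaker bound. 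This requires choosing the demand growth ratio optimally (it should be the ratio that makes the recursion tight, essentially the golden-ratio-like root of the quadratic coming from the window-splitting), and carefully accounting for how much of job $j_k$ the natural scheduler has necessarily completed before $a_{k+1}$ — the commitment forces it to have been working, but I must rule out the scheduler cleverly interleaving partial work to delay the blow-up. Showing that no such interleaving helps (because ``natural'' forbids admitting $j_{k+1}$ until $j_k$ is done, and admitting is forced) is the crux; after that, relaxing to non-natural algorithms is deferred to the subsequent lemma/theorem.
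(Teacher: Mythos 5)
Your overall strategy matches the paper's: super-increasing values to force admission, a common deadline, demands set so that slackness is tight ($d_j - a_j = s D_j$), and naturalness forcing each admitted job to be fully processed before the next is admitted, leading to a recursion on the remaining time budget that becomes infeasible precisely when $s < 4$. Where you are vague --- and where your tentative accounting is off --- is the recursion itself. Two corrections. First, the demands do not grow geometrically: the adversary sets $D_n = (s - a_n)/s$, the \emph{maximum} demand consistent with the fixed common deadline $s$ and the slackness constraint, and since the arrival times $a_n = t_{n-1}$ march toward $s$, the demands in fact shrink. Second, the threshold $4$ does not come from a first-order inequality like $sD_{k+1}\ge D_{k+1}+cD_k$ with ``$c\to 3$''; it comes from a genuinely second-order recursion. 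Writing $\Delta_n = s - t_n$ for the time remaining before the common deadline when $j_n$ is admitted, naturalness plus work-conservation gives $t_{n+1}\ge t_n + D_n$, and since $D_n = \Delta_{n-1}/s$, one gets
\begin{equation*}
\Delta_{n+1} \;\le\; \Delta_n - \frac{\Delta_{n-1}}{s}.
\end{equation*}
The characteristic polynomial $\lambda^2 - \lambda + 1/s$ has real roots iff $s\ge 4$; for $s<4$ the sequence must eventually go negative. The paper makes this elementary by tracking $y_n = \Delta_{n-1}/\Delta_n$: from the recursion and $1-\tfrac1\alpha\le\tfrac\alpha4$ one gets $y_n / y_{n-1} \ge 4/s > 1$, so $y_n\to\infty$, contradicting the feasibility bound $y_n\le s$ that follows from $D_n\le\Delta_n$. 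Filling in your step (iii)--(iv) with this second-order argument turns your sketch into the paper's proof; the rest (forced admissions via super-increasing values, deferring non-natural algorithms to a separate argument) is already correctly in place.
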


\begin{proof}
First note that in order to prove the lower bound, it is enough to consider work preserving algorithms. An algorithm is considered \emph{work preserving} if the algorithm does not remain idle if it has unmet commitments. We can assume this since every algorithm can be transformed into a work preserving algorithm and perform at least as good as the original algorithm for any input.

Assume towards contradiction that there is a natural algorithm with a bounded competitive ratio of $c \ge 1$. Denote by $j_1,j_2,\dots$ the jobs submitted in order of their submission. We construct an adversarial strategy subject to the following invariants.

\begin{invariant}
\label{invariant:inv1}
Every arriving job $j$ has a deadline $d_j = s$ and a demand $D_j = \frac{s - a_j}{s}$, which is the largest possible demand, with respect to the slackness constraint.
\end{invariant}

\begin{invariant}
\label{invariant:inv2refined}
A new job arrives immediately when the previous job is admitted. Formally, let $t_n$ denote the admission time of job $j_n$. Then, $a_{n+1} = t_n$.
\end{invariant}

\noindent
Recall that each job is associated with a type $\type_j = \big\langle v_j, D_j, a_j, d_j \big\rangle$. The first job $j_1$ has a type $\langle 1, 1, 0, s \rangle$.
As long as the algorithm does not accept the job, the adversary does not submit any additional jobs, as stated in Invariant \ref{invariant:inv2refined}.
Eventually, the algorithm must accept job $j_1$, otherwise the competitive ratio will be unbound.
At time $t_1$, the adversary submits the next job $j_2$ with a type $\langle c+1, \frac{s-t_1}{s}, t_1, s \rangle$. Specifically, the value of $j_2$ is significantly higher than $j_1$, and the demand is set according to \ref{invariant:inv1}.
Job $j_2$ must be accepted, to maintain the guaranteed competitive ratio. We now submit job $j_3$ and so forth.

In general, the demand of each job $j_n$ is set as $D_n = \frac{s - t_{n-1}}{s}$, in accordance to Invariant \ref{invariant:inv1}. The value of each job $j_n$ is set as $v_n = (c+1)^{n-1}$. Note that this value is at least $c$ times larger than the sum of all previous job values.
The adversary continues this strategy until the algorithm is forced to commit a job which cannot be completed by its deadline among with its previous commitments.

Define $\ell_{n+1}$ as the time between the completion of job $j_n$ and the admission of job $j_{n+1}$. Formally, we can write $\ell_{n+1} = t_{n+1} - (t_n + D_n)$, since the algorithm executes $j_n$ starting time $t_n$ without interruption. Note that for natural algorithms, $\ell_{n+1} \ge 0$, hence $t_{n+1} \ge t_n + D_n$.
Let $w(t)$ denote the total remaining unprocessed demand of admitted jobs at time $t$, that is, the total time needed for the scheduler to meet all remaining commitments. Correspondingly, let $f(t) = (s-t) - w(t)$ denote the available free time before the common deadline $s$. By the assumption that the scheduling algorithm is natural, it holds that $w(t_n) = D_n$ and therefore $f(t_n) = (s-t_n) - D_n$.
By combining the last equation with the definition of $D_j$ and the bound on $t_n$, we get that:
\begin{equation}
\label{eqn:mainXn}
s - t_{n+1} \,\,\,\,\le\,\,\,\, f(t_n) \,\,\,\,=\,\,\,\, s - t_n - D_n  \,\,\,\,=\,\,\,\, s - t_n - \frac{s-t_{n-1}}{s}
\end{equation}
Define $\Delta_n = s - t_n$ for every $n$. Intuitively, $\Delta_n$ represents the free time immediately before the admission of job $j_n$.
Equation \eqref{eqn:mainXn} can be written as:
\begin{eqnarray}
\label{eq:mainDeltan}
\Delta_{n+1} & \le & \Delta_n - \frac{\Delta_n-1}{s}
\end{eqnarray}
Next, we prove that if $s<4$ then $\Delta_n$ becomes negative, and therefore so does $f(t_n)$ by definition.
Assume towards contradiction that for every $n$ we have $\Delta_n > 0$.
Define $y_n = \frac{\Delta_{n-1}}{\Delta_n}$.
Notice that $y_n > 1$ for every $n$, since $\Delta_n$ is monotonically decreasing for $n$ by definition. By dividing \eqref{eq:mainDeltan} by $\Delta_n$ and using $y_n$, we get that $\frac{1}{y_n} \le 1 - \frac{y_{n-1}}{s}$. By using the known inequality $1 - \frac{1}{\alpha} \le \frac{\alpha}{4}$ for every $\alpha \ge 0$, we get that $\frac{1}{y_n} \le \frac{s}{4 y_{n-1}}$, or alternatively, $\frac{y_n}{y_{n-1}} > \frac{4}{s} > 1$ (since $s<4$). Hence, when $n \rightarrow \infty$ we get that $y_n \rightarrow \infty$.
However, by \eqref{eq:mainDeltan} we have: $0 \le \Delta_{n+1} \le \Delta_n - \frac{\Delta_{n-1}}{s}$ and therefore $\frac{\Delta_{n-1}}{\Delta_n} = y_n \le s$, which is a contradiction.
\end{proof}


Until now, we have only considered natural algorithms. Notice that for general scheduling algorithms, some of the values $\ell_n$ might be negative. To overcome this difficulty, we modify the strategy of the adversary.
\\

\begin{proofof}{Theorem \ref{thm:Committed_LowerBound_SingleServer}}
Assume towards contradiction there is a (general) algorithm that guarantees a competitive ratio of $c \ge 1$. As in the proof of Lemma \ref{thm:Committed_LowerBound_SingleServer_Natural}, it is enough to consider work preserving algorithms. We slightly modify the adversary described in Lemma \ref{thm:Committed_LowerBound_SingleServer_Natural} to handle cases where $\ell_n < 0$.
Let $n$ be the first job for which $\ell_n < 0$. This means that the algorithm admits $j_n$ before all previous commitments have been met.
Recall that $j_n$ arrives at time $t_{n-1}$, when job $j_{n-1}$ is admitted. Hence, job $j_n$ is admitted sometime during the execution of $j_{n-1}$, since by our choice of $n$, all previous jobs $j_1 \dots j_{n-2}$ have been completed.
The adversary does the following. Instead of submitting job $j_{n+1}$ immediately at time $t_n$ (as stated in Invariant \ref{invariant:inv2refined}), the adversary waits $|\ell_n|$ time before submitting $j_{n+1}$. Let $a'_{n+1} = t_n + \max\{0, -\ell_n\}$ denote the new arrival time of job $j_{n+1}$. Notice that $a'_{n+1}$ corresponds to the arrival time $a_{n+1}$ of $j_{n+1}$ if $\ell_n$ would have been $0$. We claim that by waiting $|\ell_n|$ time, the adversary sees the same setting at time $a'_{n+1}$ as he would for a natural algorithm with all previous values $\ell_1,\dots,\ell_{n-1} \ge 0$ and $\ell_n = 0$.
This follows since we assume the algorithm is work preserving, thus $w(a'_{n+1})=D_n$, as it would for the case where $\ell_n = 0$. Specifically, $f(a'_{n+1}) = s - a'_{n+1} - D_n$, as before.
We repeat the same correcting procedure for every succeeding job $j_n$ for which $\ell_n < 0$, if such job exists.
Notice that since $f$ is monotonically non-increasing, if $f(t_n) < 0$ for some job $j_n$ then $f(a_{n+1}) = f(a'_{n+1}) < 0$. Therefore, the algorithm is guaranteed to fail, as in Lemma \ref{thm:Committed_LowerBound_SingleServer_Natural}.
\end{proofof}



We now generalize our impossibility result for $1 \le C \le 3$ servers.
\\

\begin{proofof}{Theorem \ref{thm:Committed_LowerBound}}
Assume towards contradiction that there exists a committed algorithm $\Alg^C$ for $C$ servers with a bounded competitive ratio for $s < \ffrac{4}{C}$. We can construct a single server algorithm $\Alg^1$ for $s<4$ with bounded competitive ratio, contradicting Theorem \ref{thm:Committed_LowerBound_SingleServer}.
To do so, we translate every time unit for the $C$ server algorithm $\Alg^C$ to $C$ consecutive time slots for $\Alg^1$.
\end{proofof}


\section{Truthful Committed Scheduling}
\label{sec:Appendix_Committed_Truthful}

\newcommand{\tD}{\tilde{D}}

In Section \ref{sec:Committed_Truthful_PublicArrivalTimes} we described a committed scheduler that is truthful with respect to values, deadlines, and demands, but not necessarily with respect to arrival time.  In this section we show how to extend our construction to be fully truthful with respect to all parameters.  For ease of readability we will drop the parametrization with respect to $\omega$, and simply set $\omega = \frac{1}{2}$ in all invocations of earlier results.

Recall that our method for building responsive schedulers is to split each job's execution window into a simulation phase and an execution phase.  As discussed in Section \ref{sec:Committed_Truthful_PublicArrivalTimes}, the reason that the scheduler from Section \ref{sec:Committed_Truthful_PublicArrivalTimes} is not truthful with respect to arrival time is that a job may benefit by influencing the time interval in which the simulation phase is executed.  By declaring a later arrival, a job may shift the simulation to a later, less-congested time, increasing the likelihood that the simulator accepts the job.

Our strategy for addressing this issue is to impose additional structure on the timing of simulations.  Roughly speaking, we will imagining partitioning (part of) each job's execution window into many sub-intervals.  A simulation will be run for \emph{each} subinterval, and the job will be admitted if any of these simulations are successful.  Our method for selecting these \emph{simulation intervals} will be monotone: reporting a smaller execution window or a larger job can only result in smaller simulation intervals.  Using the truthful scheduling algorithm from Section \ref{sec:NonCommitted_Truthfulness} as a simulator will then result in an overall truthful scheduler.  The competitive ratio analysis will follow by extending our dual-fitting technique to allow multiple simulations for a single job.


\paragraph{Defining Simulation Intervals}

Our method of choosing sub-intervals will be as follows.  Choose a parameter $\sigma > 1$ to be fixed later; $\sigma$ will determine a minimal slackness constraint for our simulations.  Given slackness parameter $s$ and a job $\type_j = \langle v_j, D_j, a_j, d_j \rangle$, let $k_j$ be the minimal integer such that $2^{k_j} \geq 2 \sigma D_j$.  The value $2^{k_j}$ will be the minimal length of a simulation window.  Simulation intervals will have lengths that are powers of $2$, and endpoints aligned to consecutive powers of $2$.

We say an interval $[a,b]$ is \emph{aligned} for job $j$ if:
\begin{enumerate}
\item $[a,b] \subseteq [a_j, d_j]$,
\item $[b, b + (b-a)] \subseteq [a_j, d_j]$, and
\item $a = t \cdot 2^k$ and $b = (t+1) \cdot 2^k$ for some integers $t \geq 0$ and $k \geq k_j$.
\end{enumerate}
Write ${\cal C}_j$ for the collection of maximal aligned intervals for job $j$, where maximality is with respect to set inclusion.  For example, if $k_j = 2$ and $[a_j, d_j] = [9, 50]$, then ${\cal C}_j = \{ [12, 16], [16, 32], [32, 40], [40, 44] \}$.  Note that $[16, 20]$ is not in ${\cal C}_j$ because it is not maximal: it is contained in $[16, 32]$.  Also, $[32, 48]$ is not in ${\cal C}_j$ because it is not aligned for job $j$: the second condition of alignment is not satisfied, since $[48, 64] \not\subseteq [9, 50]$.

We refer to ${\cal C}_j$ as the simulation intervals for job $j$; it is precisely the set of intervals on which the execution of job $j$ will be simulated.  We now make a few observations about simulation intervals.

\begin{proposition}
\label{prop.aligned.nonempty}
If $\sigma \leq s/12$ then ${\cal C}_j$ is non-empty.
\end{proposition}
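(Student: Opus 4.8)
The plan is to reduce the statement to exhibiting a \emph{single} aligned interval for job $j$, and then to construct one of the shortest admissible length, $2^{k_j}$. First note that once we know some aligned interval exists, ${\cal C}_j$ is automatically non-empty: there are only finitely many aligned intervals for $j$, since an aligned interval has length $2^k$ for some integer $k$ with $2^{k_j} \le 2^k \le d_j - a_j$ (finitely many such $k$) and, for each such $k$, only finitely many dyadic intervals $[t\cdot 2^k,(t+1)2^k]$ fit inside $[a_j,d_j]$. Hence every aligned interval is contained in a maximal one. So it suffices to produce one aligned interval, and I would look for one of the form $[a,\,a+2^{k_j}]$ with $a = t\cdot 2^{k_j}$, $t$ a non-negative integer.

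For such an interval, conditions (1) and (2) of alignment together amount to $a \ge a_j$ and $a + 2^{k_j+1} \le d_j$: condition (1) requires $a\ge a_j$ and $a+2^{k_j}\le d_j$, while condition (2), which here reads $[a+2^{k_j},\,a+2^{k_j+1}]\subseteq[a_j,d_j]$, adds $a+2^{k_j+1}\le d_j$ and subsumes the earlier upper bound. Also $t\ge 0$ is automatic because $a\ge a_j\ge 0$, so condition (3) is met in full with $k=k_j$. Thus an aligned interval of length $2^{k_j}$ exists if and only if the closed interval $I := [\,a_j,\; d_j - 2^{k_j+1}\,]$ contains a multiple of $2^{k_j}$ (necessarily non-negative, since $a_j\ge 0$). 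The elementary fact I would invoke here is that any closed real interval of length at least $2^{k_j}$ contains a multiple of $2^{k_j}$ (take $\lceil x/2^{k_j}\rceil\cdot 2^{k_j}$ where $x$ is the left endpoint), so it suffices to check
\[
|I| \;=\; (d_j - a_j) - 2^{k_j+1} \;\ge\; 2^{k_j},
\qquad\text{i.e.}\qquad
d_j - a_j \;\ge\; 3\cdot 2^{k_j}.
\]

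To finish I would bring in the two remaining ingredients. Minimality of $k_j$ gives $2^{k_j-1} < 2\sigma D_j$, hence $2^{k_j} < 4\sigma D_j$ and $3\cdot 2^{k_j} < 12\sigma D_j$. The slackness assumption gives $d_j - a_j \ge s D_j$. Combining these with the hypothesis $\sigma \le s/12$ yields $d_j - a_j \ge s D_j \ge 12\sigma D_j > 3\cdot 2^{k_j}$, which is more than the displayed inequality requires, completing the proof. There is no genuine obstacle here; the only points needing care are (a) collapsing the two containment conditions of alignment into the single arithmetic condition on $I$, and (b) tracking the constant $12 = 3\cdot 4$ — the factor $3$ accounts for the simulation window, the equally long execution window that must follow it, and the extra slack needed to guarantee a dyadic starting point, while the factor $4$ is the at-most-doubling overshoot inherent in the choice of $k_j$.
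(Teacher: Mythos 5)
Your proof is correct and takes essentially the same route as the paper's: both reduce to the observation that an aligned interval of the minimal length $2^{k_j}$ exists as soon as $d_j - a_j \geq 3\cdot 2^{k_j}$, and both combine $2^{k_j} < 4\sigma D_j$ (from minimality of $k_j$) with $d_j - a_j \geq sD_j$ to close the gap. The only presentational difference is that the paper argues the contrapositive while you construct the interval directly, and you make explicit the finiteness argument guaranteeing that an aligned interval implies a \emph{maximal} one; the substance is the same.
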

\begin{proof}
We prove the contrapositive.  If ${\cal C}_j$ is empty, then there is no subinterval of $[a_j, d_j]$ of the form $[t \cdot 2^{k_j}, (t+2) \cdot 2^{k_j}]$.  It must therefore be that $[a_j, d_j]$ is contained in an interval of the form $( t \cdot 2^{k_j}, (t+3) \cdot 2^{k_j} )$.  Thus $(d_j - a_j) < 3 \cdot 2^{k_j}$.  From the definition of $k_j$, we have $2^{k_j} < 4 \sigma D_j$, and hence $(d_j - a_j) < 12 \sigma D_j$.  Since job $j$ has slackness $s$, we conclude $\sigma > s / 12$.
\end{proof}

\begin{proposition}
\label{prop.aligned.large}
If ${\cal C}_j$ is non-empty then ${\cal C}_j$ is a disjoint partition of an interval $I \subseteq [a_j, d_j]$, with $|I| \geq \frac{1}{4}(d_j - a_j)$.
\end{proposition}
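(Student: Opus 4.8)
Everything is driven by the dyadic structure behind alignment. Write $L = d_j - a_j$, and for an integer $k$ let $q_k(x) = 2^k \lfloor x/2^k \rfloor$ denote the left endpoint of the scale-$k$ dyadic interval containing $x$. I would first record a characterization of $I := \bigcup {\cal C}_j$: up to the countable (topologically harmless) set of interval endpoints, a point $x \in [a_j,d_j]$ lies in $I$ iff there is a scale $k \ge k_j$ with $q_k(x) \ge a_j$ and $q_k(x) + 2^{k+1} \le d_j$ --- exactly the conditions making the dyadic interval $[q_k(x), q_k(x)+2^k]$ aligned for $j$, and every aligned interval is contained in a maximal one. From this I extract a convenient \emph{sufficient} test: if $2^k \le \min\{x-a_j,\, (d_j-x)/2\}$ for some $k \ge k_j$, then $q_k(x) > x - 2^k \ge a_j$ and $q_k(x) + 2^{k+1} \le x + 2^{k+1} \le d_j$, so $x \in I$. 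Disjointness of ${\cal C}_j$ is then immediate: distinct members are distinct dyadic intervals, hence either interior-disjoint or nested, and a nested pair would contradict maximality of the smaller one. It remains to prove $I$ is an interval and $|I| \ge L/4$.

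\emph{Length bound.} Let $k^\star$ be the largest scale $k \ge k_j$ for which $[t\cdot 2^k, (t+2)\cdot 2^k] \subseteq [a_j,d_j]$ for some integer $t$; this exists since ${\cal C}_j \ne \emptyset$, and is finite since then $2^{k+1} \le L$. Two estimates suffice. First, the interval $[t\cdot 2^{k^\star}, (t+1)\cdot 2^{k^\star}]$ witnessing $k^\star$ is aligned; since (by maximality of $k^\star$) no aligned interval has scale exceeding $k^\star$, it is maximal, so $|I| \ge 2^{k^\star}$. Second, the sufficient test with $k = k^\star$ shows $[a_j + 2^{k^\star},\, d_j - 2^{k^\star+1}] \subseteq I$, hence $|I| \ge L - 3\cdot 2^{k^\star}$ (trivially true if the right side is nonpositive). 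Combining, $|I| \ge \max\{2^{k^\star},\, L - 3\cdot 2^{k^\star}\} \ge L/4$, since $\max\{u, L-3u\}$ is minimized at $u = L/4$.

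\emph{$I$ is an interval.} The sufficient test with $k = k_j$ gives $[a_j + 2^{k_j},\, d_j - 2^{k_j+1}] \subseteq I$ whenever this middle block is nonempty; the residual case $L < 3\cdot 2^{k_j}$ forces $k^\star = k_j$ and, since three adjacent scale-$k_j$ intervals cannot fit, makes ${\cal C}_j$ a single interval, which is trivial. On the left stub $[a_j,\, a_j + 2^{k_j})$: if $x < x'$ both lie in the stub and $x \in I$ via scale $k$, then $q_k(x) \le q_k(x') \le x' < a_j + 2^k \le q_k(x) + 2^k$ with $q_k(x), q_k(x')$ both multiples of $2^k$, so $q_k(x') = q_k(x)$ and the same $k$ witnesses $x' \in I$; hence $I$ meets $[a_j, a_j + 2^{k_j})$ in a sub-interval abutting $a_j + 2^{k_j}$. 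On the right stub $(d_j - 2^{k_j+1},\, d_j]$: any witnessing scale $k$ must satisfy $2^k < d_j - x < 2^{k_j+1}$, forcing $k = k_j$, so there $x \in I$ iff $q_{k_j}(x) \in [a_j,\, d_j - 2^{k_j+1}]$; the lower bound holds automatically on the stub ($q_{k_j}(x) > x - 2^{k_j} > d_j - 3\cdot 2^{k_j} \ge a_j$), and since $q_{k_j}$ is a nondecreasing step function this is again a sub-interval, abutting $d_j - 2^{k_j+1}$. Gluing the left-stub piece, the middle block, and the right-stub piece shows $I$ is an interval, and $I \subseteq [a_j,d_j]$ by condition (1) of alignment.

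The main obstacle is this last step: the length bound falls out cleanly from the single generous sufficiency test, but ruling out internal gaps in $I$ needs the dyadic bookkeeping near $a_j$ and near $d_j$ sketched above --- in particular the point that in the right end-stub only the finest admissible scale $k_j$ can contribute, which is exactly what makes that piece connected. One should also verify that the endpoint caveat in the characterization is harmless, which is routine since ${\cal C}_j$ is a finite union of closed intervals.
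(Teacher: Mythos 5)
Your proof is correct, and at the high level it mirrors the paper's: both isolate the largest scale $k^\star$ appearing among aligned intervals and play two inequalities against each other to get the $\tfrac14$ bound. The details differ in two ways worth noting.

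For the length bound, you take the maximum of $2^{k^\star}$ (the witness interval at the coarsest occurring scale is maximal, hence in ${\cal C}_j$) and $L - 3\cdot 2^{k^\star}$ (via your "sufficiency test" at scale $k^\star$), then observe $\max\{u, L-3u\} \ge L/4$. The paper instead lets $m = t_2 - t_1 \ge 2$ count scale-$k^\star$ grid cells spanning $[a_j,d_j]$ and shows $|I| \ge (m-1)2^{k^\star}$ and $L \le (m+2)2^{k^\star}$, giving the ratio $\frac{m-1}{m+2} \ge \frac14$. Your two estimates are in fact implied by the paper's two (take $m=2$ for the first; combine for the second), so you are using weaker intermediate facts, but they still suffice — and the "$\max$" reformulation is arguably cleaner to state.

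For connectedness of $I$, you work considerably harder than necessary. The paper's one-line justification ("for each $k$, the aligned intervals of length $2^k$ together form a contiguous interval") hides the key structural fact that makes everything collapse: \emph{every aligned interval of scale $k > k_j$ is the disjoint union of $2^{k-k_j}$ aligned intervals of scale $k_j$}. (If $[t2^k,(t+1)2^k]$ is aligned, then each constituent $[s2^{k_j},(s+1)2^{k_j}]$ with $t2^{k-k_j}\le s\le (t+1)2^{k-k_j}-1$ satisfies $s2^{k_j}\ge t2^k\ge a_j$ and $(s+2)2^{k_j}\le (t+1)2^k + 2^{k_j}\le (t+2)2^k\le d_j$.) Consequently $I = \bigcup_k B_k = B_{k_j}$, where $B_k$ denotes the union of scale-$k$ aligned intervals, and $B_{k_j}$ is manifestly a single contiguous block $\big[\lceil a_j/2^{k_j}\rceil 2^{k_j},\ (\lfloor d_j/2^{k_j}\rfloor -1) 2^{k_j}\big]$. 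Your left-stub / middle-block / right-stub decomposition, including the observation that only scale $k_j$ can witness membership near $d_j$, is correct but replaces one structural lemma with a three-case dyadic bookkeeping argument. If you reorganize around the decomposition fact above, both the connectedness claim and (with $k_j$ in place of $k^\star$) the length bound follow in a few lines, which is closer to what the paper has in mind.
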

\begin{proof}
Disjointness follows because the intervals in ${\cal C}_j$ are aligned to powers of $2$ and are maximal.  That their union forms an interval follows from the fact that, for each $k$, the aligned intervals of length $2^k$ together form a contiguous interval.  It remains to bound the length of the interval $I$.

Choose $k$ such that the maximal-length interval in ${\cal C}_j$ has length $2^k$.  Choose $t_1$ and $t_2$ so that $a_j \in ( (t_1 - 1) 2^k, t_1 2^k]$ and $d_j \in [ t_2 2^k, (t_2 + 1) 2^k)$.  Then $(d_j - a_j) \leq (t_2 - t_1 + 2) \cdot 2^k$.  Also, since ${\cal C}_j$ contains an interval of length $2^k$, we must have $(t_2 - t_1) \geq 2$.  Moreover, each interval $[ t 2^k, (t+1)2^k]$ with $t_1 \leq t \leq t_2 - 1$ is aligned for job $j$, and hence $|I| \geq (t_2 - t_1 - 1) 2^k$.  We conclude
$$|I| \geq (t_2 - t_1 - 1) 2^k \geq (d_j - a_j) \cdot \frac{t_2 - t_1 - 1}{t_2 - t_1 + 2} \geq (d_j - a_j) \cdot \frac{1}{4}$$
where in the last inequality we used $(t_2 - t_1) \geq 2$.
\end{proof}

\paragraph{The Scheduling Mechanism: Single Server}

We now describe our truthful committed scheduler, denoted $\Alg_{TC}$.  We begin by describing the construction for the single-server case.  The main idea is a straightforward extension of the simulation methodology described in Section \ref{sec:Committed}.
%
%
%
For each job $j$ that arrives with declared type $\type_j = \langle v_j, D_j, a_j, d_j \rangle$, and for each subinterval $[a^{(i)}_j, b^{(i)}_j] \in {\cal C}_j$,
we will create a new \emph{phantom job} $\type^{(i)}_j = \langle v_j, 2 D_j, a^{(i)}_j, b^{(i)}_j \rangle$.  We will then employ the online, truthful, non-committed scheduling algorithm $\Alg_{\cal T}$ from Section \ref{sec:NonCommitted_Truthfulness}, using these phantom jobs as input.  If a phantom job $\type^{(i)}_j$ completes, then all subsequent phantoms for the corresponding job $j$ are removed from the input, and job $j$ is subsequently processed on the ``real'' server (using an EDF scheduler).  That is, a job is admitted if any of its phantom jobs complete; otherwise, if none of its phantoms completes, then it is rejected.  Note that since the phantom jobs have disjoint execution windows, it is known whether a given phantom completes before any subsequent phantom jobs arrive, and hence phantom jobs can be ``removed'' in an online fashion.


\begin{theorem}
\label{thm.truthful.reduction}
Choose $\sigma > 1$ and suppose $s \geq 12 \sigma$.  Then the scheduler $\Alg_{TC}$ described above is $2 \sigma$-responsive, truthful, and has competitive ratio bounded by
\[ \emph{\CompetitiveRatio}_{\Alg_{TC}}(s) \le 8 \cdot \emph{\CompetitiveRatio}_{\Alg}\left(\sigma \right). \]
\end{theorem}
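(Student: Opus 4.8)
The plan is to establish the three claims — responsiveness, truthfulness, and the competitive-ratio bound — separately, with the competitive-ratio bound being the technical heart of the argument.

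For \emph{responsiveness}: each job $j$ is admitted or rejected once all of its phantom jobs $\type^{(i)}_j$ have finished their simulation windows, i.e., by time $\max_i b^{(i)}_j$. By Proposition \ref{prop.aligned.nonempty}, since $s \ge 12\sigma$ we have $\sigma \le s/12$, so ${\cal C}_j$ is non-empty, and by Proposition \ref{prop.aligned.large} the union of simulation intervals is contained in $[a_j,d_j]$; in particular $\max_i b^{(i)}_j \le d_j$. I would then verify that the admitted job, with remaining window $[\max_i b^{(i)}_j, d_j]$ of length at least $2D_j \ge 2\sigma D_j - (\text{something})$... more carefully: each phantom has window length at least $2^{k_j} \ge 2\sigma D_j$ and demand $2D_j$, so EDF on the real server has $(b^{(i)}_j - a^{(i)}_j)/2 \ge \sigma D_j \ge D_j$ units available after the simulation window that completed it — wait, I need the window after $b^{(i)}_j$, not before. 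The alignment condition (2), $[b,b+(b-a)]\subseteq[a_j,d_j]$, guarantees a further block of length $(b^{(i)}_j - a^{(i)}_j) \ge 2\sigma D_j$ after $b^{(i)}_j$ and before $d_j$, which more than suffices to run $j$ (demand $D_j$) under EDF; and since all admitted jobs' post-simulation blocks can be shown disjoint by the same argument as in Theorem \ref{thm:Committed_SingleServer_Correctness} (applied with the doubled demand absorbing the necessary slack), EDF completes all admitted jobs. The commitment time is $\max_i b^{(i)}_j \le d_j - 2\sigma D_j$ by condition (2) applied to the last interval, giving $2\sigma$-responsiveness, and the critical-value payment can be computed at that time exactly as in Section \ref{sec:Committed_Truthful_PublicArrivalTimes}.

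For \emph{truthfulness}: I would invoke Theorem \ref{thm:truth} and show $\Alg_{TC}$ is monotone. The key observation is that the map from a reported type $\type_j$ to its collection of simulation intervals ${\cal C}_j$ is monotone in the sense that a dominating report (larger $v_j$, smaller $D_j$, later $a_j$, earlier $d_j$) yields phantom jobs that dominate (in the sense of Section \ref{sec:NonCommitted_Truthfulness}) — specifically: a smaller $D_j$ can only decrease $k_j$ and hence refine the interval structure, and a larger window $[a_j,d_j]$ can only add aligned intervals; crucially, increasing the window never \emph{removes} an aligned interval, because alignment conditions (1)–(3) are monotone under window enlargement. Thus every phantom job available under $\type_j$ remains available (with weakly dominating parameters — same or larger window, same or smaller demand, larger value) under the dominating report. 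Since $\Alg_{\cal T}$ is monotone (Claim \ref{thm:NonCommitted_Truthful_SingleServer_Truthfulness}) and job $j$ is admitted iff \emph{some} phantom completes, completion is preserved. The subtle point — and I expect this to be the main obstacle — is that removing later phantoms once one completes must not break monotonicity; but since a completed phantom only helps (it triggers admission), and the non-committed simulator's treatment of each phantom depends only on phantoms with disjoint, hence non-interfering, windows in a way that respects $\Alg_{\cal T}$'s monotonicity, this is manageable. I would spell out that reporting a later arrival cannot help: it can only delete simulation intervals (those now lying before $a_j$), never shift a simulation window favorably, which is exactly the defect we fixed relative to $\Alg_{\tilde T C}$.

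For the \emph{competitive ratio}: this is where I would extend the dual-fitting analysis of Theorem \ref{thm:NonCommitted_Truthful_SingleServer}. Observe first that $v(\Alg_{TC}(\type)) = v(\Alg_{\cal T}(\hat\type))$ where $\hat\type$ is the phantom instance, with the caveat that $\hat\type$ contains up to $|{\cal C}_j|$ phantoms per job but only one contributes value — so I need to be careful: I want to lower-bound $v(\Alg_{TC})$ and upper-bound $v(\OPT)$. The strategy is to take a near-optimal (integral) solution to the original instance $\type$, restrict each job $j$ completed by $\OPT$ to the sub-window $I \subseteq [a_j,d_j]$ of Proposition \ref{prop.aligned.large} — losing at most a factor $4$ in available time, absorbed into slackness — then further restrict to a single aligned interval of ${\cal C}_j$, losing another constant factor, to obtain a feasible solution to a scaled instance whose slackness is $\ge \sigma$ and demand is $2D_j$; the doubling costs another factor $2$. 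Combining these (and using that $\Alg_{\cal T}$ induces an integrality-gap bound, Theorem \ref{thm:DualFitting}) gives $v(\OPT(\type)) \le 8 \cdot \CompetitiveRatio_{\Alg}(\sigma) \cdot v(\Alg(\hat\type)) = 8 \cdot \CompetitiveRatio_{\Alg}(\sigma) \cdot v(\Alg_{TC}(\type))$. More cleanly, I would run the dual-fitting directly: construct a feasible dual for $\type$ by taking the optimal dual $(\alpha^*,\beta^*)$ for the phantom instance $\hat\type$ (which costs $\le \CompetitiveRatio_{\Alg}(\sigma)\,v(\Alg_{TC})$ since the phantom instance has slackness exactly governed by $2^{k_j}/(2D_j) \ge \sigma$), then apply a resizing step (factor $2$ for the doubled demand) and a "covering" step analogous to the Stretching Lemma \ref{thm:StretchingLemma} but now accounting for the fact that the simulation intervals cover only a $1/4$-fraction of $[a_j,d_j]$ (Proposition \ref{prop.aligned.large}) — this costs a factor governed by how much of $[a_j,d_j]$ lies outside $\bigcup {\cal C}_j$. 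Tallying: factor $2$ (resizing) times factor $4$ (window fraction) gives $8$, yielding $\CompetitiveRatio_{\Alg_{TC}}(s) \le 8 \cdot \CompetitiveRatio_{\Alg}(\sigma)$. The main obstacle is making the dual-covering step rigorous when a job has \emph{multiple} simulation intervals: one must argue that a single $\alpha_j$ (set to $\rho_j$ scaled appropriately) together with the $\beta$ mass from the simulator's dual suffices to cover the dual constraint \eqref{eq:DualConstraintSingleServer} at \emph{every} $t \in [a_j,d_j]$, including times in no simulation interval — and this is exactly where the factor-$4$ loss from Proposition \ref{prop.aligned.large} enters, via a stretching-type redistribution of $\beta$ across the window. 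I would present this carefully, as it is the one genuinely new ingredient beyond assembling earlier lemmas.
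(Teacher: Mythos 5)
Your proposal matches the paper's proof in structure and in all its essential ingredients: the same three-part split into responsiveness, truthfulness, and a dual-fitting competitive-ratio bound; the same invocations of Propositions~\ref{prop.aligned.nonempty} and~\ref{prop.aligned.large}; and the same factor accounting (factor $2$ from resizing the doubled phantom demand, factor $4$ from the window-coverage fraction, product $8$). You also correctly flag the one remaining obstacle in the dual-fitting step---combining the dual entries of a job's several phantoms into a single $\alpha_j$ that covers the constraint \eqref{eq:DualConstraintSingleServer} at \emph{every} $t\in[a_j,d_j]$---as ``the one genuinely new ingredient.'' The paper's resolution is short and worth internalizing: after resizing (so each phantom has demand $D_j$ and value density $\vd_j$) and stretching (so the disjoint phantom windows, which together cover a $\ge\frac14$ fraction of $[a_j,d_j]$ by Proposition~\ref{prop.aligned.large}, now tile all of $[a_j,d_j]$), set $\alpha_j = \max_i \alpha'^{(i)}_j$ over the resized-and-stretched phantom duals and keep $\beta=\beta'$. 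Feasibility at any $t\in[a_j,d_j]$ holds because $t$ falls into some stretched phantom window $i$, giving $\alpha_j + \beta(t) \ge \alpha'^{(i)}_j + \beta'(t) \ge \vd_j$; and the dual cost does not increase because $D_j \max_i \alpha'^{(i)}_j \le \sum_i D_j \alpha'^{(i)}_j$, the $\alpha$-cost already charged for the phantoms.

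One smaller point you gesture at but should pin down: for the relation $v(\Alg_T(\hat\type)) = v(\Alg_{TC}(\type))$ to hold with equality (which your dual-fitting chain needs, not merely $\ge$), the phantom instance $\hat\type$ must per job include only phantoms up to and including the first one $\Alg_T$ accepts; this is consistent with the mechanism's own removal of later phantoms once one completes, but it should be stated rather than left implicit. Your responsiveness and truthfulness arguments, though phrased with a bit of mid-stream self-correction and some direction confusion around ``dominates,'' do land on the same claims the paper uses: the commitment time $\max_i b^{(i)}_j \le d_j - 2\sigma D_j$ by alignment condition (2), and the containment of each dominated type's simulation intervals in those of a dominating type, combined with the monotonicity of $\Alg_T$ and the observation that the simulator's treatment of each phantom of job $j$ is unaffected by $j$'s other (disjoint, lower-or-equal priority) phantoms.
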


We prove each property of the theorem in turn.  To establish responsiveness, note that the scheduler will always commit to executing a job $j$ by the end of the last interval in ${\cal C}_j$.  Since each aligned interval has length at most $2^{k_j}$, and since an aligned interval of length $\ell$ must end before time $d_j - \ell$ (condition 2 in the definition of aligned intervals), this endpoint occurs at least $2^{k_j} \geq 2 \sigma D_j$ time units before $d_j$.  This implies that the scheduler is $2 \sigma$-responsive.

We next bound the competitive ratio of the modified scheduler.
\begin{claim}
The competitive ratio of the scheduler $\Alg_{TC}$ described above is at most
\[ \emph{\CompetitiveRatio}_{\Alg_{TC}}(s) \le 8 \cdot \emph{\CompetitiveRatio}_{\Alg_{T}}\left(\sigma \right). \]
\end{claim}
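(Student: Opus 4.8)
**The plan is to bound the competitive ratio by the dual-fitting method, exactly as in Section~\ref{sec:NonCommitted_Truthfulness}, but applied to the phantom-job instance, and then transfer the dual solution back to the original instance via the resizing and stretching lemmas together with a new ``selection'' step that accounts for the fact that each job contributes many phantom jobs.**

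First I would fix an instance $\type$ with slackness $s \ge 12\sigma$ and let $\boldsymbol{\phi} = \{\type^{(i)}_j\}$ be the phantom instance actually fed to $\Alg_{\cal T}$. Each phantom $\type^{(i)}_j = \langle v_j, 2D_j, a^{(i)}_j, b^{(i)}_j\rangle$ has an execution window $[a^{(i)}_j, b^{(i)}_j] \in {\cal C}_j$ of length at least $2^{k_j} \ge 2\sigma D_j$, so its slackness is at least $2\sigma D_j / (2D_j) = \sigma$; hence the phantom instance has slackness $s(\boldsymbol{\phi}) \ge \sigma$, and the competitive-ratio bound for $\Alg_{\cal T}$ applies to it. Recall also that $v(\Alg_{TC}(\type))$, the value the committed scheduler collects, equals $v(\Alg_{\cal T}(\boldsymbol{\phi}))$ restricted to counting each admitted job once (a job is admitted as soon as one of its phantoms completes, and later phantoms are then withdrawn). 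Since each job $j$ appears as at most one \emph{completed} phantom in the run of $\Alg_{\cal T}$ on $\boldsymbol{\phi}$, we have $v(\Alg_{\cal T}(\boldsymbol{\phi})) = v(\Alg_{TC}(\type))$ exactly.

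Next I would invoke the dual-fitting analysis of $\Alg_{\cal T}$ (Theorem~\ref{thm:DualFitting} together with the argument behind Theorem~\ref{thm:NonCommitted_Truthful_SingleServer}) to produce a feasible dual solution $(\alpha^\phi, \beta^\phi)$ for the dual program of the phantom instance $\boldsymbol{\phi}$ whose cost is at most $\CompetitiveRatio_{\Alg_{\cal T}}(\sigma)\cdot v(\Alg_{\cal T}(\boldsymbol{\phi})) = \CompetitiveRatio_{\Alg_{\cal T}}(\sigma)\cdot v(\Alg_{TC}(\type))$. Now I need to convert this into a feasible dual solution for the \emph{original} instance $\type$. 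For each job $j$, pick one distinguished phantom index $i(j)$ — say the one whose execution window $[a^{(i(j))}_j, b^{(i(j))}_j]$ is the longest in ${\cal C}_j$; set $\alpha_j := \alpha^\phi_{j,i(j)}$, keep $\beta := \beta^\phi$ unchanged, and discard the dual variables of all other phantoms. The dual constraint \eqref{eq:DualConstraintSingleServer} for job $j$ must then be checked at every $t\in[a_j,d_j]$; on the window of the chosen phantom it follows from feasibility of $(\alpha^\phi,\beta^\phi)$ after correcting for the doubled demand (resizing with $f=2$, which halves the value density that must be covered and is exactly the factor the phantom's value density $\rho_j/2$ already reflects), and outside that window I extend coverage by a stretching step: by Proposition~\ref{prop.aligned.large} the chosen phantom window has length at least $\frac14(d_j-a_j)$, so $[a_j,d_j]$ is obtained from $[a^{(i(j))}_j,b^{(i(j))}_j]$ by stretching by a factor $f$ with $1+f \le 4$ on each side — more carefully, one stretches the phantom window to the left and to the right, each stretch multiplying the $\beta$-integral contribution by a bounded factor, with the total blow-up at most $4$. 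Combining: the resizing contributes a factor $2$ (the $\beta$-part scaling of Lemma~\ref{thm:ResizingLemma} with $f=2$ gives a factor $1/f = 1/2$ on $\beta$, but here the relevant direction is the opposite — we go from demand $2D_j$ to demand $D_j$, i.e.\ $f=1/2$, so the $\beta$-cost is multiplied by $2$), and the stretching contributes a further factor at most $4$, giving overall dual cost at most $8\cdot\CompetitiveRatio_{\Alg_{\cal T}}(\sigma)\cdot v(\Alg_{TC}(\type))$. Applying Theorem~\ref{thm:DualFitting} then yields $\CompetitiveRatio_{\Alg_{TC}}(s) \le 8\cdot\CompetitiveRatio_{\Alg_{\cal T}}(\sigma)$, as claimed.

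**The main obstacle** I expect is the transfer step, and specifically two subtleties in it. The first is that the $\beta$ variable is \emph{shared} across all jobs and all phantoms, so when I ``keep $\beta$ unchanged'' I must be certain that the stretching transformations applied on behalf of different jobs are mutually consistent and that their costs do not double-count — this is handled because stretching in the sense of Lemma~\ref{thm:StretchingLemma} rescales the \emph{entire} time axis' $\beta$-integral by a single global factor, and I only need the worst such factor (at most $4$) across all jobs, so one global stretching with $f=3$ suffices. The second subtlety is ensuring the discarded phantoms' constraints need not be satisfied at all: the dual program of the \emph{original} instance has no constraints indexed by phantom jobs, so dropping their variables is harmless, and the only thing to verify is that the original job-$j$ constraints on $[a_j,d_j]$ are all covered, which the chosen-phantom-plus-stretch argument does. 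One must also double-check that the constant $8$ is tight enough for the downstream statement (it gives $c_0 = 9$, $s_0 = 12$ in the single-server case after optimizing $\sigma$), but no improvement is needed here. Everything else — feasibility of the phantom slackness, the exactness of $v(\Alg_{\cal T}(\boldsymbol{\phi})) = v(\Alg_{TC}(\type))$, and the routine bookkeeping of the resizing/stretching factors — is mechanical given the lemmas already established.
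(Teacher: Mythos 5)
Your overall architecture (simulate on the phantom instance, pull back a dual solution via resizing and stretching, apply Theorem \ref{thm:DualFitting}) matches the paper, and your preliminary steps — slackness $\geq \sigma$ of the phantoms, the identity $v(\Alg_{T}(\virtual{\type})) = v(\Alg_{TC}(\type))$, and the factor $2$ from resizing — are all correct. The gap is in your transfer step. You keep only one distinguished phantom per job (the one with the longest window) and justify a stretch factor of $4$ by citing Proposition \ref{prop.aligned.large}; but that proposition lower-bounds the length of the \emph{union} $I$ of all intervals in ${\cal C}_j$ by $\frac{1}{4}(d_j - a_j)$, not the length of any single interval. A single maximal aligned interval can be a strictly smaller fraction of the execution window: for instance, with $k_j = 0$ and $[a_j, d_j] \approx [\epsilon,\, 6 - \epsilon]$ there is no aligned interval of length $2$ or more (the successor block required by the alignment condition never fits), so the longest simulation interval has length $1$ while $d_j - a_j \approx 6$, and the union $[1,4]$ still satisfies the proposition. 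Consequently, stretching one phantom's window to cover all of $[a_j, d_j]$ requires a factor strictly larger than $4$ in general, and your argument only yields a constant worse than the claimed $8 \cdot \CompetitiveRatio_{\Alg_T}(\sigma)$.

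The paper avoids this by not discarding phantoms: it stretches \emph{all} phantoms of job $j$ jointly so that their windows stay disjoint and contiguous and together cover $[a_j, d_j]$ — this is exactly where the $\frac{1}{4}$ bound on the union is used, giving a global factor $4$ on the $\beta$ term via Lemma \ref{thm:StretchingLemma} — and then sets $\alpha_j$ to the \emph{maximum} of the $\alpha$-entries over the phantoms of $j$. Feasibility of constraint \eqref{eq:DualConstraintSingleServer} then holds at every $t \in [a_j, d_j]$ because each such $t$ lies in some phantom's stretched window, and the $\alpha$-cost does not grow since $D_j \max_i \alpha'_{j,i} \leq \sum_i D_j \alpha'_{j,i}$. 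Your proposal can be repaired either by adopting this max-over-phantoms step, or by proving a constant lower bound on the length of the single longest interval in ${\cal C}_j$ — but the latter constant is worse than $\frac{1}{4}$, so it cannot give the stated factor $8$ (and hence the downstream constants $c_0 = 9$, $s_0 = 12$). Your other worry, about the shared $\beta$ and consistency of stretching across jobs, is indeed harmless for the reason you give.
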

\begin{proof}
Consider an input instance $\type$ with slackness $s$.
Let $\virtual{\type}$ denote the following ``phantom'' input instance: for each job $j$ in $\type$ we include all phantom jobs up to and including the first phantom accepted by $\Alg$, but not those that follow.  Note then that running $\Alg_{T}$ on inputs $\virtual{\type}$ generates the same total value as running $\Alg_{TC}$ on input instance $\type$.  Also note that the slackness of the phantom input instance is at least $\sigma$.

We prove the claim by constructing a feasible dual solution $(\alpha,\beta)$ satisfying \eqref{eq:DualConstraintSingleServer} and bounding its total cost.
Let $(\alpha^*,\beta^*)$ denote the optimal fractional solution of the dual program corresponding to $\virtual{\type}$.
We assume $\Alg$ induces an upper bound on the integrality gap for slackness $\sigma$. Therefore, the dual cost of $(\alpha^*,\beta^*)$ is at most $\CompetitiveRatio_{\Alg}(\sigma) \cdot v(\Alg_T(\virtual{\type})) = \CompetitiveRatio_{\Alg}(\sigma) \cdot v(\Alg_{TC}(\type))$.

The claim follows by applying the resizing lemma and the stretching lemma to $(\alpha^*,\beta^*)$.
First, we apply the resizing lemma for $f=2$, as each phantom corresponding to job $j$ has demand at most $2 D_j$. This increases the dual cost by a multiplicative factor of $2$.
Second, we apply the stretching lemma to all of the phantom jobs corresponding to job $j$, so that their execution windows remain disjoint and contiguous, their last deadline becomes $d_j$, and their earliest arrival time becomes $a_j$.  By Proposition \ref{prop.aligned.large}, this involves invoking the stretching lemma with $f = 4$.
Denote by $(\alpha', \beta')$ the resulting resized and stretched dual solution.
Finally, for each job $j$ we take $\alpha_j$ to be the maximum of the entries of $\alpha'$ corresponding to phantoms of $j$, and we take $\beta = \beta'$.

After applying both lemmas, we obtain a feasible dual solution that satisfies the dual constraints \eqref{eq:DualConstraintSingleServer}. The dual cost of the solution is at most:
\begin{equation*}
8 \cdot \emph{\CompetitiveRatio}_{\Alg_T}\left(\sigma\right) \cdot v(\Alg_{CT}(\type))
\end{equation*}
and therefore by applying the dual fitting theorem (Theorem \ref{thm:DualFitting}) we obtain our desired result.
\end{proof}

Finally, we argue that the resulting mechanism is truthful.

\begin{claim}
Scheduler $\Alg_{TC}$ is truthful, with respect to job parameters $\langle v_j, D_j, a_j, d_j \rangle$.
\end{claim}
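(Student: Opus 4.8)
The plan is to invoke the monotonicity characterization of Theorem~\ref{thm:truth}: it suffices to show that the allocation rule of $\Alg_{TC}$ is monotone, i.e., that if job $j$ is admitted under declared type $\type_j = \langle v_j, D_j, a_j, d_j\rangle$, then it is still admitted under any type $\type_j'$ that dominates $\type_j$ (higher value, smaller demand, later arrival, earlier deadline), holding the other jobs fixed. The key structural fact to establish first is a \emph{monotonicity of simulation intervals}: if $\type_j' = \langle v_j', D_j', a_j', d_j'\rangle$ dominates $\type_j$, then every interval $[a,b] \in {\cal C}_{j}'$ (the simulation intervals for the dominating report) is contained in some interval of ${\cal C}_j$. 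This follows from the definitions: a smaller demand can only decrease $k_{j}$ (since $2^{k_j}$ is the least power of $2$ at least $2\sigma D_j$), so the granularity of allowed aligned intervals for $\type_j'$ is at least as fine; and a smaller window $[a_j', d_j'] \subseteq [a_j, d_j]$ can only shrink the set of aligned intervals. Hence every aligned interval for $\type_j'$ is an aligned interval for $\type_j$, and therefore contained in a maximal one in ${\cal C}_j$. In particular, for the phantom job living on $[a,b] \in {\cal C}_j'$, there is a phantom job of $\type_j$ living on a superinterval $[a',b'] \supseteq [a,b]$.

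Next I would lift this to the level of the non-committed simulator $\Alg_T$. Fix all other jobs; since $\Alg_T$ is truthful, it is monotone (Claim~\ref{thm:NonCommitted_Truthful_SingleServer_Truthfulness} / Theorem~\ref{thm:NonCommitted_Truthful_MultipleServers_Truthfulness}). Now suppose some phantom $\type^{(i)}_j = \langle v_j, 2D_j, a, b\rangle$ of the original report completes in $\Alg_T$. Consider the phantom $\langle v_j', 2D_j', a^\dagger, b^\dagger\rangle$ of the dominating report whose window $[a^\dagger, b^\dagger]$ is contained in $[a,b]$. Since $v_j' \ge v_j$ and $2D_j' \le 2D_j$, the dominating phantom would dominate $\type_j^{(i)}$ \emph{except} that its window is a subinterval, not a superinterval --- this is the wrong direction for the domination order. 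So the naive argument fails, and here is where care is needed: I must argue that running a phantom on a sub-window that is itself an aligned block of the original still completes. The point is that $\Alg_T$'s behavior on a phantom restricted to an aligned sub-block is governed only by the other jobs present during that block and by the phantom's class; the relevant claim is that if the full-window phantom completes then it is, in particular, processed for $2D_j$ time units, and by the structure of $\Alg_T$ (jobs of strictly higher class are processed independently of lower ones; a job completes iff it can start and then accumulate enough non-higher-class time) one can localize: there must exist an aligned sub-block of $[a,b]$ of length $2^{k_j}$ during which the phantom is not blocked enough to fail. More carefully, I would argue directly via the completion characterization sketched after Claim~\ref{thm:NonCommitted_Truthful_SingleServer_Claim4}, showing the two completion conditions (existence of a free start time, and enough subsequent non-higher-class time) are inherited by the appropriate sub-block, and then apply monotonicity of $\Alg_T$ with respect to value and demand \emph{within that fixed sub-block window}.

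The main obstacle, then, is precisely this localization step: standard monotonicity of $\Alg_T$ only handles changes of $v_j, D_j, a_j, d_j$ in the dominating direction, whereas shrinking the simulation window to an aligned sub-block moves against the order. I expect the cleanest route is to prove a lemma of the form: \emph{if a phantom with window $[a,b]$ and demand $2D$ completes in $\Alg_T$ against a fixed background, and $[a',b'] \subseteq [a,b]$ is aligned with $b' - a' \ge 2\sigma D$, then the phantom with window $[a',b']$ and the same value and demand also completes} --- possibly after first passing to a further sub-block. Given that lemma, truthfulness assembles immediately: a dominating report of $j$ produces, for some simulation sub-interval, a phantom that (i) lives in a window where the original's phantom completed, hence completes there, and then (ii) has higher value and smaller demand, so completes by monotonicity of $\Alg_T$; thus $j$ is admitted, and by Theorem~\ref{thm:truth} a payment rule exists making the mechanism truthful. (The arrival-time and deadline coordinates are handled entirely through the window-containment fact, which is why this construction fixes the arrival-time manipulation of Section~\ref{sec:Committed_Truthful_PublicArrivalTimes}.) The multi-server case is identical, using $\Alg_T$'s multi-server monotonicity and the global-EDF execution phase.
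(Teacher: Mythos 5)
There is a genuine gap, and it stems from the orientation in which you set up monotonicity. Under the paper's misreport restrictions (only later arrivals, earlier deadlines, and inflated demands are possible), the property that yields truthfulness --- and the one the paper actually proves, both here and in the monotonicity proof of $\Alg_T$ in Appendix \ref{sec:Appendix_NonCommitted_Truthful_SingleServer} --- is: if job $j$ is admitted under a report with a \emph{smaller} window, \emph{larger} demand and smaller value, then it is admitted under the type with earlier arrival, later deadline, smaller demand and higher value. (The inequalities for $a_j,d_j$ in the Section \ref{sec:Preliminaries_Trutfulness} definition read the other way, but the usage throughout the proofs is as just stated; note that your orientation, even if established, would not exclude the manipulation of Section \ref{sec:Committed_Truthful_PublicArrivalTimes}, where a job rejected at its true arrival becomes accepted by delaying it --- which is exactly what full truthfulness must preclude.) With the correct orientation your obstacle disappears: the misreport's simulation intervals nest inside intervals of the truthful type's collection (demand only \emph{increases}, so $k_j$ only grows and alignment is only lost; in your orientation, where the demand decreases and $k_j$ can drop, your assertion that every aligned interval for the dominating type is aligned for the original is already false), and for $I' \subseteq I$ the phantom $\langle v'_j, 2D'_j, a', b'\rangle$ is dominated by the phantom $\langle v_j, 2D_j, a, b\rangle$ in \emph{every} coordinate, window included. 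Combined with the observation that a given phantom's completion depends only on the schedule of higher-priority jobs, which is unaffected by $j$'s phantoms, the monotonicity of $\Alg_T$ applies phantom by phantom, and no localization step is needed.

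In your orientation the window coordinate moves against the order, and the localization lemma you propose to bridge this is not only unproven but false in general. Take $[a,b]=[0,16]$, phantom demand $2D_j=2$, and a background consisting of a single strictly higher-class job occupying exactly $[0,8]$: the phantom with window $[0,16]$ starts at time $8$ and completes, while the phantom with window $[a',b']=[0,8]$ (or any aligned sub-block of it) can never start, since a higher-class job runs throughout; so completion on a window does not transfer to aligned sub-windows, even of length at least $2\sigma D_j$. Hence the plan as written cannot be completed. The fix is not a stronger lemma but reversing the roles: compare the phantoms of the smaller-window/larger-demand report against the phantoms of the wider-window/smaller-demand type on the \emph{containing} intervals, which is precisely the paper's argument.
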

\begin{proof}
Consider a job $j$ and fix the reports of other jobs.  Consider two types for job $j$, say $\type_j$ and $\type'_j$, with $\type_j$ dominating $\type'_j$.
Let ${\cal C}_j$ and ${\cal C}'_j$ denote the sets of simulation intervals under reports $\type_j$ and $\type'_j$, respectively.  We claim that for every interval $I' \in {\cal C}'_j$ there exists some $I \in {\cal C}_j$ such that $I' \subseteq I$.

Before proving the claim, let us show how it implies $\Alg_{TC}$ is truthful.  Recall from the definition of $\Alg_T$ that a job is successfully scheduled in the simulator if the set of times in which a higher-priority job is being run satisfies a certain downward-closed condition.  Moreover, the times in which higher-priority jobs are run is independent of the reported properties of lower-priority jobs, including all phantoms of job $j$.  Thus, a job $j$ is accepted if and only if there is some $I \in {\cal C}_j$ for which the corresponding phantom would complete in the simulator, and this is independent of the other intervals in ${\cal C}_j$.   (Note that this independence is the only point in the argument where we use the specific properties of algorithm $\Alg_T$, beyond truthfulness.)  But now reporting $\type'_j$ dominated by $\type_j$ can only result in smaller simulation intervals (in the sense of set inclusion), which can only result in lower acceptance chance for any given simulation interval by the truthfulness of $\Alg_T$.  Thus, if job $j$ is not accepted under type $\type_j$, it would also not be accepted under type $\type'_j$.

It remains to prove the claim about ${\cal C}'_j$ and ${\cal C}_j$.
%
It suffices to consider changes to each parameter of job $j$ separately.
Changing the value $v_j$ has no impact on the simulation intervals.  Increasing the demand $D_j$ can only raise $k_j$, which can only serve to exclude some intervals from being aligned.  
Likewise, increasing $a_j$ or decreasing $d_j$ can also only exclude some intervals from being aligned.  But if the set of aligned intervals is reduced, and some interval $[a,b]$ lies in ${\cal C}'_t$ but not in ${\cal C}_t$, then it must be that $[a,b]$ is an aligned interval under reports $\type_j$ and $\type'_j$, but is not maximal under report $\type_j$.  In other words, there must be some $[a',b'] \in {\cal C}_t$ such that $[a,b] \subseteq [a',b']$, as required.
%
%
%
\end{proof}

\paragraph{Extending to Multiple Servers}

We can extend our construction to multiple identical servers in precisely the same manner as in Theorem \ref{thm:Committed_MultipleServer_Migration_CompetitiveRatio}.  Specifically, when generating phantom jobs, we increase their demand by an additional factor of $11.656$.  As in Theorem \ref{thm:Committed_MultipleServer_Migration_CompetitiveRatio}, this allows us to argue that the simulated migratory schedule implies the existence of a non-migratory schedule of shorter phantom jobs, which in turn implies that passing accepted jobs to a global EDF scheduler results in a feasible schedule.  We obtain the following result.

\begin{theorem}
\label{thm.truthful.reduction.multiple}
Choose $\sigma > 1$ and suppose $s \geq (12 \cdot 11.656) \cdot \sigma$.  Then the scheduler $\Alg_{TC}$ described above is $2 \sigma$-responsive, truthful, and has competitive ratio bounded by
\[ \emph{\CompetitiveRatio}_{\Alg_{TC}}(s) \le (8 \cdot 11.656) \cdot \emph{\CompetitiveRatio}_{\Alg}\left(\sigma \right). \]
\end{theorem}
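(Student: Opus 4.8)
The plan is to run the single-server template of Theorem~\ref{thm.truthful.reduction} essentially unchanged, substituting the multi-server ingredients of Theorem~\ref{thm:Committed_MultipleServer_Migration_CompetitiveRatio} for the single-server ones: inflate every phantom's demand by the additional factor $11.656 = 2(3+2\sqrt2)$, use the migratory truthful non-committed scheduler $\Alg_T$ of Section~\ref{sec:NonCommitted_Truthfulness} as the simulator, and route accepted jobs to a single global EDF scheduler over the $C$ real servers. In detail: for a job $j = \langle v_j, D_j, a_j, d_j\rangle$ let $k_j$ be the least integer with $2^{k_j} \ge 2\cdot 11.656 \cdot \sigma D_j$, form the collection ${\cal C}_j$ of maximal aligned intervals exactly as in Section~\ref{sec:Committed_Truthful_Full}, and for each $[a^{(i)}_j, b^{(i)}_j] \in {\cal C}_j$ create a phantom $\type^{(i)}_j = \langle v_j,\, 2\cdot 11.656\cdot D_j,\, a^{(i)}_j,\, b^{(i)}_j \rangle$. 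Feed the phantom stream to $\Alg_T$; as soon as one phantom of $j$ completes, delete all later phantoms of $j$ (legal online since $j$'s phantom windows are disjoint), admit $j$, and let the global EDF scheduler run it, with true demand $D_j$, inside the window $[b^{(i)}_j,\, b^{(i)}_j + (b^{(i)}_j - a^{(i)}_j)]$ immediately following the successful phantom. Since every interval of ${\cal C}_j$ has length at least $2^{k_j} \ge 2\cdot 11.656\cdot \sigma D_j$ and the phantom demand is $2\cdot 11.656\cdot D_j$, each phantom has slackness at least $\sigma$, so the simulator runs on an instance of slackness at least $\sigma$.

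The three ``soft'' properties then carry over from the single-server case. The proof of Proposition~\ref{prop.aligned.nonempty} applies verbatim with the rescaled $k_j$ and yields ${\cal C}_j \neq \emptyset$ whenever $s \ge 12\cdot 11.656\cdot \sigma$ (an empty ${\cal C}_j$ forces $d_j - a_j < 3\cdot 2^{k_j} < 12\cdot 11.656\cdot \sigma D_j$), and Proposition~\ref{prop.aligned.large} still gives that ${\cal C}_j$ partitions an interval $I \subseteq [a_j, d_j]$ with $|I| \ge \tfrac14(d_j - a_j)$. Responsiveness follows from alignment condition~2 together with the lower bound on interval lengths, exactly as in Theorem~\ref{thm.truthful.reduction}: the commitment for $j$ is made by the right endpoint of the last interval of ${\cal C}_j$, which lies at least one interval length before $d_j$. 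Truthfulness is argued as in the single-server claim: the map from a reported type to the collection of simulation intervals is monotone (raising $D_j$ can only increase $k_j$; shrinking $[a_j, d_j]$ can only delete a maximal aligned interval or replace it by a superset), the inflated demand $2\cdot 11.656\cdot D_j$ is a monotone function of $D_j$, and whether $j$ is admitted depends only on whether some phantom of $j$ completes in the simulator --- which, by the key structural property of $\Alg_T$ that the processing of higher-priority jobs is independent of all lower-priority jobs (in particular of every phantom of $j$), is a downward-closed event over ${\cal C}_j$. Hence declaring a dominated type can only remove acceptances, the allocation rule is monotone, and Theorem~\ref{thm:truth} supplies truthful payments; as in Section~\ref{sec:Committed_Truthful_PublicArrivalTimes}, the critical-value payment can be computed at commitment time because that time does not depend on $v_j$.

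The main obstacle is \emph{correctness}: showing that every admitted job finishes by its deadline under the single global EDF scheduler. This is where the factor $11.656$ is spent, via the four-step chain of Theorem~\ref{thm:Committed_MultipleServer_Migration_CompetitiveRatio} applied to the set ${\cal S}$ of admitted jobs. (a)~Restricting the simulator's run to the accepted phantoms yields a feasible \emph{migratory} schedule of ${\cal S}$ with demands $2\cdot 11.656\cdot D_j$ inside the windows $[a^{(i)}_j, b^{(i)}_j]$ on $C$ servers. (b)~By~\cite{CLT05} this can be converted to a feasible \emph{non-migratory} schedule of ${\cal S}$ with demands divided by $3+2\sqrt2$, i.e.\ demands $2\cdot 2\cdot D_j$, on $C$ servers. (c)~Applying Theorem~\ref{thm:Committed_SingleServer_Correctness} to each server separately --- here alignment condition~2, $[b^{(i)}_j,\, b^{(i)}_j + (b^{(i)}_j - a^{(i)}_j)] \subseteq [a_j, d_j]$, is exactly the containment that theorem requires --- produces a feasible non-migratory schedule of ${\cal S}$ with demands $2D_j$ inside the post-phantom windows on $C$ servers. (d)~Since that instance, having demands $2D_j$, is migratory-feasible on $C$ servers, \cite{PSTW97} guarantees that global EDF completes the admitted jobs with their true demands $D_j$ by their deadlines. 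The subtlety to be careful about --- as already noted for Theorem~\ref{thm:Committed_MultipleServer_Migration_CompetitiveRatio} --- is that steps (b) and (c) are mere existence statements and cannot be executed online; the point is that step~(d), which \emph{is} online, is fed precisely the admitted jobs with their true demands, and its feasibility is certified by (b)--(c) without the algorithm ever needing to construct those schedules.

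Finally, the competitive-ratio bound follows by the dual-fitting argument of the single-server claim with the resize factor changed. Let $\virtual{\type}$ be the ``phantom instance'' consisting, for each job $j$, of all phantoms of $j$ up to and including its first accepted one; then $v(\Alg_T(\virtual{\type})) = v(\Alg_{TC}(\type))$ and $\virtual{\type}$ has slackness at least $\sigma$, so the optimal dual solution for $\virtual{\type}$ has cost at most $\CompetitiveRatio_{\Alg_T}(\sigma)\cdot v(\Alg_{TC}(\type))$. Apply the Resizing Lemma~\ref{thm:ResizingLemma} with $f = 2\cdot 11.656$ to undo the demand inflation, then the Stretching Lemma~\ref{thm:StretchingLemma} to the phantoms of each job so that their windows jointly cover $[a_j, d_j]$ --- which by Proposition~\ref{prop.aligned.large} costs a further multiplicative factor of at most $4$ --- and then take $\alpha_j$ to be the maximum of the resized-and-stretched dual values over the phantoms of $j$, with $\beta$ unchanged. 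The result is feasible for the single-server dual constraints~\eqref{eq:DualConstraintSingleServer} with cost at most $(8\cdot 11.656)\cdot \CompetitiveRatio_{\Alg_T}(\sigma)\cdot v(\Alg_{TC}(\type))$, and Theorem~\ref{thm:DualFitting} finishes the proof. The only genuinely new bookkeeping relative to Theorem~\ref{thm.truthful.reduction} is tracking the factor $11.656$ through the correctness chain and the resizing step; everything else is a direct transcription of the single-server argument.
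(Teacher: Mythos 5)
Your proof is correct and follows exactly the approach the paper sketches: rescale $k_j$ so that $2^{k_j} \geq 2\cdot 11.656\,\sigma D_j$ and inflate phantom demands to $2\cdot 11.656\, D_j$ (preserving phantom slackness $\sigma$), admit via the truthful migratory simulator, and certify feasibility of global EDF through the four-step chain from Theorem~\ref{thm:Committed_MultipleServer_Migration_CompetitiveRatio} (simulator schedule $\to$ \cite{CLT05} non-migratory conversion $\to$ per-server application of Theorem~\ref{thm:Committed_SingleServer_Correctness} $\to$ \cite{PSTW97} EDF guarantee). The paper itself gives only a one-paragraph sketch for this theorem, so your write-up is more detailed than the original but matches its intent precisely, including the $8\cdot 11.656$ factor obtained by combining the Resizing Lemma with factor $2\cdot 11.656$ and the Stretching Lemma with factor $4$ via Proposition~\ref{prop.aligned.large}.
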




\section{Obtaining Theorem Statements from Section 1.1}
\label{sec:Appendix_Theorem_Statements}

The body of the paper describes the general results we obtain for truthful committed scheduling. In this appendix, we state the specific results we obtain by invoking these reductions on specific schedulers.  Specifically, the non-truthful scheduler \cite{LMNY13} and the truthful scheduler developed in Section \ref{sec:NonCommitted_Truthfulness}.  In each case, constant bounds can be obtained by plugging the algorithms directly. However, we can improve the constants by via a more careful analysis, using the dual-fitting analysis from the original algorithms.

For the non-truthful scheduling algorithm $\Alg$ from \cite{LMNY13}, the competitive ratio is bounded by explicitly constructing a feasible dual solution $(\alpha,\beta)$ and bounding its dual cost. The following bounds were obtained:
\begin{eqnarray}
\sum_j D_j \alpha_j
  & \,\, = \,\, &
v(\Alg(\type)) \cdot \left[1 + \Theta\left(\frac{1}{\sqrt[3]{s}-1}\right)\right]
\\
\sum_{i=1}^C \intop_0^{\infty} \beta_i(t)dt
  & \,\, = \,\, &
v(\Alg(\type)) \cdot \left[ 1 + \Theta\left(\frac{1}{\sqrt[3]{s}-1}\right) + \Theta\left(\frac{1}{(\sqrt[3]{s}-1)^2}\right) \right]
\end{eqnarray}
The same bounds are obtained for the truthful algorithm $\Alg_T$ (from Section \ref{sec:NonCommitted_Truthfulness}), however the power in the last asymptotic bound is $3$ instead of $2$.

When analyzing the competitive ratio of reductions for committed scheduling (Theorem \ref{thm:Committed_SingleServer_CompetitiveRatio}, Corollary \ref{thm:Committed_MultipleServer_NonMigration_CompetitiveRatio}, Theorem \ref{thm:Committed_MultipleServer_Migration_CompetitiveRatio}, and Theorem \ref{thm:Committed_FullTruthfulness}), in each case we apply the resizing lemma and the stretching lemma on the dual solution $(\alpha,\beta)$. However, the constant blowup from application of these lemmas only affects the $\beta$ term.  Accounting for this leads to improved constants in the resulting competitive ratios.

For example, applying Corollary \ref{thm:Committed_MultipleServer_NonMigration_CompetitiveRatio} to the algorithm $\Alg$ from \cite{LMNY13} and setting $\omega = 1/2$, one obtains a factor $4$ blowup.  Applying this blowup only to the $\beta$ term in the dual-fitting analysis of $\Alg$, one obtains a final competitive ratio of
\begin{align*}
& \left[1 + \Theta\left(\frac{1}{\sqrt[3]{s}-1}\right)\right] + 4 \left[ 1 + \Theta\left(\frac{1}{\sqrt[3]{s}-1}\right) + \Theta\left(\frac{1}{(\sqrt[3]{s}-1)^2}\right) \right] \\
=\ & 5 + \Theta\left(\frac{1}{\sqrt[3]{s}-1}\right) + \Theta\left(\frac{1}{(\sqrt[3]{s}-1)^2}\right)
\end{align*}
yielding the result described in Section~\ref{sec:Introduction_OurResults} as Theorem~\ref{thm.alg.responsive}.

As another example, applying Theorem \ref{thm.truthful.reduction} to the algorithm $\Alg_T$ from Section \ref{sec:NonCommitted_Truthfulness} yields a factor $8$ blowup.  Applying this blowup only to the $\beta$ term in the dual-fitting analysis, one obtains a competitive ratio of
\begin{align*}
& \left[1 + \Theta\left(\frac{1}{\sqrt[3]{s}-1}\right)\right] + 8 \left[ 1 + \Theta\left(\frac{1}{\sqrt[3]{s}-1}\right) + \Theta\left(\frac{1}{(\sqrt[3]{s}-1)^3}\right) \right] \\
=\ & 9 + \Theta\left(\frac{1}{\sqrt[3]{s}-1}\right) + \Theta\left(\frac{1}{(\sqrt[3]{s}-1)^3}\right).
\end{align*}
The extension to multiple servers follows the same approach as Theorem \ref{thm:Committed_MultipleServer_Migration_CompetitiveRatio}, requiring an additional application of the resizing lemma with a factor $f \approx 11.656$.  Again applying this only to the $\beta$ term, this increases the constant portion of the competitive ratio to $1 + 8 \cdot 11.656 \approx 94.248$, and increases the slackness requirement by an additional factor of $11.656$.  This
yields the result described in Section~\ref{sec:Introduction_OurResults} as Theorem~\ref{thm.general}, and restated as Theorem \ref{thm:Committed_FullTruthfulness}.





\end{document}